\newcommand{\cofission}[1]{{\cofissionop({#1})}}
\newcommand{\cofissionop}{{\tilde{\fis}}}
\newcommand{\cofusion}[1]{{\cofusionop({#1})}}
\newcommand{\cofusionop}{{\tilde{\fus}}}
\newcommand{\converse}{\smallsmile}
\newcommand{\convex}[1]{{#1}{\updownarrow}}
\newcommand{\dom}{\mathit{dom}}
\newcommand{\down}[1]{{#1}{\downarrow}}
\newcommand{\dual}[1]{{#1}^\mathsf{d}}
\newcommand{\eqem}{\mathrel{=_\updownarrow}}
\newcommand{\eqh}{\mathrel{=_\downarrow}}
\newcommand{\eqs}{\mathrel{=_\uparrow}}
\newcommand{\fis}{\delta_i}
\newcommand{\fus}{\delta_o}
\newcommand{\icpl}[1]{{\sim}{#1}}
\newcommand{\id}{\mathit{id}}
\newcommand{\Id}{\mathit{Id}}
\newcommand{\ii}{\Cap}
\newcommand{\iiatoms}{{\mathsf{A}_\ii}}
\newcommand{\iione}{{1_\ii}}
\newcommand{\iu}{\Cup}
\newcommand{\iU}{\raisebox{-0.5ex}{\Large$\iu$}}
\newcommand{\iuatoms}{{\mathsf{A}_\iu}}
\newcommand{\iuone}{{1_\iu}}
\newcommand{\kleisli}{\Pow}
\newcommand{\Mult}{M}
\newcommand{\Pow}{\mathcal{P}}
\newcommand{\Rel}{\mathbf{Rel}}
\newcommand{\rto}{\leftrightarrow}
\newcommand{\seq}{\ast}
\newcommand{\seqint}{\odot}
\newcommand{\Set}{\mathbf{Set}}
\newcommand{\subem}{\mathrel{\sqsubseteq_\updownarrow}}
\newcommand{\subh}{\mathrel{\sqsubseteq_\downarrow}}
\newcommand{\subs}{\mathrel{\sqsubseteq_\uparrow}}
\newcommand{\syq}[2]{{#1} \div {#2}}
\newcommand{\up}[1]{{#1}{\uparrow}}
\newtheorem{theorem}{Theorem}[section]
\newtheorem{proposition}[theorem]{Proposition}
\newtheorem{lemma}[theorem]{Lemma}
\newtheorem{corollary}[theorem]{Corollary}
\theoremstyle{definition}
\newtheorem{example}[theorem]{Example}
\newtheorem{remark}[theorem]{Remark}
\begin{document}

\title{Determinism of Multirelations}
\author{Hitoshi Furusawa, Walter Guttmann and Georg Struth}
\maketitle

\begin{abstract}
  Binary multirelations can model alternating nondeterminism, for instance, in games or nondeterministically evolving systems interacting with an environment.
  Such systems can show partial or total functional behaviour at both levels of alternation, so that nondeterministic behaviour may occur only at one level or both levels, or not at all.
  We study classes of inner and outer partial and total functional multirelations in a multirelational language based on relation algebra and power allegories.
  While it is known that general multirelations do not form a category, we show that the classes of deterministic multirelations mentioned form categories with respect to Peleg composition from concurrent dynamic logic, and sometimes quantaloids.
  Some of these are isomorphic to the category of binary relations.
  We also introduce determinisation maps that approximate multirelations either by binary relations or by deterministic multirelations.
  Such maps are useful for defining modal operators on multirelations.
\end{abstract}

\section{Introduction}
\label{section.introduction}

This is the second article in a trilogy on the inner structure of multirelations~\cite{FurusawaGuttmannStruth2023a}, the determinisation of such relations and their algebras of modal operators~\cite{FurusawaGuttmannStruth2023c}.

Multirelations are binary relations of type $X \rto \Pow Y$.
As explained in the first article of this trilogy, they are models of alternating angelic and demonic nondeterminism, while arbitrary relations are standard models of angelic nondeterminism without alternation.
Each element in $X$ can be related by such a multirelation, at the outer or angelic level of nondeterminism, to one subset or many subsets of $Y$, or to no set at all, and within each of these subsets, at the inner or demonic level of determinism, to one element or many elements, or to no element at all.
Multirelations have been used to describe the semantics of programs with both angelic and demonic nondeterminism \cite{Rewitzky2003}.

In the first part of this trilogy we have studied the inner or demonic structure of multirelations, which complements the usual angelic boolean structure on relations.
A typical inner operation is the inner union of two multirelations $R$, $S$ of the same type: if $R$ and $S$ relate an element $a$ with the sets $B$ and $C$, respectively, then $R \iu S$ relates $a$ with the set $B \cup C$.
Inner intersection and inner complementation can then be defined in the obvious way, performing set-intersection or set-complementation on the second components of pairs.

We have also discussed notions of inner univalence or inner partial functionality, inner totality and inner determinism or functionality, which complement the standard outer notions from relation algebra.
A multirelation is inner univalent if every element in its codomain is either empty or a singleton set.
This means that mapping to the empty set represents inner partiality.
Consequently, a multirelation is inner total if every element in its domain is related to a non-empty set, and it is inner deterministic if it is inner univalent and inner total.
By contrast, an outer univalent multirelation is (the graph of) a partial function, an outer total multirelation relates every element with some set (including the empty one) and an outer deterministic multirelation is (the graph of) a function.
Intuitively, inner univalent multirelations can thus be seen as angelic multirelations that do not allow any inner or demonic choices, while outer univalent multirelations can be seen as demonic, as they do not allow any outer or angelic choices~\cite{Rewitzky2003}.
Inner deterministic multirelations are therefore strictly angelic, as empty inner choices are not permitted, while outer deterministic multirelations are strictly demonic, as empty outer choices are impossible.

In this article we study the structure of inner and outer univalent and deterministic multirelations in an algebraic language~\cite{FurusawaKawaharaStruthTsumagari2017} that combines features of relation algebra~\cite{Schmidt2011}, quantaloids~\cite{Pitts1988,Rosenthal1996} and power allegories~\cite{FreydScedrov1990,BirdMoor1997} with specific operations for multirelations.
We also consider the determinisation of multirelations either by relations or by deterministic multirelations.

Apart from the operations on the inner and outer structure mentioned, we consider the Peleg composition of multirelations~\cite{Peleg1987}, which comes from concurrent dynamic logic and is one of several possible compositions for multirelations.
Multirelations under Peleg composition do not form categories because this operation is not associative.
Yet specific classes of multirelations do, for instance the classes of deterministic or univalent multirelations (Proposition~\ref{proposition.univalent-cat}).
In Propositions~\ref{prop.det-cat} and~\ref{proposition.peleg-rel-inner-univalent} we show that inner univalent and inner deterministic multirelations form categories with respect to Peleg composition as well.
In particular, the power transpose map from power allegories is in fact a functor from $\Rel$ to categories of inner deterministic multirelations with respect to Peleg composition.
Moreover, the categories of inner and outer deterministic multirelations are isomorphic to the category $\Rel$ of sets and relations and its enrichment in the form of quantaloids (Proposition~\ref{proposition.peleg-rel}).

We further introduce an operation that approximates multirelations by relations and their isomorphic inner and outer deterministic multirelations.
These determinisation maps on multirelations, which we call \emph{fusion} and \emph{fission} maps, are related to the original multirelation by Galois connections with respect to one of the inner preorders (Proposition~\ref{proposition.eta-alpha-lambda-galois}).
They are also functors between the categories of inner and outer deterministic multirelations (Corollary~\ref{cor.det-cat2}), and the inner and outer deterministic multirelations arise as their fixpoints.
The determinisation maps are further used in the proof that inner univalent multirelations form a category.

As in~\cite{FurusawaGuttmannStruth2023a}, we work in concrete extensions and enrichments of $\Rel$ throughout this article, but with a view towards future axiomatic approaches.
Once again we have used the Isabelle/HOL proof assistant to check many results in this article, and have developed a substantial library for reasoning with multirelations~\cite{GuttmannStruth2023}, and more generally with concrete power allegories.
Nevertheless we did not aim at a complete formalisation and our article is self-contained without the Isabelle libraries.


\section{Relations and Multirelations}
\label{section.relation-multirelation}

First we recall the basics of binary relations and multirelations.
See~\cite{FurusawaGuttmannStruth2023a} and the references therein for details.
Our algebraic language of concrete relations and multirelations is again based on enrichments of the category $\Rel$ of sets and relations.
Yet in contrast to~\cite{FurusawaGuttmannStruth2023a} we extend the standard calculus of relations~\cite{Schmidt2011} with concepts from power allegories~\cite{FreydScedrov1990} using in particular the connection with the monad of the powerset functor in $\Set$~\cite{BirdMoor1997}, and with multirelational concepts, as in~\cite{FurusawaKawaharaStruthTsumagari2017}.
The richness of this language sometimes prevents us from listing all properties used in calculations and proofs -- we often refer to ``standard'' properties instead.
We extend the dependency list of relational and multirelational concepts with respect to a small basis from~\cite{FurusawaGuttmannStruth2023a} to the additional concepts needed here in Appendix~\ref{section.basis}.

\subsection{Binary relations}
\label{subsection.binary-relations}

Following~\cite{FurusawaGuttmannStruth2023a}, we work in the category $\Rel$ with sets as objects, relations as arrows, relational composition as arrow composition and the identity or diagonal relations as identity arrows.
It forms a modular quantaloid~\cite{Rosenthal1996}, where a quantaloid is a category enriched in the symmetric closed monoidal category of sup-lattices.
For $\Rel$, the tensor yields relational composition and sup-preservation; by closure, relational composition has two residuals as right adjoints; as a sup-lattice, it has arbitrary sups and infs, that is, unions and intersections.
$\Rel$, in particular, is even a complete atomic boolean algebra.
As a modular quantaloid, $\Rel$ has the relational converse as an involution which satisfies the modular or Dedekind law of relation algebra.

We write $X \rto Y$ for the homset $\Rel(X,Y)$, $\Id_X$ for the identity relations on $X$, $\emptyset_{X,Y}$ for the least and $U_{X,Y}$ for the greatest element in $X \rto Y$, $-R$ for the complement of $R$ and $S-R$ for the relative complement $S \cap -R$, $RS$ for the relational composition of relations $R$, $S$ of suitable type, $R / S$ and $R \backslash S$ for the left and right residuals of $R$ and $S$ and $R^\converse$ for the converse of $R$.
The modular law is the property $R S \cap T \subseteq (R \cap T S^\converse) S$.

We need the properties $T \backslash S = (S^\converse / T^\converse)^\converse$, $T / S = -(-T S^\converse)$ and $T \backslash S = -(T^\converse (-S))$ of residuals.
We also need the following concepts:
\begin{itemize}
\item the \emph{symmetric quotient} $\syq{T}{S} : X \rto Y$ of $T : X \rto Z$ and $S : Y \rto Z$, defined as $\syq{T}{S} = (T \backslash S) \cap (T^\converse / S^\converse)$,
\item \emph{tests}, which are relations $R \subseteq \Id$, and whose relational composition is intersection,
\item the \emph{domain} map $R : X \rto Y, R \mapsto \Id_X \cap R R^\converse = \Id_X \cap R U_{Y,X} = \{ (a,a) \mid \exists b .\ R_{a,b} \}$.
\end{itemize}
Domain elements and tests form the same full subalgebra of $\Rel(X,X)$ for any $X$, a complete atomic boolean algebra.

As the title of this article indicates, we are particularly interested in the following properties.
The relation $R : X \rto Y$ is
\begin{itemize}
\item \emph{outer total} if $\dom(R) = \Id_X$, or equivalently $\Id_X \subseteq R R^\converse$,
\item \emph{outer univalent}, or a \emph{partial function}, if $R^\converse R \subseteq \Id_Y$,
\item \emph{outer deterministic}, or a \emph{function}, if it is total and univalent.
\end{itemize}
Functions as deterministic relations in $\Rel$ are of course graphs of functions in $\Set$.
They also model programs as a subset of nondeterministic specifications in program refinement calculi.
We need the relational law $P Q \cap S = (P \cap S Q^\converse) Q$ for outer univalent $Q$~\cite{SchmidtStroehlein1989} in calculations.

We further write $R|_A$ for the restriction of relation $R$ to domain elements in the set $A$, $R(A)$ for the relational image of $A$ under $R$ and $R(a)$ for $R(\{a\})$.

Next we recall the basic concepts from power allegories~\cite{FreydScedrov1990,BirdMoor1997}.
The isomorphism between relations in $X \rto Y$ and nondeterministic functions in $X \to \Pow Y$ in $\Set$ can be expressed in $\Rel$.
Nondeterministic functions $X \to \Pow Y$ in $\Set$ are of course functions $X \rto \Pow Y$ in $\Rel$.

The \emph{power transpose}
\begin{equation*}
  \Lambda : (X \rto Y) \to (X \rto \Pow Y), \, R \mapsto \left\{ (a,R(a)) \mid a \in X \right\}
\end{equation*}
maps relations $X \rto Y$ to functions in $X \rto \Pow Y$, which are graphs of the nondeterministic functions $X \to \Pow Y$ in $\Set$.
In the other direction, relational postcomposition with the has-element relation $\ni_Y : \Pow Y \rto Y$, the converse of the membership relation $\in_Y : Y \rto \Pow Y$, maps relations and therefore functions in $X \rto \Pow Y$ to relations in $X \rto Y$.
We henceforth write $\alpha = (-){\ni}$.
This function satisfies
\begin{equation*}
  \alpha : (X \rto \Pow Y) \to (X \rto Y), \, R \mapsto \left\{ (a,b) \mid b \in \bigcup R(a) \right\}.
\end{equation*}
Algebraically, $\Lambda(R) = \syq{R^\converse}{\in}$, and we will see below how $\ni$ and $\alpha$ can be expressed in terms of basic relational and multirelational operations.

\begin{lemma}
  \label{lemma.lambda-alpha-props}
  Let $R : X \to Y$ and let $f : X \to \Pow Y$ be deterministic.
  Then
  \begin{enumerate}
  \item $f = \Lambda(R) \Leftrightarrow R = \alpha(f)$,
  \item $\alpha(\Lambda(R)) = R$ and $\Lambda(\alpha(f)) = f$,
  \item $f \Lambda(R) = \Lambda(f R)$ and $\Lambda(\ni_X) = \Id_{\Pow X}$.
  \end{enumerate}
\end{lemma}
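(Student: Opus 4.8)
The plan is to prove statement~(2) first by direct computation from the pointwise descriptions of $\Lambda$ and $\alpha$, then to obtain~(1) as an immediate corollary, and finally to verify the two identities in~(3) by similarly concrete calculations. Throughout I would use that a deterministic $f : X \rto \Pow Y$ relates every $a \in X$ to exactly one set, which I denote $f(a) \in \Pow Y$; totality supplies existence and univalence uniqueness. I would also lean on the elementary identity $\bigcup \{ S \} = S$ for a single set $S$, which is what drives the round-trip cancellations.

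For the first identity of~(2) I would observe that, for arbitrary $R : X \rto Y$, the relation $\Lambda(R)$ relates each $a$ to the single set $R(a)$, so $\alpha(\Lambda(R)) = \{ (a,b) \mid b \in \bigcup \{ R(a) \} \} = \{ (a,b) \mid b \in R(a) \} = R$; note this needs no determinism hypothesis. For the second identity I would use determinism of $f$: since $f$ relates $a$ to the unique set $f(a)$, we get $\alpha(f) = \{ (a,b) \mid b \in \bigcup \{ f(a) \} \} = \{ (a,b) \mid b \in f(a) \}$, so the relational image $\alpha(f)(a)$ equals $f(a)$, and therefore $\Lambda(\alpha(f)) = \{ (a, \alpha(f)(a)) \mid a \in X \} = \{ (a, f(a)) \mid a \in X \} = f$, the last step again using that $f$ relates $a$ to nothing but $f(a)$.

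Statement~(1) then follows purely formally from~(2): if $f = \Lambda(R)$ then applying $\alpha$ and the first identity gives $\alpha(f) = \alpha(\Lambda(R)) = R$, while if $R = \alpha(f)$ then applying $\Lambda$ and the second identity (which is where determinism of $f$ re-enters) gives $\Lambda(R) = \Lambda(\alpha(f)) = f$. For~(3), the equation $f \Lambda(R) = \Lambda(fR)$ is the fusion law for the power transpose under precomposition with a function $f : W \rto X$; I would check it pointwise, using that such an $f$ sends $a$ to a single value, so that both sides relate $a$ to the single set $R(f(a))$. The identity $\Lambda(\ni_X) = \Id_{\Pow X}$ is immediate from the definition of $\Lambda$ together with $\ni_X(S) = \{ x \mid x \in S \} = S$, which makes $\Lambda(\ni_X) = \{ (S,S) \mid S \in \Pow X \}$.

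The main obstacle, such as it is, is the careful use of the determinism hypothesis in the second identity of~(2) and in the backward direction of~(1): without totality a point related to no set would be sent by $\alpha$ and then $\Lambda$ to a point related to $\emptyset$, and without univalence several sets related to $a$ would be collapsed by $\alpha$ into their union and could not be recovered. An alternative, more algebraic route would replace these pointwise arguments by the characterisations $\Lambda(R) = \syq{R^\converse}{\in}$ and $\alpha = (-){\ni}$ together with the cancellation laws for symmetric quotients against the membership relation; I expect the pointwise computation to be shorter and more transparent, so I would present that and relegate the algebraic identities to remarks.
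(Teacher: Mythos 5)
Your proof is correct. The paper states this lemma without proof, as a standard fact recalled from power allegories, so there is no argument to compare against; your pointwise verification is exactly the kind of routine check the authors are eliding, and you correctly identify both where determinism of $f$ is genuinely needed (the second identity of (2), the backward direction of (1), and the fusion law in (3), where you also rightly re-type $f$ as a deterministic relation $W \rto X$ to make $f\Lambda(R)$ composable) and why it is needed (totality prevents the spurious $(a,\emptyset)$ pair, univalence prevents collapsing several sets into their union).
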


The following diagram therefore commutes for functions in $X \rto \Pow Y$:
\begin{equation*}
  \begin{tikzcd}
    X \rto \Pow Y \ar[r, "\id"] \ar[d, "\alpha"'] & X \rto \Pow Y \ar[d, "\alpha"] \\
    X \rto Y \ar[r, "\id"'] \ar[ur, "\Lambda"] & X \rto Y
  \end{tikzcd}
\end{equation*}
It follows that $\Lambda$ and $\alpha$ form a bijective pair.

We also need the \emph{relational image functor}
\begin{equation*}
  \Pow : (X \rto Y) \to (\Pow X \rto \Pow Y), \, R \mapsto \Lambda({\ni_X} R).
\end{equation*}
Expanding definitions, $\Pow(R) = \{ (A,R(A)) \mid A \subseteq X \}$, so that the relational image, given by the covariant powerset functor in $\Set$, is coded again as a graph.
It is deterministic by definition.
As a functor, it satisfies of course $\Pow(R S) = \Pow(R) \Pow(S)$ and $\Pow(\Id) = \Id$.

The unit and multiplication of the powerset monad are recovered relationally as $\eta_X : X \rto \Pow X$ and $\mu_X : \Pow^2 X \rto \Pow X$ such that $\eta_X = \Lambda (\Id_X)$ and $\mu_X = \Pow({\ni_X})$.
Alternatively, $\eta_X = \syq{\Id_X}{\in_X}$ and, expanding definitions, $\eta_X = \{ (a,\{a\}) \mid a \in X \}$.

\begin{lemma}
  \label{lemma.pow-props}
  Let $R : X \rto Y$, $S : Y \rto Z$, let $f : X \rto Y$ be deterministic.
  Then
  \begin{enumerate}
  \item $\Lambda(R S) = \Lambda(R) \Pow(S)$,
  \item $\eta \Pow(R) = \Lambda(R)$ and $\alpha(\eta \Pow(R)) = R$, hence $\Pow$ has a right inverse,
  \item $\Lambda(f) = f \eta$,
  \item $\eta$ and $\mu$ are natural transformations: $\eta \Pow(f) = f \eta$ and $\Pow^2(f) \mu = \mu \Pow (f)$,
  \item the monad axioms hold: $\Pow(\mu) \mu = \mu \mu$, $\Pow(\eta) \mu = \Id$ and $\eta \mu = \Id$,
  \item $\alpha(\eta) = \Id$.
  \end{enumerate}
\end{lemma}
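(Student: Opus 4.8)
The plan is to derive all six items from the definitions $\Pow(R) = \Lambda(\ni_X R)$, $\eta_X = \Lambda(\Id_X)$, $\mu_X = \Pow(\ni_X)$ and $\alpha(R) = R\ni$, together with functoriality of $\Pow$ (that is, $\Pow(RS) = \Pow(R)\Pow(S)$ and $\Pow(\Id) = \Id$) and the three workhorse identities supplied by Lemma~\ref{lemma.lambda-alpha-props}: the fusion law $f\Lambda(T) = \Lambda(fT)$ for deterministic $f$, the cancellation $\Lambda(R)\ni = R$ (which is just $\alpha(\Lambda(R)) = R$ written out, since $\alpha = (-)\ni$), and $\Lambda(\ni) = \Id$. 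Almost everything is then bookkeeping, so I would present the items in an order that lets later ones reuse earlier ones rather than following the numbering blindly.

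First I would prove item~1 by the chain $\Lambda(R)\Pow(S) = \Lambda(R)\Lambda(\ni S) = \Lambda(\Lambda(R)\ni S) = \Lambda(RS)$, where the middle step uses the fusion law (with $\Lambda(R)$ deterministic) and the last uses cancellation $\Lambda(R)\ni = R$. Item~2 is then immediate: specialising item~1 to $R = \Id$ gives $\eta\Pow(S) = \Lambda(\Id)\Pow(S) = \Lambda(S)$, and applying $\alpha$ together with $\alpha(\Lambda(S)) = S$ yields $\alpha(\eta\Pow(S)) = S$; this second equation exhibits $T \mapsto \alpha(\eta T)$ as a one-sided inverse of $\Pow$, which is the stated right-inverse claim. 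Item~3 is the fusion law with $T = \Id$, since $f\Lambda(\Id) = \Lambda(f)$ and $\Lambda(\Id) = \eta$. Item~6, $\alpha(\eta) = \eta\ni = \Id$, is cancellation with $R = \Id$.

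The one genuinely new computation is the naturality of the has-element relation, $\Pow(f)\ni_Y = \ni_X f$, which I would isolate as a separate step: unfolding $\Pow(f) = \Lambda(\ni_X f)$ and using cancellation gives $\Pow(f)\ni_Y = \Lambda(\ni_X f)\ni_Y = \ni_X f$. This single identity is the crux for everything involving $\mu$, and notably it needs no determinism assumption, as it flows entirely through $\Pow$ and cancellation.

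Finally I would assemble items~4 and~5. Naturality of $\eta$, namely $\eta\Pow(f) = f\eta$, combines items~2 and~3 as $\eta\Pow(f) = \Lambda(f) = f\eta$ and here requires $f$ deterministic. Naturality of $\mu$ and the associativity axiom I would obtain by pushing all composites through functoriality of $\Pow$ and then cancelling with the $\ni$-naturality identity: thus $\Pow^2(f)\mu = \Pow(\Pow(f)\ni) = \Pow(\ni f) = \mu\Pow(f)$, while $\Pow(\mu)\mu$ and $\mu\mu$ both collapse to $\Pow(\ni_{\Pow X}\ni_X)$ once $\Pow(\ni_X)\ni_X = \ni_{\Pow X}\ni_X$ is applied. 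The two unit laws follow from identities already in hand: $\eta\mu = \eta\Pow(\ni) = \Lambda(\ni) = \Id$ by item~2, and $\Pow(\eta)\mu = \Pow(\eta\ni) = \Pow(\alpha(\eta)) = \Pow(\Id) = \Id$ by functoriality and item~6. I expect no deep obstacle; the main care needed is tracking the monad indices (distinguishing $\eta_{\Pow X}$ from $\Pow(\eta_X)$, and $\mu_{\Pow X}$ from $\Pow(\mu_X)$) and remembering that the $\Lambda$-based steps require determinism whereas the functoriality-based steps for $\mu$ hold for arbitrary relations.
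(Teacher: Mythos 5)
Your proof is correct. Note that the paper offers no proof of this lemma at all: it is presented as a package of standard facts about the powerset monad in relational form, with the burden carried by the citations to Freyd--Scedrov and Bird--de~Moor, so there is no in-paper argument to compare against. What you have written is a valid, self-contained derivation from the definitions $\Pow(R)=\Lambda({\ni}\,R)$, $\eta=\Lambda(\Id)$, $\mu=\Pow({\ni})$ and the three identities of Lemma~\ref{lemma.lambda-alpha-props}. The organising idea you add --- isolating the naturality of the has-element relation, $\Pow(R)\,{\ni}={\ni}\,R$, which indeed follows from cancellation alone and needs no determinism --- is exactly the right pivot: it turns the $\mu$-naturality and the associativity axiom into pure functoriality computations, and your observation that these hold for arbitrary relations while $\eta$-naturality genuinely requires $f$ deterministic is consistent with the paper's hypotheses. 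All individual steps check out: item~1 is the fusion law plus $\alpha\circ\Lambda=\id$; items~2, 3 and~6 are specialisations at $R=\Id$; and the three monad axioms reduce, as you say, to $\Pow(\ni_{\Pow X}\ni_X)$, $\Lambda({\ni})=\Id$ and $\Pow(\alpha(\eta))=\Pow(\Id)$ respectively. The only caveat is terminological: what item~2 literally establishes is that $T\mapsto\alpha(\eta\,T)$ inverts $\Pow$ on its image, so $\Pow$ is split monic; whether one says $\Pow$ \emph{has} or \emph{is} the one-sided inverse depends on the composition convention, and your reading matches the paper's evident intent.
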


Rather unsurprisingly, $\Pow$ does not form a monad on $\Rel$; it only does so on its wide subcategory $\Set$ (up to isomorphism).

The following relations are standard in relation algebra and can be defined in power allegories:
\begin{itemize}
\item the \emph{subset relation} $\Omega_Y = {\in_Y} \backslash {\in_Y} = \{ (A,B) \mid A \subseteq B \subseteq Y \}$,
\item the \emph{complementation relation} $C = \syq{\in_Y}{-\in_Y} = \{ (A,-A) \mid A \subseteq Y \}$.
\end{itemize}

We need the following technical lemma in proofs.
\begin{lemma}
  \label{lemma.rel-mod-props}
  Let $R : X \rto Y$.
  Then
  \begin{enumerate}
  \item $\Lambda(R) C = \Lambda(-R)$,
  \item $\Lambda(R) \Omega = R^\converse \backslash {\in} = ({\ni} / R)^\converse$.
  \end{enumerate}
\end{lemma}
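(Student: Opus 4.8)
The plan is to exploit the explicit set-theoretic descriptions of all the relations involved, which are already recorded in the text: $\Lambda(R) = \{(a, R(a)) \mid a \in X\}$, $C = \{(A, -A) \mid A \subseteq Y\}$, $\Omega = \{(A,B) \mid A \subseteq B \subseteq Y\}$, together with membership $\in$ and its converse ${\ni}$. Since the two claimed identities merely re-express composites of these along the inner/outer boundary, the most transparent route is to unfold each side pointwise, grounding the unfoldings in the stated residual formulas so that only ``standard'' facts are used. I will also indicate the point-free derivation matching the algebraic style of the paper, via the symmetric quotient presentations $\Lambda(R) = \syq{R^\converse}{\in}$ and $C = \syq{\in}{-\in}$.

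For the first claim I would compose the two graphs directly: $\Lambda(R)$ sends $a$ to the set $R(a)$ and $C$ then sends $R(a)$ to its complement, so $\Lambda(R)C = \{(a, -R(a)) \mid a \in X\}$. The only observation needed is that complementing the image commutes with complementing the relation, $-R(a) = (-R)(a)$, whence the composite is exactly $\Lambda(-R)$. Point-free, the same identity reads $\syq{R^\converse}{\in}\,\syq{\in}{-\in} = \syq{R^\converse}{-\in} = \syq{(-R)^\converse}{\in} = \Lambda(-R)$, using the cancellation law for symmetric quotients through $\in$ (legitimate because $\in$ separates subsets, $\syq{\in}{\in} = \Id$) and the invariance of a symmetric quotient under simultaneous complementation of both arguments; this alternative is optional, since $C$ is simply the complementation bijection on $\Pow Y$.

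For the second claim I would first dispose of the equality $R^\converse \backslash {\in} = ({\ni}/R)^\converse$, which is a direct instance of the stated law $T \backslash S = (S^\converse / T^\converse)^\converse$ with $T = R^\converse$ and $S = {\in}$, using $\in^\converse = {\ni}$ and $(R^\converse)^\converse = R$. For the main equality, composing graphs gives $\Lambda(R)\Omega = \{(a,B) \mid R(a) \subseteq B\}$, since $\Omega$ relates the set $R(a)$ to exactly its supersets. On the other side the stated formula $T \backslash S = -(T^\converse(-S))$ yields $R^\converse \backslash {\in} = -(R(-{\in}))$, and unfolding the composition and the complement turns this into $\{(a,B) \mid \neg\exists b.\ R_{a,b} \wedge b \notin B\} = \{(a,B) \mid R(a) \subseteq B\}$, so the two sides coincide.

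The computations are essentially routine once set up, and the only genuine care-point is the residual. In the point-free argument one must justify the symmetric-quotient cancellation through $\in$, which is ultimately an expression of extensionality ($\syq{\in}{\in} = \Id$); in the pointwise argument the sole place to slip is the direction of the implication and the scope of the universal quantifier hidden in $R^\converse \backslash {\in}$. Tracking the types carefully — noting that $R^\converse$ and $\in$ share their source $Y$, so that $R^\converse \backslash {\in}$ indeed has type $X \rto \Pow Y$ — removes any remaining ambiguity.
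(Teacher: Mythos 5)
Your proof is correct, but it takes a different route from the paper's. The paper argues entirely point-free: for (1) it writes $C = \Lambda(-{\ni})$ and then uses the fusion law $f\,\Lambda(S) = \Lambda(f S)$ together with the fact that the function $\Lambda(R)$ distributes over complement from the left, giving $\Lambda(R)\,C = \Lambda(-(\Lambda(R)\,{\ni})) = \Lambda(-R)$; for (2) it unfolds $\Omega = -({\ni}\,(-{\in}))$ and again pushes the complement through the deterministic $\Lambda(R)$ to land on $-(R(-{\in})) = R^\converse \backslash {\in}$. You instead compose the explicit graphs pointwise, which is perfectly legitimate here since the paper works in concrete $\Rel$ and supplies all the set-theoretic descriptions you use; your treatment of the residual $-(R(-{\in}))$ and of the equality $R^\converse \backslash {\in} = ({\ni}/R)^\converse$ (which the paper leaves implicit) is careful and correct. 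The trade-off is that the paper's calculation survives in an axiomatic power-allegory setting, which is the direction the authors say they are heading, whereas yours relies on elementhood. One small caveat on your optional point-free sketch for (1): the equality $\syq{R^\converse}{\in}\,\syq{\in}{-\in} = \syq{R^\converse}{-\in}$ in the symmetric-quotient cancellation law is justified by totality of $\syq{R^\converse}{\in} = \Lambda(R)$ (or surjectivity of $C$), not directly by $\syq{\in}{\in} = \Id$; since you present that route as optional and your main argument is self-contained, this does not affect correctness.
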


\begin{proof}
  For (1), $\Lambda(R) C = \Lambda(R) \Lambda(-{\ni}) = \Lambda(\Lambda(R) (-{\ni})) = \Lambda(-(\Lambda(R) {\ni})) = \Lambda(-R)$.
  This uses properties of Lemma~\ref{lemma.lambda-alpha-props} and determinism of $\Lambda(R)$.

  For (2), $\Lambda(R) \Omega = \Lambda(R) (-({\ni} (-{\in}))) = -(\Lambda(R) {\ni} (-{\in})) = -(R (-{\in})) = R^\converse \backslash {\in}$, using the definition of $\Omega$ and determinism of $\Lambda(R)$ in the second step, Lemma~\ref{lemma.lambda-alpha-props} in the fourth and properties of residuals in the third.
\end{proof}

Finally, for $R, S : X \rto Y$, we write $S \subseteq_d R$ if $S$ is univalent, $\dom(S) = \dom(R)$ and $S \subseteq R$.
This allows us to decompose any relation as $R = \bigcup_{S \subseteq_d R} S$~\cite[Lemma 2.1]{FurusawaGuttmannStruth2023a}.

\subsection{Multirelations}
\label{SS:multirelations}

A \emph{multirelation} is an arrow $X \rto \Pow Y$ in $\Rel$ and therefore a doubly-nondeterministic function $X \to \Pow^2 Y$ in $\Set$.
We write $\Mult(X,Y)$ for the homset $X \rto \Pow Y$.
Multirelations do not form a category: the double powerset functor does not yield a suitable monad~\cite{KlinSalamanca2018} and hence no associative composition with suitable units~\cite{FurusawaKawaharaStruthTsumagari2017}.

The \emph{Peleg composition}~\cite{Peleg1987} $\seq : (X \rto \Pow Y) \times (Y \rto \Pow Z) \to (X \rto \Pow Z)$ can be defined in terms of the \emph{Peleg lifting} $(-)_\seq : (X \rto \Pow Y) \to (\Pow X \rto \Pow Y)$ of multirelations, which in turn can be defined in terms of the \emph{Kleisli lifting} $(-)_\kleisli : (X \rto \Pow Y) \to (\Pow X \rto \Pow Y)$~\cite{FurusawaKawaharaStruthTsumagari2017}:
\begin{equation*}
  R_\kleisli = \Pow(\alpha(R)), \qquad
  R_\seq = \dom(R)_\seq \bigcup_{S \subseteq_d R} S_\kleisli, \qquad
  R \seq S = R S_\seq.
\end{equation*}

Expanding definitions,
\begin{align*}
  R_\kleisli & = \left\{ (A,B) \mid B = \bigcup R(A) \right\}, \\
  R_\seq & = \left\{ (A,B) \mid \exists f : X \to \Pow Y .\ f|_A \subseteq R \wedge B = \bigcup f(A) \right\}, \\
  R \seq S & = \left\{ (a,C) \mid \exists B .\ R_{a,B} \wedge \exists f : Y \to \Pow Z .\ f|_B \subseteq S \wedge C = \bigcup f(B) \right\}.
\end{align*}

The Kleisli lifting is the multirelational analogue of the Kleisli lifting or Kleisli extension in the Kleisli category of the powerset monad.
Its standard definition translates to multirelations.

\begin{lemma}
  \label{lemma.pow-klift-def}
  Let $R : X \rto \Pow Y$.
  Then $R_\kleisli = \Pow(R) \mu$.
\end{lemma}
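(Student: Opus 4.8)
The plan is to prove the identity by unfolding the two definitions involved and then applying functoriality of the relational image functor $\Pow$. The statement relates the multirelational Kleisli lifting, defined as $R_\kleisli = \Pow(\alpha(R))$, to the monad multiplication $\mu$, and the bridge between them is the factorisation $\alpha = (-){\ni}$ together with the definition $\mu = \Pow({\ni})$. So the whole argument should reduce to a short chain of equalities with no genuine content beyond rewriting.

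First I would rewrite $\alpha(R)$ using its definition as relational postcomposition with the has-element relation: since $\alpha = (-){\ni}$, we have $\alpha(R) = R {\ni_Y}$ with $\ni_Y : \Pow Y \rto Y$, so that $R_\kleisli = \Pow(R {\ni_Y})$. Next I would invoke functoriality of $\Pow$, namely $\Pow(S T) = \Pow(S) \Pow(T)$ (which holds for arbitrary relations of suitable type, with no determinism hypothesis), applied to the composite $R {\ni_Y}$. This splits the image as $\Pow(R {\ni_Y}) = \Pow(R) \Pow({\ni_Y})$. Finally, recognising the definition $\mu_Y = \Pow({\ni_Y})$ yields $R_\kleisli = \Pow(R) \mu$, as required.

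There is essentially no obstacle here beyond careful type bookkeeping. One should check that the composites land where expected: $\Pow(R) : \Pow X \rto \Pow^2 Y$ and $\mu_Y = \Pow({\ni_Y}) : \Pow^2 Y \rto \Pow Y$, so that $\Pow(R) \mu$ has type $\Pow X \rto \Pow Y$, matching the type of $R_\kleisli$; likewise $R {\ni_Y} : X \rto Y$, so $\Pow(R {\ni_Y}) : \Pow X \rto \Pow Y$ as well. Since $\Pow$ is a genuine functor on all of $\Rel$ rather than only on its deterministic arrows, the single application of functoriality is fully justified and the proof is complete.
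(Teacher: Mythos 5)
Your proof is correct: the chain $R_\kleisli = \Pow(\alpha(R)) = \Pow(R\,{\ni}) = \Pow(R)\,\Pow({\ni}) = \Pow(R)\,\mu$ uses only the stated definitions and the functoriality of $\Pow$ on all of $\Rel$, which the paper asserts explicitly, and your type bookkeeping is accurate. The paper leaves this lemma unproved as a standard translation of the Kleisli extension, and your argument is exactly the intended one.
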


It can also be seen as the relational image of the relational approximation of any multirelation using $\alpha$.
By definition, Kleisli liftings of multirelations are functions in $\Rel$.

The units of Peleg composition are given by the multirelations $\eta_X$; because of this, we henceforth also write $1_X$ for $\eta_X$.

\begin{lemma}
  \label{lemma.klift-ext}
  Let $R : X \rto \Pow Y$, $S : Y \rto \Pow Z$ and let $f : X \rto \Pow Y$ be a function.
  Then the laws $(R S_\kleisli)_\kleisli = R_\kleisli S_\kleisli$, $\eta_\kleisli = \Id$ and $\eta f_\kleisli = f$ of extension systems hold.
\end{lemma}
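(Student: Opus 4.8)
The plan is to read the three identities as the axioms of an extension system (a Kleisli triple) and to discharge them by rewriting $(-)_\kleisli$ through Lemma~\ref{lemma.pow-klift-def}, reducing each law to the monad data collected in Lemma~\ref{lemma.pow-props}. Throughout I would use $R_\kleisli = \Pow(R)\mu$ together with functoriality $\Pow(RS) = \Pow(R)\Pow(S)$ and $\Pow(\Id) = \Id$, and the monad axioms $\Pow(\eta)\mu = \Id$, $\eta\mu = \Id$, $\Pow(\mu)\mu = \mu\mu$.

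The two unit laws are then short. For $\eta_\kleisli = \Id$ I would expand $\eta_\kleisli = \Pow(\eta)\mu$ and apply the monad axiom $\Pow(\eta)\mu = \Id$ directly. For $\eta f_\kleisli = f$ I would compute $\eta f_\kleisli = \eta\Pow(f)\mu = f\eta\mu = f$, using naturality of $\eta$ at $f$ (Lemma~\ref{lemma.pow-props}(4)) and then $\eta\mu = \Id$. Here the hypothesis that $f$ is a function is essential: naturality of $\eta$ holds only for deterministic arrows, and indeed for a genuinely nondeterministic $f$ the composite $\eta f_\kleisli$ collapses the outer choices of $f$ and no longer returns $f$.

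For the associativity law I would rather work through the relational approximation $\alpha$, exploiting the alternative form $R_\kleisli = \Pow(\alpha(R))$ with $\alpha(R) = R\ni$ a plain relation. By functoriality, $R_\kleisli S_\kleisli = \Pow(\alpha(R))\Pow(\alpha(S)) = \Pow(\alpha(R)\alpha(S))$ and $(R S_\kleisli)_\kleisli = \Pow(\alpha(R S_\kleisli))$, so it suffices to prove the relational identity $\alpha(R S_\kleisli) = \alpha(R)\alpha(S)$. Expanding $\alpha$ as postcomposition with $\ni$, this reduces to $S_\kleisli\ni = \ni\,\alpha(S)$, that is $\Pow(\alpha(S))\ni = \ni\,\alpha(S)$, which is exactly naturality of the has-element relation $\ni$ at $\alpha(S)$; precomposing with $R$ and rewriting $\alpha(R) = R\ni$ then closes the calculation. (Equivalently one can stay with $\mu$, using $\Pow(\mu)\mu = \mu\mu$ and then commuting $\mu$ past $\Pow(S)$ by naturality of $\mu$.)

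The main obstacle is precisely this $\mu$- (equivalently $\ni$-) commutation in the associativity law, because it must be invoked for an arbitrary, possibly nondeterministic, multirelation $S$, whereas $\Pow$ fails to be a monad on $\Rel$ and Lemma~\ref{lemma.pow-props}(4) phrases naturality only for functions. The point I would make explicit is that it is the unit $\eta$ whose naturality breaks for general relations, while naturality of $\mu = \Pow(\ni)$, equivalently of $\ni$, survives for all relations, as one checks pointwise: both $\Pow(g)\ni$ and $\ni\,g$ relate a set $P$ to the elements of its relational image $g(P)$. This is what lets the Kleisli lifting satisfy the extension-system laws over all of $\Rel$ even though $\Pow$ is a monad only on $\Set$.
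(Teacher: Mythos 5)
The paper states this lemma without proof, so there is no official argument to compare against; judged on its own, your proof is correct and uses exactly the toolkit the paper sets up (Lemmas~\ref{lemma.lambda-alpha-props}, \ref{lemma.pow-props} and \ref{lemma.pow-klift-def}). The two unit laws go through as you say, and your reduction of the associativity law to $\alpha(R S_\kleisli)=\alpha(R)\alpha(S)$, i.e.\ to $\Pow(\alpha(S)){\ni}={\ni}\,\alpha(S)$, is the right move; indeed the paper itself silently uses this identity later, in the first line of the proof of Lemma~\ref{lemma.alpha-props}(1), in the form ${\ni}\,\alpha(T)=\alpha(T_\kleisli)$. One small improvement: you justify the crucial naturality of $\ni$ at an arbitrary relation by a pointwise check, but it is already available algebraically from the paper's definitions, since for any relation $g$ one has $\Pow(g){\ni}=\alpha(\Pow(g))=\alpha(\Lambda({\ni}g))={\ni}g$ by the definition $\Pow(g)=\Lambda({\ni}g)$ together with $\alpha\circ\Lambda=\id$ from Lemma~\ref{lemma.lambda-alpha-props}(2) (noting $\Pow(g)$ is deterministic by construction); this keeps the whole proof inside the stated lemmas. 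Your closing diagnosis — that it is the unit $\eta$ whose naturality fails for nondeterministic arrows while that of $\mu$ (equivalently $\ni$) survives, which is why the extension-system laws hold over all of $\Rel$ even though $\Pow$ is a monad only on $\Set$ — is accurate and worth stating.
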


The standard properties $\Pow(R) = (R \eta)_\kleisli$ and $\mu = \Id_\kleisli$, which recover the powerset monad from its extension system, still hold for any multirelation.
Once again, the extension system axioms work only for deterministic multirelations -- the standard arrows of the Kleisli category of the powerset functor.

The interaction of Peleg composition with the outer structure is weak; see~\cite{FurusawaStruth2015a} for examples.
In particular, it is not associative; only $(R \seq S) \seq T \subseteq R \seq (S \seq T)$ holds.
Hence $(R S_\seq)_\seq$ need not be equal to $R_\seq S_\seq$, and multirelations do not form a category under Peleg composition.
The composition becomes associative if the third factor is union-closed~\cite{FurusawaKawaharaStruthTsumagari2017}.
Peleg composition also preserves arbitrary unions in its first argument.

Algebraic descriptions of univalent and deterministic multirelations are simple.

\begin{lemma}[\cite{FurusawaKawaharaStruthTsumagari2017}]
  \label{lemma.det-circ}
  Let $R : X \rto \Pow Y$.
  Then
  \begin{enumerate}
  \item $R = \dom(R) 1_X R_\kleisli$ and $R_\seq = \dom(R)_\seq R_\kleisli$ if $R$ is univalent,
  \item $R = 1_X R_\kleisli$ and $R_\seq = R_\kleisli$ if $R$ is deterministic.
  \end{enumerate}
\end{lemma}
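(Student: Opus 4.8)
The plan is to prove the univalent statements in~(1) first and then read off the deterministic statements in~(2) as special cases, since a deterministic multirelation is precisely a univalent and total one. The second identity of~(1), $R_\seq = \dom(R)_\seq R_\kleisli$, is almost immediate from the definition $R_\seq = \dom(R)_\seq \bigcup_{S \subseteq_d R} S_\kleisli$: a univalent $R$ admits only the trivial $\subseteq_d$-decomposition, because if $S$ is univalent with $\dom(S) = \dom(R)$ and $S \subseteq R$, then univalence of $R$ forces $S$ and $R$ to agree on their common domain and to be empty elsewhere, so $S = R$. Hence $\bigcup_{S \subseteq_d R} S_\kleisli = R_\kleisli$, and substitution gives the claim. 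For the first identity of~(1), $R = \dom(R) 1_X R_\kleisli$, I would simplify the suffix $1_X R_\kleisli$. Using $1_X = \eta_X$, then $R_\kleisli = \Pow(R)\mu$ (Lemma~\ref{lemma.pow-klift-def}), then $\eta \Pow(R) = \Lambda(R)$ (Lemma~\ref{lemma.pow-props}), and finally $\mu = \Pow({\ni})$ together with $\Lambda(R)\Pow({\ni}) = \Lambda(R{\ni})$ (Lemma~\ref{lemma.pow-props}(1)) and $\alpha = (-){\ni}$, I rewrite $1_X R_\kleisli = \Lambda(R)\mu = \Lambda(\alpha(R))$. The task thus reduces to the identity $\dom(R)\Lambda(\alpha(R)) = R$ for univalent $R$, where $\Lambda(\alpha(R))$ is the total function sending each $a$ to its flattening $\bigcup R(a)$.

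This last identity is the main obstacle, and the only place where univalence is genuinely used. I would prove it by two inclusions. First, $R \subseteq \Lambda(\alpha(R))$: for univalent $R$ the single set related to each $a \in \dom(R)$ equals its own flattening $\bigcup R(a)$, so every pair of $R$ already lies in $\Lambda(\alpha(R))$; precomposing with $\dom(R)$ and using $R = \dom(R) R$ then gives $R \subseteq \dom(R)\Lambda(\alpha(R))$. Second, both sides are univalent, as a test composed with a function is univalent, and they share the domain $\dom(R)$ because $\Lambda(\alpha(R))$ is total; a univalent relation contained in a univalent relation of the same domain must equal it, giving the reverse inclusion. An equivalent, more calculational route applies $\alpha$ to both sides: since $\alpha$ commutes with precomposition by tests and $\alpha(\Lambda(\alpha(R))) = \alpha(R)$ (Lemma~\ref{lemma.lambda-alpha-props}), the two sides have the same relational approximation, and one concludes by the injectivity of $\alpha$ on univalent multirelations of fixed domain.

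Finally, for~(2) I specialise. A deterministic $R$ is a function, so $R = 1_X R_\kleisli$ is exactly the extension-system law $\eta f_\kleisli = f$ of Lemma~\ref{lemma.klift-ext} with $f = R$; alternatively it is~(1) with $\dom(R) = \Id_X$. For $R_\seq = R_\kleisli$, determinism gives univalence, so~(1) yields $R_\seq = \dom(R)_\seq R_\kleisli$; determinism also gives totality, so $\dom(R) = \Id_X$ and the prefix $\dom(R)_\seq$ collapses to the Peleg lifting of $1_X$, namely $\Id_{\Pow X}$, since $(1_X)_\seq$ relates each $A$ only to $\bigcup_{a \in A}\{a\} = A$. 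Hence $R_\seq = R_\kleisli$.
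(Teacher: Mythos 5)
The paper does not prove this lemma at all --- it is imported from the cited reference [FurusawaKawaharaStruthTsumagari2017] --- so there is no in-paper proof to compare against. Judged on its own, your argument is correct and self-contained. The key observations all hold: a univalent $R$ is its own unique $\subseteq_d$-decomposition, so $\bigcup_{S \subseteq_d R} S_\kleisli = R_\kleisli$ and the second identity of (1) drops out of the definition of $(-)_\seq$; the reduction $1_X R_\kleisli = \Lambda(R)\mu = \Lambda(\alpha(R)) = \fus(R)$ is exactly the identity recorded in Lemma~\ref{lemma.fusion-kleisli}(2); and the two-inclusion argument for $\dom(R)\Lambda(\alpha(R)) = R$ (containment one way, then equality of univalent relations with equal domain) is where univalence genuinely enters, just as you say. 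Your reading of $\dom(R)_\seq$ as the power-test $\{(A,A) \mid A \subseteq \dom(R)\}$, collapsing to $1_\seq = \Id_{\Pow X}$ when $R$ is total, matches how the paper itself manipulates this term in the proof of Lemma~\ref{lemma.peleg-preserves-inner-univalent}, so part (2) follows cleanly as the special case $\dom(R) = \Id_X$, with the alternative derivation of $R = 1_X R_\kleisli$ from the extension-system law $\eta f_\kleisli = f$ of Lemma~\ref{lemma.klift-ext} equally valid.
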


As Kleisli liftings of multirelations are functions, it follows from (1) that Peleg liftings of univalent multirelations are univalent.
Alternatively, $R : X \rto \Pow Y$ is univalent if and only if, for all $S : X \rto \Pow Y$, $\dom(R) = \dom(S)$ and $S \subseteq R$ imply $S = R$~\cite{FurusawaKawaharaStruthTsumagari2017}.
Thus
\begin{equation*}
    S_\seq
  = \dom(S)_\seq \bigcup_{T \subseteq_d S} T_\kleisli
  = \bigcup_{T \subseteq_d S} \dom(S)_\seq T_\kleisli
  = \bigcup_{T \subseteq_d S} \dom(T)_\seq T_\kleisli
  = \bigcup_{T \subseteq_d S} T_\seq,
\end{equation*}
as the $T$ are univalent, and therefore
\begin{equation*}
    R \seq S
  = R \bigcup_{T \subseteq_d S} T_\seq
  = R \, \dom(S)_\seq \bigcup_{T \subseteq_d S} T_\kleisli
  = R \, \dom(S)_\seq \bigcup_{T \subseteq_d S} \Pow(\alpha(T)).
\end{equation*}

Univalent multirelations have stronger algebraic properties.

\begin{lemma}[\cite{FurusawaKawaharaStruthTsumagari2017}]
  \label{lemma.univalent-assoc}
  Let $R$, $S$ and $f$ be composable multirelations and $f$ univalent.
  Then
  \begin{enumerate}
  \item the laws $(S f_\seq)_\seq = S_\seq f_\seq$, $\eta_\seq = \Id$ and $\eta R_\seq = R$ of extension systems hold,
  \item $(R \seq S) \seq f = R \seq (S \seq f)$.
  \end{enumerate}
\end{lemma}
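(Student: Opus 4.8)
The plan is to treat part (2) as a consequence of the extension law in part (1), so that the real work lies in part (1), and in fact only in its first identity. Unfolding Peleg composition via $R \seq S = R S_\seq$ and using associativity of relational composition gives $(R \seq S) \seq f = (R S_\seq) f_\seq = R (S_\seq f_\seq)$. By the first law of part~(1), $S_\seq f_\seq = (S f_\seq)_\seq = (S \seq f)_\seq$, whence $R (S_\seq f_\seq) = R (S \seq f)_\seq = R \seq (S \seq f)$. Thus once the extension laws are available, associativity with a univalent third factor is immediate, and no further hypothesis on $R$ or $S$ is needed.

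Among the three laws of part~(1), the two unit laws are short. Since $\eta$ is a total function, hence deterministic, Lemma~\ref{lemma.det-circ}(2) gives $\eta_\seq = \eta_\kleisli$, which equals $\Id$ by Lemma~\ref{lemma.klift-ext}; this simultaneously yields the right unit $R \seq \eta = R \eta_\seq = R$. For the left unit $\eta R_\seq = R$ I would use the decomposition $R_\seq = \bigcup_{T \subseteq_d R} T_\seq$ established above, together with $T_\seq = \dom(T)_\seq T_\kleisli$ and $T = \dom(T)\, \eta\, T_\kleisli$ from Lemma~\ref{lemma.det-circ}(1). Precomposing with $\eta$ collapses $\eta\, \dom(T)_\seq$ to $\dom(T)\, \eta$ (the unit produces singletons, on which $\dom(T)_\seq$ acts as the domain restriction), so $\eta T_\seq = \dom(T)\, \eta\, T_\kleisli = T$; summing over $T \subseteq_d R$ and using $R = \bigcup_{T \subseteq_d R} T$ recovers $R$. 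This law requires no univalence of $R$.

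The crux is the identity $(S f_\seq)_\seq = S_\seq f_\seq$ for univalent $f$. Here I would first use that univalence forces $f_\seq = \dom(f)_\seq f_\kleisli$ (Lemma~\ref{lemma.det-circ}(1)), so $S f_\seq = S\, \dom(f)_\seq\, f_\kleisli$, and then push the outer Peleg lifting past the two right factors in turn. For the function $f_\kleisli$ I would prove the auxiliary commutation $(W f_\kleisli)_\seq = W_\seq f_\kleisli$ for arbitrary $W$: postcomposition with the total function $f_\kleisli$ preserves the domain, the univalent refinements of $W f_\kleisli$ are exactly the multirelations $T f_\kleisli$ with $T \subseteq_d W$, and the Kleisli extension law $(T f_\kleisli)_\kleisli = T_\kleisli f_\kleisli$ of Lemma~\ref{lemma.klift-ext} then lets me factor $f_\kleisli$ out of the decomposition $R_\seq = \dom(R)_\seq \bigcup_{T \subseteq_d R} T_\kleisli$. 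For the remaining test $\dom(f)_\seq$, which is the sub-identity on $\Pow Y$ selecting the subsets of $\dom f$, I would verify $(S\, \dom(f)_\seq)_\seq = S_\seq\, \dom(f)_\seq$ directly: both sides relate $\mathcal A$ to $\bigcup_{a\in\mathcal A} B_a$ over choices with $S_{a,B_a}$, the constraint being that every chosen $B_a$ lie in $\dom f$, and this is equivalent to $\bigcup_a B_a \subseteq \dom f$. Chaining the two commutations gives $(S f_\seq)_\seq = (S\, \dom(f)_\seq)_\seq f_\kleisli = S_\seq\, \dom(f)_\seq f_\kleisli = S_\seq f_\seq$.

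The main obstacle is exactly this last identity, and within it the fact that univalence, rather than determinism, suffices. The conceptual point behind both commutation steps is that the subset condition defining $\dom(f)_\seq$ and the union taken by $f_\kleisli$ are \emph{additive} over the outer union $\bigcup_{a\in\mathcal A}$ built by the Peleg lifting, so the two nested existential choices of witnessing functions collapse into a single choice of the sets $B_a$. This additivity is precisely what breaks down for the inner choices of a non-univalent third factor, where the two liftings would select witnesses independently and only the inclusion $\subseteq$ would survive; this is the same asymmetry that makes Peleg composition merely sub-associative in general and identifies univalence of $f$ as the correct hypothesis.
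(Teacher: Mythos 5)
The paper does not prove this lemma at all: it is imported verbatim from \cite{FurusawaKawaharaStruthTsumagari2017}, so there is no in-paper argument to compare against. Judged on its own, your reconstruction is essentially correct and follows the natural route. The reduction of (2) to the first extension law via $(R \seq S) \seq f = R(S_\seq f_\seq) = R(S \seq f)_\seq$ is exactly right, the unit laws are handled correctly (and you rightly note that $\eta R_\seq = R$ needs no univalence), and the crux $(S f_\seq)_\seq = S_\seq f_\seq$ via the factorisation $f_\seq = \dom(f)_\seq f_\kleisli$ from Lemma~\ref{lemma.det-circ}(1) together with the two commutations $(W f_\kleisli)_\seq = W_\seq f_\kleisli$ and $(S\,\dom(f)_\seq)_\seq = S_\seq\,\dom(f)_\seq$ is sound; your closing remark correctly identifies additivity over the outer union as the reason univalence of the third factor suffices and as the exact point where general sub-associativity fails. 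Two details deserve more care than the sketch gives them. First, the claim that the univalent refinements of $W f_\kleisli$ are \emph{exactly} the $T f_\kleisli$ with $T \subseteq_d W$ needs both directions: one uses totality of $f_\kleisli$ so that $\dom(T f_\kleisli) = \dom(T) = \dom(W) = \dom(W f_\kleisli)$, and the converse requires choosing, for each $a$ and each $C$ in the image $f_\kleisli(W(a))$, a preimage $B \in W(a)$ --- a (harmless) choice that should be made explicit, after which Lemma~\ref{lemma.klift-ext} factors $f_\kleisli$ out of the union. Second, in the commutation with the lifted test $\dom(f)_\seq$ you should also treat the degenerate case where some $a \in A$ satisfies $S(a) \neq \emptyset$ but no $B \in S(a)$ has all its elements in the domain of $f$: there the left side vanishes because $a$ drops out of $\dom(S\,\dom(f)_\seq)$, while the right side vanishes because every admissible union violates the test, so the two sides still agree. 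With these points filled in the argument is complete.
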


\begin{proposition}
  \label{proposition.univalent-cat}
  The univalent and the deterministic multirelations form categories with sets as objects, multirelations as arrows, Peleg composition and $1_X$ as identity arrows.
\end{proposition}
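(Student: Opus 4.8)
The plan is to verify the three defining axioms of a category---closure of the arrows under Peleg composition, the left and right unit laws for $1_X$, and associativity---separately for the univalent and the deterministic multirelations, assembling them almost entirely from the structural lemmas already at hand. The key observation organising the whole argument is that every arrow in either candidate category is univalent, so that in any composite $R \seq S \seq T$ the third factor is univalent; this is precisely the hypothesis under which Peleg composition is known to be associative.

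First I would treat the identity arrows. The multirelation $1_X = \eta_X = \{(a,\{a\}) \mid a \in X\}$ is a function, hence both univalent and deterministic, so it is an arrow of either candidate category. For the unit laws, unfolding $R \seq S = R\,S_\seq$ gives $1_X \seq R = \eta_X R_\seq$ and $R \seq 1_Y = R\,(\eta_Y)_\seq$. The extension-system laws of Lemma~\ref{lemma.univalent-assoc}(1) provide $\eta R_\seq = R$ and $\eta_\seq = \Id$, so $(\eta_Y)_\seq = \Id_{\Pow Y}$, and both composites collapse to $R$.

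Next I would establish closure. In the univalent case I write $R \seq S = R\,S_\seq$ and use Lemma~\ref{lemma.det-circ}(1), which for univalent $S$ gives $S_\seq = \dom(S)_\seq S_\kleisli$; since Kleisli liftings are functions, this exhibits $S_\seq$ as univalent, as recorded after that lemma. With $R$ univalent as well and the relational composite of univalent relations again univalent, $R \seq S$ is univalent. In the deterministic case the argument is cleaner: for deterministic $S$, Lemma~\ref{lemma.det-circ}(2) gives $S_\seq = S_\kleisli = \Pow(\alpha(S))$, which is a function, so the composite $R\,S_\seq$ of the functions $R$ and $S_\seq$ is a function, i.e.\ $R \seq S$ is deterministic.

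Finally, for associativity I take three arrows $R$, $S$, $T$ of either class. As $T$ is in particular univalent, Lemma~\ref{lemma.univalent-assoc}(2) applies directly to yield $(R \seq S) \seq T = R \seq (S \seq T)$, and the two categories are complete. I expect associativity to be the only genuinely delicate point, since Peleg composition is not associative in general; the entire proof turns on the fact that all arrows are univalent, which is exactly what rescues associativity, while closure and the unit laws are routine consequences of the extension-system machinery once that is observed.
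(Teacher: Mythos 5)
Your proposal is correct and follows essentially the same route as the paper: associativity via Lemma~\ref{lemma.univalent-assoc}(2) with the univalent third factor, closure via $R \seq S = R S_\seq$ with $S_\seq$ univalent (resp.\ $S_\seq = S_\kleisli$ a function in the deterministic case), and $1_X$ deterministic hence univalent. Your explicit check of the unit laws via $\eta R_\seq = R$ and $\eta_\seq = \Id$ is a small addition the paper leaves implicit, but it does not change the argument.
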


\begin{proof}
  Peleg composition of univalent and therefore deterministic multirelations is associative by Lemma~\ref{lemma.univalent-assoc}.
  Multirelations $1_X$ are deterministic and hence univalent.
  It remains to show that $\seq$ preserves univalence and determinism.
  If $R$ and $S$ are composable univalent multirelations, then $R \seq S= R S_\seq$ is univalent because $S_\seq$ is univalent and relational composition preserves univalence.
  If $R$ and $S$ are also total, then $R \seq S = R S_\kleisli$ is total, because $S_\kleisli$ is deterministic and relational composition preserves totality.
\end{proof}

See Proposition~\ref{proposition.peleg-rel} for an alternative proof for deterministic multirelations and the end of Section~\ref{SS:det-cat} for a more thorough structural analysis of related properties, including the relationship of the Kleisli lifting of multirelations with the Kleisli category of the powerset functor.

Definitions of inner univalence, inner totality and inner determinism depend on inner operations on multirelations, which have been studied in detail in~\cite{FurusawaGuttmannStruth2023a}, based on previous work in~\cite{Rewitzky2003,FurusawaStruth2015a,FurusawaStruth2016}.
Algebraic definitions relative to a small basis can be found in Appendix~\ref{section.basis}.

For $R, S : X \rto \Pow Y$, one can define
\begin{itemize}
\item \emph{inner union} $R \iu S = \{ (a,A \cup B) \mid R_{a,A} \wedge S_{a,B} \}$ with unit $\iuone = \{ (a,\emptyset) \mid a \in X \}$,
\item \emph{inner complementation} $\icpl{R} = RC = \{ (a,-A) \mid R_{a,A} \}$,
\item the set of \emph{atoms} $\iuatoms = \{ (a,\{b\}) \mid a \in X \wedge b \in Y \}$ in $\Mult(X,Y)$.
\end{itemize}
The \emph{inner intersection} and its unit are then obtained as $R \ii S = \icpl(\icpl{R} \iu \icpl{S})$ and $\iione = \icpl{\iuone}$.
The operations $\iu$ and $\ii$ are associative and commutative, but need not be idempotent.

The multirelation $R:X\rto Y$ is
\begin{itemize}
  \item \emph{inner total} if $R \subseteq -\iuone$, that is, $B$ is non-empty for each $(a,B) \in R$,
  \item \emph{inner univalent} if $R \subseteq \iuatoms \cup \iuone$, that is, $B$ is either a singleton or empty for each $(a,B) \in R$,
  \item \emph{inner deterministic} if it is inner total and inner univalent, in which case $B \subseteq Y$ is a singleton set whenever $R_{a,B}$ for some $a \in X$.
\end{itemize}

Sets of inner univalent, total and deterministic multirelations can be characterised as fixpoints.

\begin{lemma}[{\cite[Lemma 3.9]{FurusawaGuttmannStruth2023a}}]
  \label{lemma.det-fix-ref}
  ~
  \begin{enumerate}
  \item The inner univalent multirelations are the fixpoints of $(-) \cap (\iuatoms \cup \iuone)$.
  \item The inner total multirelations are the fixpoints of $(-) - \iuone$.
  \item The inner deterministic multirelations are the fixpoints of $(-) \cap \iuatoms$ and $(-) 1^\converse 1$.
  \end{enumerate}
\end{lemma}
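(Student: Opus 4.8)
The plan is to reduce each statement to its defining inclusion and the elementary lattice fact that, in the boolean algebra $\Mult(X,Y)$, a multirelation $R$ is a fixpoint of $(-) \cap T$ exactly when $R \subseteq T$, and a fixpoint of $(-) - T$ exactly when $R \cap T = \emptyset$, that is, $R \subseteq -T$. For (1), inner univalence is by definition the inclusion $R \subseteq \iuatoms \cup \iuone$, so $R = R \cap (\iuatoms \cup \iuone)$ holds precisely for inner univalent $R$. For (2), inner totality is by definition $R \subseteq -\iuone$, equivalently $R \cap \iuone = \emptyset$, so $R = R - \iuone$ holds precisely for inner total $R$. Both statements are thus immediate translations of subset conditions into fixpoint equations.

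For (3) I would first merge the two defining inclusions of inner determinism into one. Since atoms are non-empty we have $\iuatoms \subseteq -\iuone$, hence $(\iuatoms \cup \iuone) \cap -\iuone = \iuatoms$, and so $R$ is inner total and inner univalent if and only if $R \subseteq \iuatoms$. The lattice fact then identifies the fixpoints of $(-) \cap \iuatoms$ with the inner deterministic multirelations. It remains to show that $(-) 1^\converse 1$, with $1 = \eta_Y$, has the same fixpoints, for which I would establish the pointfree identity $R 1^\converse 1 = R \cap \iuatoms$. The key observation is the factorisation $\iuatoms = U_{X,Y} 1$, obtained by expanding definitions. As $1$ is a function and hence outer univalent, the relational law $P Q \cap S = (P \cap S Q^\converse) Q$ for outer univalent $Q$ applies with $P = U_{X,Y}$, $Q = 1$ and $S = R$, yielding $R \cap \iuatoms = U_{X,Y} 1 \cap R = (U_{X,Y} \cap R 1^\converse) 1 = R 1^\converse 1$, the final step using $R 1^\converse \subseteq U_{X,Y}$. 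Hence $R = R 1^\converse 1$ if and only if $R = R \cap \iuatoms$ if and only if $R$ is inner deterministic.

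I expect the only step with real content to be the identity $R 1^\converse 1 = R \cap \iuatoms$; everything else is bookkeeping with subset definitions. The mild obstacle is spotting the factorisation $\iuatoms = U_{X,Y} 1$ and choosing the univalent modular law, which converts intersection with the atoms into postcomposition with $1^\converse 1$ in a single step and avoids a more tedious pointwise computation of $1^\converse 1$ as the singleton-restricted identity on $\Pow Y$.
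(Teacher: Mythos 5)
Your proposal is correct: parts (1) and (2) are indeed immediate translations of the defining inclusions into fixpoint equations, and for (3) the identity $R 1^\converse 1 = U 1 \cap R = R \cap \iuatoms$ via the factorisation $\iuatoms = U 1$ and the modular law for the univalent relation $1$ is exactly the right pointfree step. The paper itself gives no proof here --- it imports the lemma from the first article of the trilogy --- but your argument is self-contained and in the same algebraic style the paper uses elsewhere (e.g.\ the computation $\eta(R) = R1 \subseteq U1 = \iuatoms$ in Lemma~\ref{lemma.det-char} and the use of $\iuatoms = U1$ in Lemma~\ref{lemma.iuni-props}), so nothing further is needed.
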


We also need the following closures and inner preorder, which compare the inner nondeterminism of multirelations.
For $R : X \rto \Pow Y$,
\begin{itemize}
\item the \emph{up-closure} $\up{R} = R \iu U = R \Omega = \{ (a,A) \mid \exists (a,B) \in R .\ B \subseteq A \}$ and the \emph{Smyth preorder} $R \subs S \Leftrightarrow S \subseteq \up{R}$ with equivalence $\eqs$,
\item the \emph{down-closure} $\down{R} = R \ii U = R \Omega^\converse = \{ (a,A) \mid \exists (a,B) \in R .\ A \subseteq B \}$ and the \emph{Hoare preorder} $R \subh S \Leftrightarrow R \subseteq \down{S}$ with equivalence $\eqh$.
\end{itemize}
The \emph{convex closure} can then be defined as $\convex{R} = \up{R} \cap \down{R}$, and the \emph{Egli-Milner preorder} as $R \subem S \Leftrightarrow R \subh S \wedge R \subs S$ with equivalence $\eqem$.
The up-closure and down-closure are related by inner duality.
Using up-closure, ${\in} = \up{1}$.
Moreover, $(\down{1})_\seq = (1 \cup \iuone)_\seq =\Omega^\converse$, and thus $\down{R} = R \seq \down{1}$ for all $R : X \rto \Pow Y$.
See Appendix~\ref{section.basis} and~\cite{FurusawaGuttmannStruth2023a} for context.


\section{Deterministic Multirelations}
\label{section.determinism}

Relations $X \rto Y$ embed into multirelations $X \rto \Pow Y$ in two natural ways: postcomposition with $1_Y$ lifts the elements in $Y$ to singleton sets in $\Pow Y$; taking the power transpose $\Lambda$ represents the standard equivalent nondeterministic function as a multirelation $X \rto \Pow X$.
The first embedding yields an inner deterministic multirelation, the second an outer deterministic one:
\begin{equation*}
  \begin{tikzcd}[sep=large]
    \{ (a,R(a)) \mid a \in X \} & R \ar[r, mapsto, "(-) 1_Y"] \ar[l, mapsto, "\Lambda"'] & \{ (a,\{b\}) \mid R_{a,b} \} .
  \end{tikzcd}
\end{equation*}
These embeddings extend to isomorphisms between the categories $\Rel$, categories of inner deterministic multirelations with Kleisli composition as arrow compositions and identity arrows $1_X$ and categories of outer deterministic multirelations with the same composition and identity arrows.
As $1_X$ is the unit of the powerset monad in relational form, we henceforth also write $\eta_X = (-) 1_X$.

The functions $\Lambda$, $\alpha$ and $\eta$ become functors in this setting.
The isomorphism between $\Rel$ and the category of outer deterministic multirelations is just that between $\Rel$ and the Kleisli category of the powerset monad in relational form.
That between $\Rel$ and the category of inner deterministic multirelations is trivial.
A formal proof that inner and outer deterministic multirelations form categories, however, requires some work.

Recall that a \emph{quantaloid} is a category in which each homset forms a complete lattice and where arrow composition preserves arbitrary sups in both arguments.

\subsection{Bijections between relations and deterministic multirelations}
\label{SS:det-bij}

First we study the bijections between relations and outer and inner deterministic multirelations in detail.
The results for outer deterministic relations are known.
Those for inner deterministic relations are new, but rather obvious.

\begin{lemma}
  \label{lemma.det-char}
  For every $R : X \rto Y$, $\Lambda(R)$ is outer deterministic and $\eta(R)$ inner deterministic.
\end{lemma}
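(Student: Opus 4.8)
The plan is to treat the two assertions separately, since they concern opposite levels of nondeterminism and rely on different machinery already set up above. The outer claim is essentially the known fact that the power transpose lands in functions, so I would derive it algebraically from the results on $\Lambda$ and $\Pow$; the inner claim I would reduce to the fixpoint characterisation of inner determinism.

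For outer determinism of $\Lambda(R)$, first I would invoke Lemma~\ref{lemma.pow-props}(2), which gives $\Lambda(R) = \eta \Pow(R)$. Here $\eta_X = \{(a,\{a\}) \mid a \in X\}$ is visibly the graph of a total single-valued map, hence a function, and $\Pow(R)$ is deterministic by its definition as the relational image functor. Since relational composition preserves both outer totality ($\Id_X \subseteq R R^\converse$) and outer univalence ($R^\converse R \subseteq \Id$), the composite $\eta \Pow(R)$ is a function, and therefore $\Lambda(R)$ is outer deterministic. Equivalently, one may simply read off from the defining formula $\Lambda(R) = \{(a,R(a)) \mid a \in X\}$ that $\Lambda(R)$ is the graph of the total map $a \mapsto R(a)$, which is the known statement recalled when $\Lambda$ was introduced.

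For inner determinism of $\eta(R) = R 1_Y = \{(a,\{b\}) \mid R_{a,b}\}$, I would show $\eta(R) \subseteq \iuatoms$ and then appeal to Lemma~\ref{lemma.det-fix-ref}(3), by which a multirelation $M$ is inner deterministic exactly when it is a fixpoint of $(-) \cap \iuatoms$, that is, when $M \subseteq \iuatoms$. The inclusion $\eta(R) \subseteq \iuatoms$ holds because every pair of $R 1_Y$ has a singleton as its second component: indeed $1_Y \subseteq \iuatoms$, and left composition with $R$ cannot enlarge these singletons, so $R 1_Y \subseteq \iuatoms$. Note that no assumption on the outer totality or univalence of $R$ is needed, since inner determinism constrains only the second components of the pairs actually present in $\eta(R)$.

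I do not expect a genuine obstacle here; both parts are short. The only point requiring care is to align the relation-algebraic and fixpoint formulations with the two constructions rather than re-proving standard facts: for the outer part, that a composition of functions is again a function (totality and univalence being preserved), and for the inner part, that the fixpoint condition $M \subseteq \iuatoms$ of Lemma~\ref{lemma.det-fix-ref}(3) captures exactly the inner univalence and totality we want, so that the one-line inclusion $\eta(R) \subseteq \iuatoms$ closes the argument.
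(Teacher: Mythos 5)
Your proposal is correct and takes essentially the same route as the paper: the paper likewise dispatches the outer claim as the well-known fact that $\Lambda$ lands in functions, and proves the inner claim by the very inclusion $\eta(R) = R 1 \subseteq U 1 = \iuatoms$ (monotonicity of composition), which together with Lemma~\ref{lemma.det-fix-ref}(3) closes the argument exactly as you describe. One small caution: deriving the outer part from $\Lambda(R) = \eta \Pow(R)$ with ``$\Pow(R)$ is deterministic by definition'' is mildly circular, since $\Pow(R) = \Lambda({\ni}\,R)$ and its determinism is an instance of the claim being proved; your alternative of reading $\Lambda(R) = \{(a,R(a)) \mid a \in X\}$ directly as the graph of the total map $a \mapsto R(a)$ is the non-circular version and is precisely what the paper's citation amounts to.
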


\begin{proof}
  This is well known for outer determinism~\cite{BirdMoor1997}.
  For inner determinism,
  \begin{equation*}
    \eta(R) = R 1 \subseteq U 1 = \iuatoms
  \end{equation*}
  since composition preserves the order.
\end{proof}

Recall from Lemma~\ref{lemma.lambda-alpha-props} that $\alpha \circ \Lambda = \id_{X \rto Y}$, while $\Lambda \circ \alpha = \id_{X \rto \Pow Y}$ on outer deterministic multirelations.
Hence $\Lambda$ and $\alpha$ form a bijective pair between relations and outer deterministic multirelations.
A similar fact holds for $\eta$ and $\alpha$.
Before proving it, we mention a technical lemma.

\begin{lemma}
  \label{lemma.inner-det-props}
  Let $R$, $S$, $T$ be multirelations of appropriate types and $R$ inner deterministic.
  Then
  \begin{enumerate}
  \item $R \seq S = R 1^\converse S$,
  \item $R \seq (S \seq T)= (R \seq S) \seq T$,
  \item $\alpha(R) = R 1^\converse$.
  \end{enumerate}
\end{lemma}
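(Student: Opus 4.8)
The plan is to prove the three parts in the order (3), (1), (2), since (1) rests on the fixpoint description of inner determinism and (2) then follows almost immediately from (1). Throughout I will exploit that an inner deterministic $R$ satisfies the fixpoint equation $R = R 1^\converse 1$ supplied by Lemma~\ref{lemma.det-fix-ref}(3).

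For (3), I would simply unfold $\alpha = (-){\ni}$ and reuse the fixpoint equation together with the identity $1{\ni} = \alpha(\eta) = \Id$ from Lemma~\ref{lemma.pow-props}(6):
\[
  \alpha(R) = R{\ni} = R 1^\converse 1 {\ni} = R 1^\converse \Id = R 1^\converse .
\]
Intuitively this just records that on singletons the has-element relation ${\ni}$ and $1^\converse$ coincide; the algebraic route via $1{\ni} = \Id$ avoids any element chase.

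For (1), I would start from the definition $R \seq S = R S_\seq$, pull the unit to the front using the same fixpoint equation, and then let it absorb the Peleg lifting through the extension-system law $\eta S_\seq = S$ of Lemma~\ref{lemma.univalent-assoc}(1):
\[
  R \seq S = R S_\seq = R 1^\converse 1 S_\seq = R 1^\converse S .
\]
Should the extension law feel too slick, (1) can instead be verified element-wise: inner determinism forces every $B$ with $R_{a,B}$ to be a singleton $\{b\}$, which collapses the witnessing function $f$ in the definition of $\seq$ to the single value $f(b)$ and yields $R \seq S = \{(a,C) \mid \exists b.\ R_{a,\{b\}} \wedge S_{b,C}\} = R 1^\converse S$.

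For (2), I would apply (1) twice. As $R$ is inner deterministic, (1) applies with second factor $S \seq T$, giving $R \seq (S \seq T) = R 1^\converse (S \seq T) = R 1^\converse S T_\seq$ after unfolding $S \seq T = S T_\seq$; on the other side, the definition of Peleg composition and (1) give $(R \seq S) \seq T = (R \seq S) T_\seq = R 1^\converse S T_\seq$, so both bracketings coincide by associativity of relational composition. I expect the only real obstacle to be part (1): the key realisation is that the fixpoint form $R = R 1^\converse 1$ separates the genuinely relational content $R 1^\converse$ from the unit $1$, and that the unit absorbs $(-)_\seq$. This also explains why full associativity is recovered even though general Peleg composition is merely sub-associative, $(R \seq S) \seq T \subseteq R \seq (S \seq T)$: an inner deterministic leftmost factor reduces both sides to the single composite $R 1^\converse S T_\seq$, in which $(-)_\seq$ survives only on the last factor and ordinary associative relational composition takes over.
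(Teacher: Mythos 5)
Your proposal is correct. For part (3) your calculation $\alpha(R) = R{\ni} = R1^\converse 1{\ni} = R1^\converse$ is exactly the paper's proof, which likewise combines the fixpoint equation of Lemma~\ref{lemma.det-fix-ref}(3) with $1{\ni} = \alpha(\eta) = \Id$ from Lemma~\ref{lemma.pow-props}. For parts (1) and (2) the paper gives no in-text argument at all (it cites the first article of the trilogy), so your self-contained derivations are a genuine addition; they are sound, since the unit law $\eta S_\seq = S$ in Lemma~\ref{lemma.univalent-assoc}(1) holds for arbitrary composable $S$ (only $f$ there is required to be univalent), which is precisely what lets $R = R1^\converse 1$ absorb the Peleg lifting in $R S_\seq = R1^\converse(1 S_\seq) = R1^\converse S$, after which (2) reduces both bracketings to $R1^\converse S T_\seq$ by ordinary associativity of relational composition. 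Your closing remark correctly identifies why full associativity is recovered here despite general Peleg composition being only sub-associative.
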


\begin{proof}
  For (1) and (2), see~\cite{FurusawaGuttmannStruth2023a}.
  For (3), $\alpha(R) = R 1^\converse 1 {\ni} = R 1^\converse \alpha(1) = R 1^\converse$ using Lemma~\ref{lemma.pow-props}.
\end{proof}

\begin{lemma}
  \label{lemma.bij-alpha-one}
  The functions $\alpha$ and $\eta$ form a bijective pair between relations and inner deterministic multirelations.
\end{lemma}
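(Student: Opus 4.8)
The plan is to establish the bijection through four facts: that $\eta$ sends relations to inner deterministic multirelations, that $\alpha$ sends inner deterministic multirelations back into $\Rel$, and that the two composites $\alpha \circ \eta$ and $\eta \circ \alpha$ are the respective identity maps. The first fact is already supplied by Lemma~\ref{lemma.det-char}, which shows $\eta(R) = R1$ is inner deterministic for every $R : X \rto Y$. The second is immediate, since $\alpha = (-){\ni}$ is defined on all multirelations and takes values in $\Rel$. Hence the real content lies in the two round-trip equations, and both reduce to earlier results after a short rewriting.

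For $\alpha \circ \eta = \id$ on $X \rto Y$, I would compute directly, using only the definition $\alpha = (-){\ni}$ and associativity of relational composition:
\begin{equation*}
  \alpha(\eta(R)) = \alpha(R1) = R\,1\,{\ni} = R\,\alpha(1).
\end{equation*}
Then $\alpha(1) = \Id$ by Lemma~\ref{lemma.pow-props}(6), so $\alpha(\eta(R)) = R$. This direction is purely formal and requires no assumption on $R$.

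For $\eta \circ \alpha = \id$ on inner deterministic multirelations, let $R : X \rto \Pow Y$ be inner deterministic. The idea is to express $\alpha$ in its relational form on such multirelations: by Lemma~\ref{lemma.inner-det-props}(3) we have $\alpha(R) = R 1^\converse$, and therefore
\begin{equation*}
  \eta(\alpha(R)) = \alpha(R)\,1 = R\,1^\converse\,1.
\end{equation*}
The key observation is that, by Lemma~\ref{lemma.det-fix-ref}(3), the inner deterministic multirelations are exactly the fixpoints of $(-)\,1^\converse\,1$. Consequently $R\,1^\converse\,1 = R$, and the composite is the identity on this class.

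I expect the only mildly nontrivial point to be recognising which earlier lemmas do the work, in particular that $\eta \circ \alpha = \id$ on inner deterministic multirelations is nothing but the fixpoint equation $R\,1^\converse\,1 = R$ once $\alpha(R)$ has been rewritten as $R 1^\converse$. There is no genuine obstacle: the substantive facts — the relational forms of $\alpha$ and $\eta$ and the fixpoint characterisation of inner determinism — have already been secured in Lemmas~\ref{lemma.inner-det-props}(3), \ref{lemma.det-fix-ref}(3) and~\ref{lemma.pow-props}(6), so that assembling them into the bijection leaves essentially no calculation to perform.
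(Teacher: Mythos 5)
Your proof is correct and follows exactly the paper's argument: both round-trip identities are verified the same way, using $\alpha(1)=\Id$ for $\alpha\circ\eta=\id$ and combining Lemma~\ref{lemma.inner-det-props}(3) with the fixpoint characterisation of Lemma~\ref{lemma.det-fix-ref}(3) for $\eta\circ\alpha=\id$ on inner deterministic multirelations. No differences worth noting.
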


\begin{proof}
  We need to check $\alpha \circ \eta_Y = \id_{X \rto Y}$ on relations and $\eta_Y \circ \alpha = \id_{X \rto \Pow Y}$ on inner deterministic multirelations.
  For the first identity, $\alpha(\eta(R)) = \alpha(R 1) = R \alpha(1) = R \, \Id = R$.
  For the second one, $\eta(\alpha(R)) = R 1^\converse 1 = R$, using Lemma~\ref{lemma.det-fix-ref}(3) and Lemma~\ref{lemma.inner-det-props}(3).
\end{proof}

The commutative diagram from Section~\ref{subsection.binary-relations} can thus be expanded:

\begin{equation*}
  \begin{tikzcd}
    X \rto \Pow Y \ar[r, "\id"] \ar[d, "\alpha"'] & X \rto \Pow Y \ar[d, "\alpha"] \\
    X \rto Y \ar[r, "\id"] \ar[ur, "\Lambda"] \ar[dr, "\eta"] & X \rto Y \\
    X \rto \Pow Y \ar[r, "\id"'] \ar[u, "\alpha"] & X \rto \Pow Y \ar[u, "\alpha"']
  \end{tikzcd}
\end{equation*}
The multirelations in the upper and lower rows are outer and inner deterministic, respectively.

Lemma~\ref{lemma.bij-alpha-one} implies that $S = \eta(R) \Leftrightarrow R = \alpha(S)$ holds for any $R : X \rto Y$ and inner deterministic $S : X \rto \Pow Y$.

\subsection{Categories of deterministic multirelations}
\label{SS:det-cat}

The bijections between relations and deterministic multirelations extend to isomorphisms between categories and quantaloids.
The maps $\Lambda$, $\alpha$ and $\eta$ become functors, and in fact isomorphisms.

\begin{lemma}
  \label{lemma.det-prop}
  Let $R : X \rto Y$ and $S : Y \rto Z$.
  Then
  \begin{enumerate}
  \item $\Lambda(R S) = \Lambda(R) \seq \Lambda(S)$ and $\Lambda(\Id_X) = 1_X$,
  \item $\eta(R S) = \eta(R) \seq \eta(S)$ and $\eta(\Id_X) = 1_X$,
  \item $\alpha (R \seq S) = \alpha(R) \alpha(S)$ and $\alpha(1_X) = \Id_X$ if $R$ and $S$ are inner or outer deterministic.
  \end{enumerate}
\end{lemma}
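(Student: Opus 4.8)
All three parts express functoriality---preservation of composition and of identities---for $\Lambda$, $\eta$ and $\alpha$. The plan is to establish (1) and (2) by direct calculation and then to derive (3) from them using the bijections of Lemmas~\ref{lemma.lambda-alpha-props} and~\ref{lemma.bij-alpha-one}. One preliminary point deserves attention: in (3) the symbols $R$ and $S$ must be reread as deterministic multirelations of types $X \rto \Pow Y$ and $Y \rto \Pow Z$, since $\seq$ and $\alpha$ operate on multirelations rather than on relations, whereas in (1) and (2) they are relations.

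For (1), the identity $\Lambda(\Id_X) = 1_X$ holds by definition, as $1_X = \eta_X = \Lambda(\Id_X)$. For composition I would unfold the Peleg product and exploit that $\Lambda(S)$ is outer deterministic (Lemma~\ref{lemma.det-char}): then Lemma~\ref{lemma.det-circ}(2) collapses its Peleg lifting to its Kleisli lifting, and unfolding $(-)_\kleisli = \Pow(\alpha(-))$ together with $\alpha(\Lambda(S)) = S$ gives $\Lambda(R) \seq \Lambda(S) = \Lambda(R) \Lambda(S)_\kleisli = \Lambda(R) \Pow(S)$. The claim then follows from $\Lambda(R)\Pow(S) = \Lambda(RS)$ (Lemma~\ref{lemma.pow-props}(1)).

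For (2), the identity $\eta(\Id_X) = \Id_X 1_X = 1_X$ is immediate from $\eta(-) = (-)1$. For composition, $\eta(R)$ is inner deterministic (Lemma~\ref{lemma.det-char}), so Lemma~\ref{lemma.inner-det-props}(1) rewrites the Peleg product as a relational one: $\eta(R) \seq \eta(S) = \eta(R) 1^\converse \eta(S) = R 1 1^\converse S 1$. Since the unit $1 = \eta$ is a total and injective function we have $1 1^\converse = \Id$, so the middle factor cancels and the expression reduces to $R S 1 = \eta(RS)$.

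For (3), the identity $\alpha(1_X) = \Id_X$ is Lemma~\ref{lemma.pow-props}(6). For composition I would treat the outer- and inner-deterministic cases separately; this suffices for the intended category isomorphisms, where all arrows within a category share the same type. If $R$ and $S$ are outer deterministic then $R = \Lambda(\alpha(R))$ and $S = \Lambda(\alpha(S))$, so part (1) gives $R \seq S = \Lambda(\alpha(R)\alpha(S))$, and applying $\alpha$ with $\alpha \circ \Lambda = \id$ yields $\alpha(R \seq S) = \alpha(R)\alpha(S)$. If $R$ and $S$ are inner deterministic, the identical argument with $\eta$ in place of $\Lambda$, drawing on part (2) and Lemma~\ref{lemma.bij-alpha-one}, gives the result. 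None of these steps is deep; the main friction lies in (2), where the reduction to relational composition depends on the cancellation $1 1^\converse = \Id$, and in the type reinterpretation needed for (3)---both easy to overlook rather than genuinely hard.
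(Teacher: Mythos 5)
Your proposal is correct and follows essentially the same route as the paper: (1) via $\Lambda(RS)=\Lambda(R)\Pow(S)$ and the collapse of $\Lambda(S)_\seq$ to $\Lambda(S)_\kleisli=\Pow(S)$, and (3) by transferring the composition law through the bijective pairs $(\Lambda,\alpha)$ and $(\eta,\alpha)$. The only deviation is in (2), where the paper cancels the middle factor using the Peleg unit law $1 \seq S1 = S1$ while you use $\eta(R)\seq\eta(S)=R11^\converse S1$ together with $11^\converse=\Id$; both steps are justified (the latter follows, e.g., from $\alpha(\eta(R))=R11^\converse=R$), so this is an equally valid micro-variation rather than a different argument.
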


\begin{proof}
  For (1), $\Lambda(R S) = \Lambda(R) \Pow(S) = \Lambda(R) \Lambda(S)_\kleisli = \Lambda(R) \Lambda(S)_\seq = \Lambda(R) \seq \Lambda(S)$, where the second step holds by $\Lambda(S)_\kleisli = \Pow(\Lambda(S)) \mu = \Pow(\Lambda(S)) \Pow({\ni}) = \Pow(\Lambda(S){\ni}) = \Pow(S)$.

  For (2), $\eta(R) \seq \eta(S) = R 1 \seq S 1 = R 1 (S 1)_\seq = R (1 \seq S 1) = R S 1 = \eta(R S)$.

  For (3), suppose $R$ and $S$ are outer deterministic.
  Then $R \seq S$ is outer deterministic by Proposition~\ref{proposition.univalent-cat}.
  Hence $\alpha(R) \alpha(S) = \alpha(R \seq S)$ if and only if $\Lambda(\alpha(R) \alpha(S)) = R \seq S$ because $\Lambda$ and $\alpha$ form a bijective pair.
  Using this property again with (1),
  \begin{equation*}
    \Lambda(\alpha(R) \alpha(S)) = \Lambda(\alpha(R)) \seq \Lambda(\alpha(R)) = R \seq S.
  \end{equation*}
  The proof for inner determinism, $\alpha$ and $\eta$ is similar.
  In particular, if $R$ and $S$ are inner deterministic, then so is $R \seq S$, because $R \seq S = R \seq S 1^\converse 1 = R 1^\converse S 1^\converse 1 = (R \seq S) 1^\converse 1$, using Lemmas~\ref{lemma.det-fix-ref} and~\ref{lemma.inner-det-props}.
\end{proof}

\begin{proposition}
  \label{prop.det-cat}
  The inner and the outer deterministic multirelations form categories with respect to Peleg composition and the $1_X$.
  Both are isomorphic to $\Rel$.
\end{proposition}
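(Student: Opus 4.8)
The plan is to treat both halves of the statement at once by exhibiting $\Lambda$ and $\eta$ as isomorphisms of categories from $\Rel$ onto the outer and inner deterministic multirelations, respectively. Since Peleg composition is not associative in general, I will not verify the category axioms by brute force on the deterministic classes; instead I will transport the axioms of $\Rel$ along these homset bijections, which simultaneously yields categoryhood of each class and the claimed isomorphism.

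First I would assemble the ingredients that are already available. By Lemma~\ref{lemma.det-char}, $\Lambda(R)$ is outer deterministic and $\eta(R)$ inner deterministic for every relation $R$, so both maps land in the intended classes; in particular $1_X = \Lambda(\Id_X) = \eta(\Id_X)$ by Lemma~\ref{lemma.det-prop}(1),(2), so the proposed identity arrows are both outer and inner deterministic. By (the proof of) Proposition~\ref{proposition.univalent-cat} and Lemma~\ref{lemma.det-prop}(3), Peleg composition of two outer deterministic multirelations is again outer deterministic, and likewise for inner deterministic ones, so $\seq$ restricts to each class. Finally, $(\Lambda,\alpha)$ and $(\eta,\alpha)$ are bijective pairs between relations and the outer, respectively inner, deterministic multirelations by Lemma~\ref{lemma.lambda-alpha-props} and Lemma~\ref{lemma.bij-alpha-one}, so each of $\Lambda$ and $\eta$ is a bijection on every homset.

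Next I would transport the category axioms. Because $\Lambda$ is a homset bijection intertwining relational and Peleg composition, $\Lambda(RS) = \Lambda(R)\seq\Lambda(S)$, and sending $\Id_X$ to $1_X$, every outer deterministic $P,Q,T$ can be written $P=\Lambda(A)$, $Q=\Lambda(B)$, $T=\Lambda(C)$ for relations $A,B,C$, and the identity $(AB)C = A(BC)$ in $\Rel$ gives
\begin{equation*}
  (P \seq Q) \seq T = \Lambda((AB)C) = \Lambda(A(BC)) = P \seq (Q \seq T),
\end{equation*}
while $1_X \seq P = \Lambda(\Id_X A) = \Lambda(A) = P$ and the symmetric computation deliver both unit laws. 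Hence the outer deterministic multirelations form a category, and $\Lambda$ is a functor that is the identity on objects and bijective on homsets, i.e.\ an isomorphism $\Rel \cong (\text{outer deterministic multirelations})$ with inverse induced by $\alpha$. The argument for the inner deterministic multirelations is identical, with $\eta$ in place of $\Lambda$, using Lemma~\ref{lemma.det-prop}(2) and Lemma~\ref{lemma.bij-alpha-one}.

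I expect no genuine obstacle, since the substantive content is already packaged in Lemmas~\ref{lemma.det-char}--\ref{lemma.det-prop}. The only point requiring care is precisely the failure of associativity for general Peleg composition: associativity must be \emph{derived} on the deterministic classes rather than assumed, either by the transport argument above or, alternatively, by invoking Lemma~\ref{lemma.univalent-assoc} after observing that deterministic multirelations are univalent. I prefer the transport formulation because it produces the isomorphisms with $\Rel$ as a byproduct rather than as a separate step.
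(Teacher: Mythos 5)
Your proposal is correct and follows essentially the same route as the paper: the paper also writes each outer deterministic $R$ as $\Lambda(\alpha(R))$, transports associativity along the composition-preserving bijection of Lemma~\ref{lemma.det-prop}, and obtains the isomorphism with $\Rel$ from the fact that $\Lambda$, $\eta$ and $\alpha$ are fully faithful, identity-on-objects functors. Your explicit attention to the unit laws and to deriving (rather than assuming) associativity matches the paper's intent exactly.
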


\begin{proof}
  The bijections between relations, inner deterministic multirelations and outer deterministic multirelations together with Lemma~\ref{lemma.det-prop} imply that inner and outer determinism are closed under Peleg composition and that the $1_X$ are inner and outer deterministic.

  For associativity of Peleg composition, suppose $R$, $S$, $T$ are outer deterministic and composable.
  Then $R = \Lambda(\alpha(R))$ and likewise for $S$ and $T$.
  Hence
  \begin{align*}
      (R \seq S) \seq T
    & = (\Lambda(\alpha(R)) \seq \Lambda(\alpha(R))) \seq \Lambda(\alpha(T)) \\
    & = \Lambda (\alpha(R) \alpha(S) \alpha(T)) \\
    & = \Lambda(\alpha(R)) \seq (\Lambda(\alpha(R)) \seq \Lambda(\alpha(T))) \\
    & = R \seq (S \seq T),
  \end{align*}
  using Lemma~\ref{lemma.det-prop}.
  The proof of associativity for inner deterministic multirelations is similar.
  Hence both sets form categories with identity arrows $1_X$.

  For the isomorphisms, recall that $\Lambda$, $\alpha$ and $\eta$ form bijective pairs that preserve compositions by Lemma~\ref{lemma.det-prop}.
  So $\Lambda$ is a fully faithful functor from $\Rel$ to the category of outer deterministic multirelations, $\eta$ a fully faithful functor from $\Rel$ to the category of inner deterministic multirelations and $\alpha$ a fully faithful functor in the other directions.
  (The object components of these functors are identities, viewing multirelations $X \rto \Pow Y$ in $\Rel$ as arrows from $X$ to $Y$.)
\end{proof}

Peleg composition is thus a faithful representation of Kleisli composition of nondeterministic functions modelled as outer deterministic multirelations, and of relational composition of relations modelled as inner deterministic multirelations.

\begin{lemma}
  \label{lemma.lambda-em-isotone}
  ~
  \begin{enumerate}
  \item For relations $R$ and $S$, $R \subseteq S$ implies $\eta(R) \subseteq \eta(S)$ and $\Lambda(R) \subem \Lambda(S)$.
  \item For multirelations $R$ and $S$, $R \subseteq S$ implies $\alpha(R) \subseteq \alpha(S)$.
  \end{enumerate}
\end{lemma}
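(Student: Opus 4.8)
The first claim of part (1) and all of part (2) are instances of monotonicity of relational composition. Since $\eta = (-)1$ and $\alpha = (-){\ni}$ are both given by postcomposition with a fixed relation, and relational composition preserves the inclusion order, $R \subseteq S$ immediately yields $\eta(R) = R1 \subseteq S1 = \eta(S)$ and $\alpha(R) = R{\ni} \subseteq S{\ni} = \alpha(S)$.

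The substance of the lemma is the second claim of part (1). The map $\Lambda$ is \emph{not} monotone for $\subseteq$: enlarging $R$ to $S$ enlarges each image from $R(a)$ to $S(a)$, but $\Lambda$ records these images as second components, so $\Lambda(R)$ and $\Lambda(S)$ are in general incomparable under ordinary inclusion. The correct comparison is the Egli-Milner preorder, and the plan is to unfold $\Lambda(R) \subem \Lambda(S)$ into its Hoare and Smyth components and verify each separately, using the pointwise description $\Lambda(R) = \{(a, R(a)) \mid a \in X\}$ and the consequence $R(a) \subseteq S(a)$ of $R \subseteq S$.

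For the Hoare part I would compute $\down{\Lambda(S)} = \{(a,A) \mid A \subseteq S(a)\}$ from the definition $\down{\Lambda(S)} = \Lambda(S)\Omega^\converse$, and observe that the single pair $(a, R(a))$ of $\Lambda(R)$ lies in it because $R(a) \subseteq S(a)$; hence $\Lambda(R) \subseteq \down{\Lambda(S)}$, that is, $\Lambda(R) \subh \Lambda(S)$. For the Smyth part I would dually compute $\up{\Lambda(R)} = \{(a,A) \mid R(a) \subseteq A\}$ and note that the pair $(a, S(a))$ of $\Lambda(S)$ lies in it, again because $R(a) \subseteq S(a)$; hence $\Lambda(S) \subseteq \up{\Lambda(R)}$, that is, $\Lambda(R) \subs \Lambda(S)$. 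Together these give $\Lambda(R) \subem \Lambda(S)$. An algebraic variant of the Smyth step avoids pointwise reasoning: by Lemma~\ref{lemma.rel-mod-props}(2), $\up{\Lambda(R)} = \Lambda(R)\Omega = R^\converse \backslash {\in}$, so $\Lambda(S) \subseteq \up{\Lambda(R)}$ reduces via the residual adjunction to $R^\converse \Lambda(S) \subseteq {\in}$, which follows from $R \subseteq S$ since $\Lambda(S)$ sends each $a$ to $S(a)$.

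The only real obstacle is conceptual rather than computational: recognising that the target ordering must be the Egli-Milner preorder rather than $\subseteq$, and getting the two inclusions to point in the correct directions, namely $\Lambda(R) \subseteq \down{\Lambda(S)}$ for the Hoare component but $\Lambda(S) \subseteq \up{\Lambda(R)}$ for the Smyth component. Once the closures are unfolded, both checks fall out immediately from $R(a) \subseteq S(a)$.
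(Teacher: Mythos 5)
Your proposal is correct. The $\eta$ and $\alpha$ claims are handled exactly as in the paper (monotonicity of postcomposition), and your algebraic variant of the Smyth step --- reducing $\Lambda(S) \subseteq \up{\Lambda(R)} = R^\converse \backslash {\in}$ by residuation to $R^\converse \Lambda(S) \subseteq {\in}$ and then, using that $\Lambda(S)$ is a function, to $R \subseteq \Lambda(S){\ni} = S$ --- is precisely the paper's argument. Where you diverge is the Hoare component: you verify $\Lambda(R) \subseteq \down{\Lambda(S)}$ directly from the pointwise descriptions $\down{\Lambda(S)} = \{(a,A) \mid A \subseteq S(a)\}$ and $R(a) \subseteq S(a)$, whereas the paper does not prove the Hoare component at all; it proves only $\Lambda(R) \subs \Lambda(S)$ and then invokes the fact that $\subs$, $\subh$ and $\subem$ coincide on outer deterministic multirelations (a result cited from the first paper of the trilogy). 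Your route is more self-contained and elementary, at the cost of a second explicit verification; the paper's route buys the Egli-Milner conclusion from a single inclusion but depends on an external structural result about deterministic multirelations. Both are sound, and your identification of the directional asymmetry ($\Lambda(R) \subseteq \down{\Lambda(S)}$ versus $\Lambda(S) \subseteq \up{\Lambda(R)}$) is exactly the point where such a proof could go wrong and does not.
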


\begin{proof}
  We first show $\Lambda(R) \subs \Lambda(S)$, that is, $\Lambda(S) \subseteq \up{\Lambda(R)} = \Lambda(R) \Omega = R^\converse \backslash {\in}$ (see Lemma~\ref{lemma.rel-mod-props}(2)).
  By residuation this is equivalent to $R^\converse \Lambda(S) \subseteq {\in}$.
  Since $\Lambda(S)$ is a function, this is equivalent to $R \subseteq \Lambda(S) {\ni} = S$ using Lemma~\ref{lemma.lambda-alpha-props}, which is the assumption.
  Since $\Lambda(R)$ and $\Lambda(S)$ are functions, $\Lambda(R) \subem \Lambda(S)$ follows because $\subs$, $\subh$ and $\subem$ coincide on outer deterministic multirelations~\cite[Proposition 5.8]{FurusawaGuttmannStruth2023a}.
  The remaining claims follow since relational composition preserves $\subseteq$.
\end{proof}

\begin{remark}
  Consider relations $R = \emptyset$ and $S = \{ (a,a) \}$ on the set $\{a\}$.
  Then $R \subseteq S$, but $\Lambda(R) = \{ (a,\emptyset) \} \nsubseteq \{ (a,\{a\}) \} = \Lambda(S)$.
\end{remark}

The categories in Proposition~\ref{prop.det-cat} are enriched.
We define
\begin{equation*}
  \underset{i \in I}{\iU} R_i = \left\{ \left( a, \bigcup_{i \in I} A_i \right) \middle\vert \ \forall i \in I .\ (a,A_i) \in R_i \right\}
\end{equation*}
to capture one of the quantaloid structures that arise.
Before that, note that multirelations under Peleg composition and the outer operations do not form quantaloids: Peleg composition is not associative and does not preserve the sups needed~\cite{FurusawaStruth2015a,FurusawaStruth2016}.

\begin{proposition}
  \label{proposition.peleg-rel}
  The inner deterministic multirelations with $\bigcup$ and the outer deterministic multirelations with $\iU$ form quantaloids isomorphic to the quantaloid of binary relations.
\end{proposition}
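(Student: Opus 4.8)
The plan is to promote the category isomorphisms of Proposition~\ref{prop.det-cat} to isomorphisms of quantaloids. Recall that a quantaloid is a category whose homsets are complete lattices and whose composition preserves arbitrary sups in both arguments, and that $\Rel$ is one. It therefore suffices to equip each class of deterministic multirelations with the stated sup ($\bigcup$ in the inner case, $\iU$ in the outer case), to check that the homset bijections of Proposition~\ref{prop.det-cat} are complete-lattice isomorphisms, and to verify sup-preservation; the quantaloid law for $\seq$ then comes for free. Indeed, once $\eta$ (resp.\ $\Lambda$) is shown to be a functor that is bijective on each homset and preserves both composition (Lemma~\ref{lemma.det-prop}) and sups, one transports: writing a sup $\sup_i s_i$ and a composable arrow $t$ as images of relations, the identities $t \seq \sup_i s_i = \sup_i (t \seq s_i)$ and its dual follow by pushing composition and sup through the functor and using that relational composition preserves $\bigcup$ in both arguments.

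For the inner deterministic case, Lemma~\ref{lemma.det-fix-ref}(3) shows that the inner deterministic multirelations in $\Mult(X,Y)$ are exactly the $R$ with $R \subseteq \iuatoms$, i.e.\ the principal ideal below $\iuatoms$ in $(X \rto \Pow Y, \subseteq)$. This ideal is closed under arbitrary $\bigcup$ and hence is a complete lattice with $\bigcup$ as sup, bottom $\emptyset$ and top $\iuatoms$. The bijection $\eta = (-)1$ of Lemma~\ref{lemma.bij-alpha-one}, with inverse $\alpha = (-)1^\converse$ on inner deterministic multirelations (Lemma~\ref{lemma.inner-det-props}(3)), is monotone in both directions and visibly preserves $\bigcup$, so it is a complete-lattice isomorphism onto this homset from $(X \rto Y, \subseteq)$. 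Together with $\eta(RS) = \eta(R) \seq \eta(S)$ from Lemma~\ref{lemma.det-prop}(2), $\eta$ is the required quantaloid isomorphism.

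The outer deterministic case carries the real content, since the sup must be $\iU$, not $\bigcup$: functions are not closed under union. Here the outer deterministic multirelations are the total, single-valued multirelations, and for such $R_i$ the multirelation $\underset{i}{\iU} R_i$ sends each $a$ to the single set $\bigcup_i R_i(a)$, so it is again total and single-valued; the empty join is $\iuone$, the least element, and the induced order is pointwise inclusion of image sets. The key identity is that $\alpha$ carries $\iU$ to $\bigcup$, namely $\alpha(\underset{i}{\iU} R_i) = \bigcup_i \alpha(R_i)$. Recalling $\alpha(R) = \{(a,b) \mid b \in \bigcup R(a)\}$, for outer deterministic $R_i$ each $a$ has a unique image set, so the union inside $\alpha$ collapses to that set; the universal quantifier in the definition of $\iU$ is harmless because every $R_i$ is total, and the image set of $\underset{i}{\iU} R_i$ at $a$ is precisely the union of the image sets of the $R_i$ at $a$. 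With $\alpha(\Lambda(R)) = R$ and $\Lambda(\alpha(f)) = f$ from Lemma~\ref{lemma.lambda-alpha-props}, this makes $\alpha$ a complete-lattice isomorphism from the outer deterministic homset with $\iU$ onto $(X \rto Y, \subseteq)$ with inverse $\Lambda$, whence that homset is complete and its sup is $\iU$. Combining with $\Lambda(RS) = \Lambda(R) \seq \Lambda(S)$ from Lemma~\ref{lemma.det-prop}(1) gives the quantaloid isomorphism as in the first paragraph.

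The main obstacle is exactly the outer case: confirming that $\iU$ rather than ordinary union is the correct sup, that the class is closed under it with $\iuone$ as bottom, and that $\alpha$ (equivalently $\Lambda$) intertwines $\iU$ with $\bigcup$. Completeness of the inner homsets, monotonicity, and the quantaloid composition law for both classes then follow formally by transporting the quantaloid structure of $\Rel$ along the sup-preserving functors $\eta$ and $\Lambda$.
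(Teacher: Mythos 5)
Your proposal is correct and follows essentially the same route as the paper: both transport the quantaloid structure of $\Rel$ along the composition-preserving bijections of Proposition~\ref{prop.det-cat}, using that $\eta$ and $\alpha$ intertwine $\bigcup$ with $\bigcup$ in the inner case and that $\Lambda$ and $\alpha$ intertwine $\bigcup$ with $\iU$ in the outer case. The extra details you supply (the inner homset as the principal ideal below $\iuatoms$, the set-theoretic verification that $\alpha$ carries $\iU$ to $\bigcup$) are consistent with, and slightly more explicit than, the paper's argument.
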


\begin{proof}
  For the quantaloid of inner deterministic multirelations, recall that relational composition preserves arbitrary unions, hence so do the isomorphisms $\eta$ and $\alpha$ between $\Rel$ and the inner deterministic multirelations:
  \begin{equation*}
    \eta \left( \bigcup_{i \in I} R_i \right) = \bigcup_{i \in R} \eta(R_i) \qquad
    \text{ and } \qquad
    \alpha \left( \bigcup_{i \in I} S_i \right) = \bigcup_{i \in I} \alpha(S_i),
  \end{equation*}
  if all $S_i$ are inner deterministic.
  Inner determinism is therefore preserved by arbitrary unions and Peleg composition distributes over arbitrary unions of inner deterministic multirelations.

  For the quantaloid of outer deterministic multirelations, the isomorphisms $\alpha$ and $\Lambda$ between outer deterministic multirelations and $\Rel$ satisfy
  \begin{equation*}
    \Lambda \left( \bigcup_{i \in I} R_i \right) = \underset{i \in I}{\iU} \Lambda(S_i) \qquad
    \text{ and } \qquad
    \alpha \left( \underset{i \in I}{\iU} S_i \right) = \bigcup_{i \in I} \alpha(S_i)
  \end{equation*}
  if all $S_i$ are deterministic.
  Moreover, the definition of arbitrary inner unions implies that they preserve determinism.
  Hence Peleg composition distributes over arbitrary inner unions of deterministic multirelations.
  Inner union is idempotent on univalent and hence on deterministic multirelations~\cite[Lemma 3.6]{FurusawaGuttmannStruth2023a}.
  The order of the complete lattice can be defined via $R \leq S \Leftrightarrow R \iu S = S$, the natural order for deterministic multirelations~\cite[Lemma 5.9]{FurusawaGuttmannStruth2023a}.
\end{proof}

\begin{remark}
  \label{remark.kleisli}
  A Kleisli composition of multirelations can be defined as $R \circ_\kleisli S = R S_\kleisli$~\cite{FurusawaKawaharaStruthTsumagari2017}.
  It satisfies the standard identity $R \circ_\kleisli S = R \Pow (S) \mu$, is associative on arbitrary multirelations of appropriate type and has $1$ as its right unit, and as a left unit on outer deterministic multirelations.
  By Lemma~\ref{lemma.det-circ}, Peleg and Kleisli lifting coincide on the category of outer deterministic multirelations.
  Finally, the category of outer deterministic multirelations is isomorphic to the Kleisli category of the powerset functor in $\Set$, using the graph functor to map from $X \to \Pow Y$ to $X \rto \Pow Y$, which is clearly bijective.
\end{remark}

\subsection{Determinisation of multirelations}
\label{subsection.fusion}

The maps $\Lambda \circ \alpha$ and $\eta \circ \alpha$ approximate multirelations by relations modelled as isomorphic inner or outer deterministic multirelations.
They also form the isomorphism between the categories of inner and outer deterministic multirelations.

Let $R : X \rto \Pow Y$ be a multirelation.
The \emph{outer determinisation} or \emph{fusion} operation
\begin{equation*}
  \fus = \Lambda \circ \alpha
\end{equation*}
sends $R$ to the outer deterministic multirelation isomorphic to relation $\alpha(R)$.
The \emph{inner determinisation} or \emph{fission} operation
\begin{equation*}
  \fis = \eta \circ \alpha
\end{equation*}
sends $R$ to the inner deterministic multirelation isomorphic to $\alpha(R)$.
This is expressed in the following commuting diagram.
\begin{equation*}
  \begin{tikzcd}
    X \rto \Pow Y \ar[r, "\fus"] \ar[d, "\alpha"'] & X \rto \Pow Y \ar[d, "\alpha"] \\
    X \rto Y \ar[r, "\id"] \ar[ur, "\Lambda"] \ar[dr, "\eta"] & X \rto Y \\
    X \rto \Pow Y \ar[r, "\fis"'] \ar[u, "\alpha"] & X \rto \Pow Y \ar[u, "\alpha"']
  \end{tikzcd}
\end{equation*}

Set-theoretically,
\begin{equation*}
  \fus(R) = \{ (a,B) \mid B = \bigcup R(a) \} \qquad
  \text{ and } \qquad
  \fis(R) = \{ (a,\{b\}) \mid b \in \bigcup R(a) \}.
\end{equation*}

Composing the bijections in this diagram from bottom to top and vice versa yields the following corollary to Propositions~\ref{prop.det-cat} and~\ref{proposition.peleg-rel}.

\begin{corollary}
  \label{cor.det-cat2}
  The functors $\fis$ and $\fus$ are isomorphisms between the categories of inner deterministic and outer deterministic multirelations.
  They preserve the quantaloid structure with $\bigcup$ for inner deterministic multirelations and $\iU$ for outer deterministic ones.
\end{corollary}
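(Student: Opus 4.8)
The plan is to obtain both assertions essentially for free by recognising $\fus$ and $\fis$ as composites of the isomorphisms already established in Propositions~\ref{prop.det-cat} and~\ref{proposition.peleg-rel}. Restricting $\fus = \Lambda \circ \alpha$ to inner deterministic multirelations factors as inner deterministic $\xrightarrow{\alpha} \Rel \xrightarrow{\Lambda}$ outer deterministic, and restricting $\fis = \eta \circ \alpha$ to outer deterministic multirelations factors dually as outer deterministic $\xrightarrow{\alpha} \Rel \xrightarrow{\eta}$ inner deterministic. Since Proposition~\ref{prop.det-cat} exhibits each of $\alpha$, $\Lambda$ and $\eta$ as a functor, indeed an isomorphism, between the relevant categories, both composites $\fus$ and $\fis$ are functors with no further work.

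For the isomorphism claim I would check that $\fus$ and $\fis$ are mutually inverse on the two categories. On an inner deterministic $R$, the triangle identity $\alpha \circ \Lambda = \id$ on relations (Lemma~\ref{lemma.lambda-alpha-props}) collapses $\fis(\fus(R)) = \eta(\alpha(\Lambda(\alpha(R))))$ to $\eta(\alpha(R))$, which is $R$ because $\eta$ and $\alpha$ form a bijective pair on inner deterministic multirelations (Lemma~\ref{lemma.bij-alpha-one}). The computation $\fus(\fis(R)) = R$ for outer deterministic $R$ is symmetric, using $\alpha \circ \eta = \id$ on relations (Lemma~\ref{lemma.bij-alpha-one}) together with $\Lambda \circ \alpha = \id$ on outer deterministic multirelations (Lemma~\ref{lemma.lambda-alpha-props}). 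Hence $\fis$ and $\fus$ are inverse isomorphisms of categories.

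Finally, the quantaloid claim follows by chaining the union-preservation identities of Proposition~\ref{proposition.peleg-rel}: for $\fus$ one precomposes the $\bigcup$-preservation of $\alpha$ on inner deterministic multirelations with the identity $\Lambda(\bigcup_{i} S_i) = \underset{i}{\iU} \Lambda(S_i)$, yielding $\fus(\bigcup_{i} R_i) = \underset{i}{\iU} \fus(R_i)$; for $\fis$ one precomposes $\alpha(\underset{i}{\iU} S_i) = \bigcup_{i} \alpha(S_i)$ with the $\bigcup$-preservation of $\eta$, yielding $\fis(\underset{i}{\iU} R_i) = \bigcup_{i} \fis(R_i)$. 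I anticipate no genuine obstacle here; the only point demanding care is the bookkeeping of sources and targets, namely ensuring that each intermediate value is inner deterministic, a relation, or outer deterministic as required, so that the correct clause of Proposition~\ref{proposition.peleg-rel} ($\bigcup$ on the inner side versus $\iU$ on the outer side) applies at each step.
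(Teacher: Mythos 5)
Your proposal is correct and matches the paper's own argument, which simply notes that the corollary follows by composing the bijections of the commuting diagram from bottom to top and vice versa, i.e.\ by factoring $\fus$ and $\fis$ through the isomorphisms $\alpha$, $\Lambda$ and $\eta$ of Propositions~\ref{prop.det-cat} and~\ref{proposition.peleg-rel}. Your explicit verification of the mutual-inverse identities and the chained sup-preservation steps just spells out what the paper leaves implicit.
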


For outer deterministic multirelations, therefore, $\fus \circ \fis = \id_{X \rto \Pow Y}$ and for inner deterministic ones, $\fis \circ \fus = \id_{X \rto \Pow Y}$ and we get the universal property $R = \fis(S) \Leftrightarrow S = \fus(R)$ for inner deterministic $R$ and outer deterministic $S$.
By functoriality, $\fis(R \seq S) = \fis(R) \seq \fis(S)$ if $R$, $S$ are outer deterministic and $\fus(R \seq S) = \fus(R) \seq \fus(S)$ if $R$, $S$ are inner deterministic.

\begin{corollary}
  \label{cor.det-fix2}
  The inner and outer deterministic multirelations are precisely the fixpoints of $\fis$ and $\fus$, respectively.
\end{corollary}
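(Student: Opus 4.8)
The plan is to prove each of the two fixpoint characterisations as a biconditional, and in both cases to obtain the two implications directly from results already established in this section. The forward implications (a deterministic multirelation is a fixpoint) will follow from the fact that $\alpha$ composed with $\Lambda$, respectively with $\eta$, is the identity on the appropriate class of deterministic multirelations; the reverse implications (a fixpoint is deterministic) will follow because $\Lambda$ and $\eta$ always produce deterministic output, by Lemma~\ref{lemma.det-char}.

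For the outer case, recall $\fus = \Lambda \circ \alpha$. If $R$ is outer deterministic, then $\fus(R) = \Lambda(\alpha(R)) = R$ by Lemma~\ref{lemma.lambda-alpha-props}, which gives $\Lambda \circ \alpha = \id$ on outer deterministic multirelations. Conversely, if $\fus(R) = R$, then $R = \Lambda(\alpha(R))$ lies in the image of $\Lambda$ and is therefore outer deterministic by Lemma~\ref{lemma.det-char}. Hence the outer deterministic multirelations are exactly the fixpoints of $\fus$.

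For the inner case, recall $\fis = \eta \circ \alpha$. If $R$ is inner deterministic, then $\fis(R) = \eta(\alpha(R)) = R$ by Lemma~\ref{lemma.bij-alpha-one}, which gives $\eta \circ \alpha = \id$ on inner deterministic multirelations. Conversely, if $\fis(R) = R$, then $R = \eta(\alpha(R))$ lies in the image of $\eta$ and is therefore inner deterministic by Lemma~\ref{lemma.det-char}. Hence the inner deterministic multirelations are exactly the fixpoints of $\fis$.

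I do not expect any genuine obstacle here: the statement is a direct consequence of the bijection lemmas of Section~\ref{SS:det-bij} together with the determinism-preservation of $\Lambda$ and $\eta$. The only point requiring care is that the identities $\Lambda \circ \alpha = \id$ and $\eta \circ \alpha = \id$ hold only on the respective classes of deterministic multirelations, so they may be invoked only in the forward implications, where determinism is assumed; the reverse implications must instead appeal to Lemma~\ref{lemma.det-char}, which carries no such hypothesis.
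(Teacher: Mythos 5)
Your proof is correct and follows essentially the same route as the paper: the forward implications via the cancellation identities $\Lambda \circ \alpha = \id$ and $\eta \circ \alpha = \id$ on the respective deterministic classes (Lemmas~\ref{lemma.lambda-alpha-props} and~\ref{lemma.bij-alpha-one}), and the reverse implications via Lemma~\ref{lemma.det-char}. The paper spells out only the inner case and declares the outer case similar, whereas you write both out, but the argument is identical.
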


\begin{proof}
  If $R$ is inner deterministic, then $\fis(R) = \eta(\alpha(R)) = R$ by Lemma~\ref{lemma.bij-alpha-one}.
  If $\fis(R) = R$, then $R$ is inner deterministic by Lemma~\ref{lemma.det-char}.
  The proof for outer determinism is similar.
\end{proof}

The universal properties for $\alpha$ and $\Lambda$ or $\eta$ for relations and outer or inner deterministic multirelations generalise to Galois connections on arbitrary multirelations.
These use $\subseteq$ on relations and $\subh$ on multirelations.

\begin{proposition}
  \label{proposition.eta-alpha-lambda-galois}
  Let $R, S : X \rto \Pow Y$ and $T, V : X \rto Y$.
  Then
  \begin{enumerate}
  \item $\alpha(R) \subseteq T \Leftrightarrow R \subh \Lambda(T)$, $\eta(T) \subh S \Leftrightarrow T \subseteq \alpha(S)$ and $\fis(R) \subh S \Leftrightarrow R \subh \fus(S)$,
  \item $\alpha(R \ii S) = \alpha(R) \cap \alpha(S)$,
  \item $T \subseteq V \Rightarrow \Lambda(T) \subh \Lambda(V)$, $T \subseteq V \Rightarrow \eta(T) \subh \eta(V)$ and $R \subh S \Rightarrow \alpha(R) \subseteq \alpha(S)$,
  \item $\fus$ is a closure and $\fis$ an interior operator,
  \item $(\alpha,\Lambda)$ and $(\alpha,\eta)$ are epi-mono-factorisations of $\fus$ and $\fis$, both unique up to isomorphism,
  \item $\fus(R)$ is the $\subh$-least outer deterministic multirelation above $R$ and $\fis(R)$ the $\subh$-greatest inner deterministic multirelation below $R$.
  \end{enumerate}
\end{proposition}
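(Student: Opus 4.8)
The plan is to make the two elementary adjunctions of part~(1) the technical core, built on Lemma~\ref{lemma.lambda-em-isotone} together with a single absorption identity, and then to obtain (3)--(6) by formal reasoning plus Corollary~\ref{cor.det-fix2}. The absorption identity I would isolate first is $\alpha(\down{S}) = \alpha(S)$, equivalently $\Omega^\converse {\ni} = {\ni}$: passing to the down-closure does not change the union of the images, hence does not change $\alpha$. For the first adjunction $\alpha(R) \subseteq T \Leftrightarrow R \subh \Lambda(T)$ the reverse direction applies the $\subseteq$-isotone map $\alpha$ (Lemma~\ref{lemma.lambda-em-isotone}(2)) to $R \subseteq \down{\Lambda(T)}$ and uses $\alpha(\down{\Lambda(T)}) = \alpha(\Lambda(T)) = T$, while the forward direction combines $\Lambda(\alpha(R)) \subh \Lambda(T)$ (from Lemma~\ref{lemma.lambda-em-isotone}(1)) with the direct observation $R \subh \Lambda(\alpha(R))$, since every $(a,B) \in R$ satisfies $B \subseteq \bigcup R(a)$, and with transitivity of $\subh$. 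The second adjunction $\eta(T) \subh S \Leftrightarrow T \subseteq \alpha(S)$ is handled symmetrically, using $\alpha \circ \eta = \id$ and the direct fact $\eta(\alpha(S)) \subh S$. The third adjunction then follows by composing the first two along $\fis = \eta \circ \alpha$ and $\fus = \Lambda \circ \alpha$: $\fis(R) \subh S \Leftrightarrow \alpha(R) \subseteq \alpha(S) \Leftrightarrow R \subh \fus(S)$.

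Part~(3) is then light. The implications $T \subseteq V \Rightarrow \Lambda(T) \subh \Lambda(V)$ and $T \subseteq V \Rightarrow \eta(T) \subh \eta(V)$ are immediate from Lemma~\ref{lemma.lambda-em-isotone}(1), since $\subem$ and $\subseteq$ both entail $\subh$; and $R \subh S \Rightarrow \alpha(R) \subseteq \alpha(S)$ is exactly isotonicity of $\alpha$ composed with the absorption identity $\alpha(\down{S}) = \alpha(S)$. Part~(2) is an independent distributive law: unfolding $\alpha$ and inner intersection, it reduces to $(\bigcup R(a)) \cap (\bigcup S(a)) = \bigcup\{A \cap B \mid A \in R(a),\ B \in S(a)\}$, the distributivity of union over binary intersection, which I would check set-theoretically (or algebraically through $C$ and inner De~Morgan).

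Parts~(4) and~(6) are adjunction bookkeeping on top of (1), (3) and Corollary~\ref{cor.det-fix2}. Extensivity $R \subh \fus(R)$ and contractivity $\fis(R) \subh R$ come from instantiating the first and second adjunctions at $T = \alpha(R)$; monotonicity of $\fus$ and $\fis$ comes from applying~(3) twice; and idempotence is exact rather than merely up to $\eqh$, because $\fus(R)$ is outer deterministic and $\fis(R)$ inner deterministic, hence genuine fixpoints by Corollary~\ref{cor.det-fix2}. For~(6), given outer deterministic $S$ with $R \subh S$, I would rewrite $S = \Lambda(\alpha(S))$, apply the first adjunction to get $\alpha(R) \subseteq \alpha(S)$, and then~(3) to conclude $\fus(R) \subh \fus(S) = S$; the argument for $\fis(R)$ is dual through the second adjunction. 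I would also record that $\subh$ is antisymmetric on inner and on outer deterministic multirelations---$\down{R} = \down{S}$ forces $\alpha(R) = \alpha(S)$ and hence $R = S$ after applying $\eta$ or $\Lambda$---so ``least'' and ``greatest'' are genuine. Finally, part~(5) is the standard epi--mono factorisation in $\Set$: $\alpha$ is a split epimorphism (right inverses $\Lambda$ and $\eta$) and $\Lambda$, $\eta$ are split monomorphisms ($\alpha \circ \Lambda = \id = \alpha \circ \eta$), so $(\alpha,\Lambda)$ and $(\alpha,\eta)$ are surjection--injection factorisations of $\fus$ and $\fis$; the middle object $X \rto Y$ is their image up to the isomorphism of Proposition~\ref{prop.det-cat}, and uniqueness up to isomorphism is the usual uniqueness of such factorisations.

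I expect the main obstacle to be the clean treatment of the two basic adjunctions in part~(1)---in particular isolating the absorption identity $\alpha(\down{S}) = \alpha(S)$ and the small direct inclusions $R \subh \Lambda(\alpha(R))$ and $\eta(\alpha(S)) \subh S$---after which every remaining part is essentially formal.
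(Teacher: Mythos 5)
Your proposal is correct, and it reaches the same endpoint as the paper with the same overall architecture (part~(1) as the technical core, the rest formal), but the core adjunctions in~(1) are proved by a genuinely different method. The paper gets $\alpha(R) \subseteq T \Leftrightarrow R \subh \Lambda(T)$ in one residual computation: $\alpha(R) \subseteq T \Leftrightarrow R \subseteq T/{\ni}$ by the standard adjunction for left residuals, followed by the identification $T/{\ni} = \Lambda(T)\Omega^\converse = \down{\Lambda(T)}$ (and the second adjunction similarly, via ${\ni}1{\ni} = {\ni}$ and $(R/R)R = R$). You instead give the unit--counit presentation: isotonicity of $\Lambda$, $\eta$, $\alpha$ from Lemma~\ref{lemma.lambda-em-isotone}, the absorption identity $\alpha(\down{S}) = \alpha(S)$ (equivalently $\Omega^\converse{\ni} = {\ni}$, which the paper only records later as Lemma~\ref{lemma.alpha-props}(3) but which is elementary and non-circular here), and the two pointwise inclusions $R \subh \Lambda(\alpha(R))$ and $\eta(\alpha(S)) \subh S$. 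Both routes are sound; the paper's is shorter and stays entirely inside the residual calculus, while yours makes the adjunction data (unit, counit, absorption) explicit, which then lets~(4) and~(6) fall out by pure order-theoretic bookkeeping exactly as in the paper. Your extra observation that $\subh$ is antisymmetric on inner and outer deterministic multirelations (via $\down{R} = \down{S} \Rightarrow \alpha(R) = \alpha(S) \Rightarrow R = S$ after applying $\eta$ or $\Lambda$) is a worthwhile addition the paper leaves implicit when it calls $\fus(R)$ ``least'' and $\fis(R)$ ``greatest''.
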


\begin{proof}
  For (1), recall that $R \subh S \Leftrightarrow R \subseteq S \Omega^\converse$ and $ \Omega^\converse = {\ni} / {\ni}$.
  For the first Galois connection, $\alpha(R) \subseteq T \Leftrightarrow R \subseteq T / {\ni}$ using the standard Galois connection for left residuals.
  The claim then follows from $T / {\ni}= (\Lambda(T) {\ni}) / {\ni} = \Lambda(T) \Omega^\converse$, using a general law of residuals ($(R S) / Q = R (S / Q)$ for all composable relations $R$ and $S$ such that $R$ is deterministic) in the last step.
  For the second Galois connection, first suppose $\eta(T) \subseteq S \Omega^\converse$.
  Then $T \subseteq S \Omega^\converse {\ni} = \alpha(S)$, using $\alpha \circ \eta = \id$ and $\subseteq$-preservation of $\alpha$ in the first step, and $(R/ R) R = R$, which holds for all relations $R$, in the second one.
  Conversely, suppose $T \subseteq \alpha(S)$.
  Then $\eta(T) \subseteq S {\ni} 1 \subseteq S \Omega^\converse$, because ${\ni} 1 {\ni} = {\ni} \alpha(\Lambda(\Id)) = {\ni}$ and therefore ${\ni} 1 \subseteq \Omega^\converse$ by the Galois connection for left residuals.
  The third Galois connection is then immediate.

  Item (2) follows from a simple set-theoretic calculation.

  The first two properties in (3) follow directly from the Galois connections in (1).
  For the third one, $R \subh S$ implies $\alpha(R) \subseteq \alpha(S \Omega^\converse) \subseteq \alpha(S)$, where the first step uses $\subseteq$-preservation of $\alpha$ and the second $\Omega^\converse {\ni} = {\ni}$, like in (1).

  For (4), $\subh$-preservation of $\fus$ and $\fis$ follows from (3); $R \subh \fus(R)$ and $\fis(R) \subh R$ follow from (1).
  Further, $\fus \circ \fus = \fus$ and $\fis \circ \fis = \fis$ hold because $\alpha \circ \Lambda = \id = \alpha \circ \eta$.

  For (5), surjectivity of $\alpha$ and injectivity of $\Lambda$ and $\eta$ is immediate from the cancellation properties of the Galois connections in (1).
  For uniqueness, note that every function in $\Set$ has this property, and the proof is standard.

  For (6), $R \subh \fus(R)$ by (4).
  Now suppose $\fus(S) = S$ and $R \subh S$.
  Then $\fus(R) \subh \fus(S) = S$ by order-preservation of $\fus$.
  The proof for $\fis$ is similar.
\end{proof}

The properties in (1) and (2) can be summarised in the language of topos theory by saying that the adjunction $(\alpha,\Lambda)$ is an essential geometric morphism -- but of course $\Rel$ does not form a topos.
The properties in (3) and (4) are standard for Galois connections, however these are usually presented for two partial orders or for two preorders whereas we have a mixed case.
This is why we list and prove these properties.
Item (4) shows that $\fus$ is a monad and $\fis$ a comonad on $\subh$, both of which are idempotent.
As usual, the unit and counit are arrows $R \subh \fus(R)$ and $\fis(R) \subh R$, the multiplication and comultiplication are arrows $\fus(\fus(R)) \subh \fus(R)$ and $\fis(R) \subh \fis(\fis(R))$.

\begin{example}
 The following examples rule out Galois connections for set inclusion only.
  \begin{itemize}
  \item For $(\alpha,\Lambda)$, consider $R = \{ (a,\emptyset) \}$ in $X \rto \Pow Y$ and $S = \{ (a,b) \}$ in $X \rto Y$ for $X = \{ a \}$ and $Y = \{ b \}$.
        Then $\alpha(R) = \emptyset \subseteq S$, but $R \nsubseteq \{ (a,\{b\}) \} = \Lambda(S)$.
  \item For $(\Lambda,\alpha)$, consider $R = \emptyset$ in $X \rto Y$ and $S = \{ (a,\{b\}) \}$ in $X \rto \Pow Y$ for $X = \{a\}$ and $Y = \{b\}$.
        Then $R \subseteq \{ (a,b) \} = \alpha(S)$, but $\Lambda(R) = \{ (a,\emptyset) \} \nsubseteq S$.
  \item For $(\alpha,\eta)$, consider $R = \{ (a,\{b_1,b_2\}) \}$ in $X \rto \Pow Y$ and $S = \{ (a,b_1), (a,b_2) \}$ in $X \rto Y$ for $X = \{a\}$ and $Y = \{b_1,b_2\}$.
        Then $\alpha(R) = S$, but $R \nsubseteq \{ (a,\{b_1\}), (a,\{b_2\}) \} = \eta(S)$.
  \item For $(\eta,\alpha)$, consider $R = \{ (a,b_1), (a,b_2) \}$ in $X \rto Y$ and $S = \{ (a,\{b_1,b_2\}) \}$ in $X \rto \Pow Y$ for $X = \{a\}$ and $Y = \{b_1,b_2\}$.
        Then $R = \alpha(S)$, but $\eta(R) = \{ (a,\{b_1\}), (a,\{b_2\}) \} \nsubseteq S$.
  \end{itemize}
\end{example}

Fusion and fission can also be defined in terms of multirelations.
This requires two additional concepts from the inner structure: the set of \emph{co-atoms} $\iiatoms = \icpl{\iuatoms} = \{ (a,X - \{b\}) \mid a \in X \wedge b \in Y\}$ in $\Mult(X,Y)$ and the \emph{duality} operation $\dual{R} = -\icpl{R}$, which relates the inner and the outer structure.
See~\cite{FurusawaGuttmannStruth2023a} for details.
\begin{lemma}
  \label{lemma.fus-fis-explicit}
  Let $R : X \rto \Pow Y$.
  Then
  \begin{enumerate}
  \item $\fis(R) = \down{R} \cap \iuatoms$,
  \item $\down{\fus(R)} = -(\up{(-\fis(R) \cap \iuatoms)}) = -(\up{(-(\down{R}) \cap \iuatoms)})$,
  \item $\up{\fus(R)} = \dual{\up{\fis(R)}} = -(\down{(\icpl{\fis(R)})}) = -(\down{(\icpl{(\down{R})} \cap \iiatoms)})$,
  \item $\fus(R) = -(\up{(-(\down{R}) \cap \iuatoms)}) \cap -(\down{(\icpl{(\down{R})} \cap \iiatoms)})$.
  \end{enumerate}
\end{lemma}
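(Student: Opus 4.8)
The plan is to ground everything in the set-theoretic descriptions $\fus(R) = \{ (a,B) \mid B = \bigcup R(a) \}$ and $\fis(R) = \{ (a,\{b\}) \mid b \in \bigcup R(a) \}$ recorded above, prove (1)--(3) as explicit computations, and then obtain (4) formally by reassembling $\fus(R)$ from the pieces. For (1) I would simply unfold $\down{R} = \{ (a,A) \mid \exists (a,B) \in R.\ A \subseteq B \}$ and $\iuatoms = \{ (a,\{b\}) \}$, so that $\down{R} \cap \iuatoms$ collects the atoms $(a,\{b\})$ with $b \in B$ for some $(a,B) \in R$; the inner existential is exactly $b \in \bigcup R(a)$, which is $\fis(R)$.

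For (2), the two displayed right-hand sides agree because $(-\fis(R)) \cap \iuatoms = \iuatoms - \fis(R) = \iuatoms - \down{R} = (-(\down{R})) \cap \iuatoms$, using (1) and $\fis(R) \subseteq \iuatoms$. It then remains to identify $-(\down{R}) \cap \iuatoms$ with the atoms $(a,\{b\})$ where $b \notin \bigcup R(a)$, whose up-closure $\up{(-(\down{R}) \cap \iuatoms)}$ consists of the pairs $(a,A)$ meeting such a $b$, that is $A \not\subseteq \bigcup R(a)$. The outer complement then yields $\{ (a,A) \mid A \subseteq \bigcup R(a) \} = \down{\fus(R)}$, since $\fus(R)$ relates $a$ to the single set $\bigcup R(a)$.

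For (3) I would first compute $\up{\fis(R)} = \{ (a,A) \mid A \cap \bigcup R(a) \neq \emptyset \}$ and apply the duality $\dual{S} = -\icpl{S}$: inner complementation flips each second component, rewriting the meet-nonempty condition so that the subsequent outer complement delivers $\{ (a,A) \mid \bigcup R(a) \subseteq A \} = \up{\fus(R)}$. The third expression uses $\icpl{\fis(R)} = \{ (a,Y - \{b\}) \mid b \in \bigcup R(a) \}$, whose down-closure consists of the pairs $(a,A)$ with $\bigcup R(a) \not\subseteq A$, again complementing to $\up{\fus(R)}$. The final expression reduces to the third via the bookkeeping identity $\icpl{S} \cap \iiatoms = \icpl{(S \cap \iuatoms)}$, valid because inner complementation is the bijection between $\iuatoms$ and $\iiatoms$ sending $\{b\}$ to $Y - \{b\}$; applied to $S = \down{R}$ and combined with (1) it gives $\icpl{(\down{R})} \cap \iiatoms = \icpl{\fis(R)}$.

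Finally, (4) follows formally from (2) and (3): the stated right-hand side is precisely $\down{\fus(R)} \cap \up{\fus(R)} = \convex{\fus(R)}$, and since $\fus(R) = \Lambda(\alpha(R))$ is outer deterministic (Lemma~\ref{lemma.det-char}) it relates each point to the single set $\bigcup R(a)$, so its convex closure fills in only the $A$ with $\bigcup R(a) \subseteq A \subseteq \bigcup R(a)$ and hence equals $\fus(R)$. I expect the main obstacle to be the algebraic handling of (3): keeping track of how inner complementation and the duality operation move between atoms and co-atoms while interacting with up- and down-closure, and in particular verifying $\icpl{S} \cap \iiatoms = \icpl{(S \cap \iuatoms)}$ cleanly. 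Once (1)--(3) are in place, (4) is immediate from convexity of outer deterministic multirelations.
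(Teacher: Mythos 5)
Your proof is correct, but it takes a genuinely different route from the paper's. You work pointwise throughout, grounding everything in the set-theoretic descriptions $\fus(R) = \{ (a,B) \mid B = \bigcup R(a) \}$ and $\fis(R) = \{ (a,\{b\}) \mid b \in \bigcup R(a) \}$ and checking membership of pairs $(a,A)$ directly; the paper instead proves each item by relation-algebraic calculation. For (1) the paper reduces the claim to the component-free identity ${\ni}\, 1 = \Omega^\converse \cap U 1$ and establishes it via residuation and boolean reasoning; for (2) and (3) it runs chains of equalities through residuals, $\Lambda$, $\Omega$ and $C$ using Lemmas~\ref{lemma.det-fix-ref} and~\ref{lemma.rel-mod-props}. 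The overall architecture is identical -- in particular both derive (4) from (2) and (3) via convex-closedness of the outer deterministic multirelation $\fus(R)$, and your observation that the two right-hand sides in (2) agree by (1) and boolean algebra is exactly the paper's. Your pointwise argument is more transparent and each step is easy to verify (I checked the boundary case $\bigcup R(a) = \emptyset$, where totality of $\fus(R)$ keeps both sides of (2) and (3) in agreement, and your bookkeeping identity $\icpl{S} \cap \iiatoms = \icpl{(S \cap \iuatoms)}$, which holds because $C$ is a bijective function). What the paper's algebraic route buys is independence from the concrete powerset model: since the authors explicitly aim at future axiomatic treatments in a power-allegory setting, and since $\fis(R) = \down{R} \cap \iuatoms$ is the definitional form listed in their basis, the component-free derivation is the one that would survive axiomatisation, whereas yours is tied to the concrete $\Rel$ semantics.
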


\begin{proof}
  For (1), $\down{R} \cap \iuatoms = R \Omega^\converse \cap U 1 = R (\Omega^\converse \cap U 1)$, hence it suffices to show
  \begin{equation*}
    {\ni} 1 = \Omega^\converse \cap U 1 = ({\ni} / {\ni}) \cap U 1.
  \end{equation*}
  The inclusion $\subseteq$ follows by residuation from ${\ni} 1 {\ni} = \alpha(\eta({\ni})) = {\ni}$.
  The opposite inclusion follows from $-{\ni} 1 \subseteq -{\ni} {\in} = -({\ni}/{\ni})$ using boolean properties.

  For (2),
  \begin{align*}
        -(\up{(-\fis(R) \cap \iuatoms)})
    & = -(-(R {\ni} 1) 1^\converse 1 \Omega) \\
    & = -(-(R {\ni} 1 1^\converse) {\in}) \\
    & = -(-(R {\ni}) {\in}) \\
    & = R {\ni} / {\ni} \\
    & = \Lambda(R {\ni}) {\ni} / {\ni} \\
    & = \Lambda(R {\ni}) ({\ni} / {\ni}) \\
    & = \down{\fus(R)}
  \end{align*}
  using Lemma~\ref{lemma.det-fix-ref} in the first step.
  The second equality follows by (1) and boolean algebra.

  For (3),
  \begin{align*}
        \dual{\up{\fis(R)}}
    & = -\icpl{(\up{\fis(R)})} \\
    & = -(\down{(\icpl{\fis(R)})}) \\
    & = -(\fis(R) C \Omega^\converse) \\
    & = -(\alpha(R) 1 \Omega C) \\
    & = -(\alpha(R) {\in} C) \\
    & = -(\alpha(R) (-{\in})) \\
    & = \alpha(R)^\converse \backslash {\in} \\
    & = \Lambda(\alpha(R)) \Omega \\
    & = \up{\fus(R)}
  \end{align*}
  using Lemma~\ref{lemma.rel-mod-props}.
  The remaining equality follows again by (1).

  Item (4) follows from (2) and (3) since $\fus(R)$ is convex-closed.
\end{proof}

\subsection{Properties of approximation and determinisation}
\label{section.fus-fis-props}

Lemma~\ref{lemma.det-prop}(3) generalises to arbitrary multirelations.

\begin{lemma}
  \label{lemma.alpha-props}
  Let $R : X \rto \Pow Y$ and $S : Y \rto \Pow Z$.
  Then
  \begin{enumerate}
  \item $\alpha(R_\seq) = \alpha(\dom(R)_\seq) \alpha(R)$,
  \item $\alpha(R \seq S) \subseteq \alpha(R) \alpha(S)$,
  \item $\alpha(\down{R}) = \alpha(R)$.
  \end{enumerate}
\end{lemma}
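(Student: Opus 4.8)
The plan is to work entirely algebraically with the approximation map $\alpha = (-){\ni}$, unfolding the definition $R_\seq = \dom(R)_\seq \bigcup_{S \subseteq_d R} S_\kleisli$ and reducing every step to three ingredients: the naturality identity $\Pow(T){\ni} = {\ni}T$, which follows from $\Pow(T) = \Lambda({\ni}T)$ and $\alpha \circ \Lambda = \id$ (Lemma~\ref{lemma.lambda-alpha-props}); the decomposition $R = \bigcup_{S \subseteq_d R} S$; and the fact that relational composition preserves arbitrary unions, so that $\alpha$ does too. The naturality identity specialises to $S_\kleisli {\ni} = \Pow(\alpha(S)){\ni} = {\ni}\,\alpha(S)$, which is the workhorse of the whole lemma.

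For (1), I would apply $\alpha$ to the definition of $R_\seq$ and push ${\ni}$ through the union. Concretely,
\begin{equation*}
  \alpha(R_\seq) = \dom(R)_\seq \Bigl( \bigcup_{S \subseteq_d R} S_\kleisli \Bigr) {\ni} = \dom(R)_\seq \bigcup_{S \subseteq_d R} {\ni}\,\alpha(S) = \dom(R)_\seq {\ni} \bigcup_{S \subseteq_d R} \alpha(S),
\end{equation*}
using that composition preserves unions together with $S_\kleisli {\ni} = {\ni}\,\alpha(S)$. It then remains to reassemble $\bigcup_{S \subseteq_d R} \alpha(S) = \alpha\bigl(\bigcup_{S \subseteq_d R} S\bigr) = \alpha(R)$ by the decomposition lemma and $\dom(R)_\seq {\ni} = \alpha(\dom(R)_\seq)$ by the definition of $\alpha$. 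The domain factor $\dom(R)_\seq$ is simply carried along untouched; no property of it beyond its type is needed here.

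For (2), I would start from $R \seq S = R S_\seq$, so that $\alpha(R \seq S) = R\,\alpha(S_\seq)$, and then substitute (1) to obtain $\alpha(R \seq S) = R\,\alpha(\dom(S)_\seq)\,\alpha(S)$. Since $\alpha(R)\alpha(S) = R{\ni}\,\alpha(S)$, the claim reduces by monotonicity of composition to $\alpha(\dom(S)_\seq) \subseteq {\ni}$. This holds because $\dom(S)_\seq$ is a sub-identity on $\Pow Y$ --- explicitly the partial identity $\{(A,A) \mid A \subseteq \dom(S)\}$, the Peleg lifting of the sub-identity multirelation on the domain of $S$ --- whence $\alpha(\dom(S)_\seq) = \dom(S)_\seq {\ni} \subseteq \Id\,{\ni} = {\ni}$. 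Part (3) is a one-line residual calculation: $\down{R} = R\Omega^\converse$ and $\Omega^\converse = {\ni}/{\ni}$, so $\alpha(\down{R}) = R({\ni}/{\ni}){\ni} = R{\ni} = \alpha(R)$ using the law $(R/R)R = R$.

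The main obstacle is the treatment of the domain factor. In (1) it is harmless, but in (2) it survives as $\alpha(\dom(S)_\seq)$, which is not all of ${\ni}$ but its restriction to subsets of $\dom(S)$; this is exactly why (2) is only an inclusion and not an equality, the gap being the sets that meet the complement of the domain of $S$. Pinning down $\dom(S)_\seq$ as a sub-identity --- rather than appealing to a monotonicity of the Peleg lifting, which fails in general --- is the one place where I would fall back on the set-theoretic description of the lifting instead of a purely equational argument.
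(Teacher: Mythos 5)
Your proposal is correct and follows essentially the same route as the paper's proof: the same naturality identity $T_\kleisli {\ni} = {\ni}\,\alpha(T)$ combined with the decomposition $R = \bigcup_{T \subseteq_d R} T$ and union-preservation of composition for (1), deriving (2) from (1) by bounding $\alpha(\dom(S)_\seq)$ above by ${\ni}$ via the sub-identity property of $\dom(S)_\seq$, and the same one-line residual computation for (3).
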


\begin{proof}
  For (1), ${\ni} \alpha(T) = \alpha(\Lambda({\ni} T {\ni})) = \alpha(T_\kleisli)$.
  Thus
  \begin{align*}
        \alpha(R_\seq)
    & = \dom(R)_\seq \bigcup_{T \subseteq_d R} \alpha(T_\kleisli) \\
    & = \dom(R)_\seq \bigcup_{T \subseteq_d R} {\ni} \alpha(T) \\
    & = \dom(R)_\seq {\ni} \alpha(R) \\
    & = \alpha(\dom(R)_\seq) \alpha(R).
  \end{align*}

  This implies (2) by $\alpha(R \seq S) = R \alpha(S_\seq) \subseteq R \alpha(1) \alpha(S) = \alpha(R) \alpha(S)$.

  For (3), $\alpha(\down{R}) = R \Omega^\converse {\ni} = R {\ni} = \alpha(R)$ since ${\in} \Omega = {\in}$.
\end{proof}

Part (2) of the previous lemma cannot in general be strengthened to an equality.
Of course, structure is lost when approximating.

\begin{example}
  \label{example.alpha-counter}
  For $R = \{ (a,\{a,b\}) \}$,
  \begin{equation*}
      \alpha(R \seq R)
    = \alpha(\emptyset)
    = \emptyset
    \subset \{ (a,a), (a,b) \}
    = \alpha(R)
    = \alpha(R) \alpha(R).
  \end{equation*}
\end{example}

\begin{lemma}
  \label{lemma.fusion-fission-props}
  $\fis \circ \fis = \fis$, $\fus \circ \fus = \fus$, $\fis \circ \fus = \fis$ and $\fus \circ \fis = \fus$.
\end{lemma}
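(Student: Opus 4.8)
The plan is to reduce all four equations to the two unconditional cancellation identities $\alpha \circ \Lambda = \id$ (Lemma~\ref{lemma.lambda-alpha-props}(2)) and $\alpha \circ \eta = \id$ (Lemma~\ref{lemma.bij-alpha-one}), which hold on \emph{all} relations rather than only on deterministic multirelations. Unfolding $\fus = \Lambda \circ \alpha$ and $\fis = \eta \circ \alpha$, each composite $\fus \circ \fus$, $\fis \circ \fis$, $\fis \circ \fus$ and $\fus \circ \fis$ contains in its middle an occurrence of $\alpha$ applied directly to either $\Lambda(\alpha(R))$ or $\eta(\alpha(R))$; in both cases the argument $\alpha(R)$ is a relation, so the inner two maps collapse to the identity.

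Concretely, first I would compute $\fus(\fus(R)) = \Lambda(\alpha(\Lambda(\alpha(R)))) = \Lambda(\alpha(R)) = \fus(R)$, using $\alpha \circ \Lambda = \id$ on the relation $\alpha(R)$. The same cancellation gives $\fis(\fus(R)) = \eta(\alpha(\Lambda(\alpha(R)))) = \eta(\alpha(R)) = \fis(R)$. Dually, using $\alpha \circ \eta = \id$ on $\alpha(R)$, I would obtain $\fis(\fis(R)) = \eta(\alpha(\eta(\alpha(R)))) = \eta(\alpha(R)) = \fis(R)$ and $\fus(\fis(R)) = \Lambda(\alpha(\eta(\alpha(R)))) = \Lambda(\alpha(R)) = \fus(R)$. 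This establishes all four identities at once.

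There is essentially no obstacle here; the only point requiring care is the direction of cancellation. The \emph{outer} composite $\Lambda \circ \alpha$ (respectively $\eta \circ \alpha$) equals the identity only on outer (respectively inner) deterministic multirelations, so one must not attempt to cancel that pair. What makes the argument valid for arbitrary multirelations is precisely that the cancellation always occurs at the inner position, where the argument $\alpha(R)$ is guaranteed to be a relation and the applicable identity is the unconditional one. The first two equations, moreover, are already recorded as $\fus \circ \fus = \fus$ and $\fis \circ \fis = \fis$ in the proof of Proposition~\ref{proposition.eta-alpha-lambda-galois}(4).
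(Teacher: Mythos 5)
Your proof is correct and follows exactly the paper's route: the paper derives the first two identities in Proposition~\ref{proposition.eta-alpha-lambda-galois}(4) from the unconditional cancellations $\alpha \circ \Lambda = \id = \alpha \circ \eta$ and notes the remaining two are similar, which is precisely the unfolding you carry out explicitly. Your remark about cancelling only at the inner position, where the argument is a genuine relation, correctly identifies the one point of care.
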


\begin{proof}
  The first two properties are part of the closure conditions in Proposition~\ref{proposition.eta-alpha-lambda-galois}.
  The proof of the remaining ones are similar.
\end{proof}

\begin{lemma}
  \label{lemma.idet-peleg-alpha}
  Let $R : X \rto \Pow Y$ and $S : Y \to \Pow Z$.
  Then
  \begin{enumerate}
  \item $\fis(R) \seq S = \alpha(R) S$,
  \item $\fis(R \seq S) \subseteq \fis(R) \seq \fis(S)$,
  \item $\fus(R \seq S) \subem \fus(R) \seq \fus(S)$.
  \end{enumerate}
 \end{lemma}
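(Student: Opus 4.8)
The plan is to treat the three parts in sequence, establishing (1) first and then reusing it in (2), while (3) is handled separately through the Egli-Milner isotonicity of $\Lambda$. Throughout I would expand $\fis = \eta \circ \alpha$ and $\fus = \Lambda \circ \alpha$ and push the approximation inequality $\alpha(R \seq S) \subseteq \alpha(R) \alpha(S)$ of Lemma~\ref{lemma.alpha-props}(2) through the determinisation maps.

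For (1), I would first note that $\fis(R)$ is inner deterministic by Lemma~\ref{lemma.det-char}, which lets me rewrite the Peleg composition as an ordinary relational one via Lemma~\ref{lemma.inner-det-props}(1): $\fis(R) \seq S = \fis(R) 1^\converse S$. It then remains to simplify $\fis(R) 1^\converse$. Since $\fis(R)$ is inner deterministic, Lemma~\ref{lemma.inner-det-props}(3) gives $\fis(R) 1^\converse = \alpha(\fis(R))$, and the bijection $\alpha \circ \eta = \id$ from Lemma~\ref{lemma.bij-alpha-one} collapses $\alpha(\fis(R)) = \alpha(\eta(\alpha(R))) = \alpha(R)$. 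Combining these yields $\fis(R) \seq S = \alpha(R) S$.

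For (2), I would expand $\fis(R \seq S) = \alpha(R \seq S) 1$ and apply $\alpha(R \seq S) \subseteq \alpha(R) \alpha(S)$ from Lemma~\ref{lemma.alpha-props}(2). As postcomposition with $1$ preserves $\subseteq$, this gives $\fis(R \seq S) \subseteq \alpha(R) \alpha(S) 1 = \alpha(R) \fis(S)$. Finally, instantiating part (1) with $S$ replaced by the multirelation $\fis(S)$ turns the right-hand side back into the Peleg composition, since $\fis(R) \seq \fis(S) = \alpha(R) \fis(S)$, closing the argument. For (3), using functoriality of $\Lambda$ (Lemma~\ref{lemma.det-prop}(1)) on the relations $\alpha(R)$ and $\alpha(S)$, I would rewrite $\fus(R) \seq \fus(S) = \Lambda(\alpha(R)) \seq \Lambda(\alpha(S)) = \Lambda(\alpha(R) \alpha(S))$, while $\fus(R \seq S) = \Lambda(\alpha(R \seq S))$. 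The inequality $\alpha(R \seq S) \subseteq \alpha(R) \alpha(S)$ then feeds directly into the Egli-Milner isotonicity of $\Lambda$ from Lemma~\ref{lemma.lambda-em-isotone}(1), upgrading it to $\Lambda(\alpha(R \seq S)) \subem \Lambda(\alpha(R) \alpha(S))$, that is, $\fus(R \seq S) \subem \fus(R) \seq \fus(S)$.

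The one delicate point is (3): one cannot simply transport the set-inclusion $\alpha(R \seq S) \subseteq \alpha(R)\alpha(S)$ through $\fus$ as an inclusion of multirelations, because $\Lambda$ is not $\subseteq$-isotone (as the remark after Lemma~\ref{lemma.lambda-em-isotone} shows). This is precisely why the statement is phrased with $\subem$ rather than $\subseteq$, and why Lemma~\ref{lemma.lambda-em-isotone}(1) -- which supplies exactly the Egli-Milner monotonicity of $\Lambda$ -- is the essential ingredient there; everything else reduces to the approximation bound of Lemma~\ref{lemma.alpha-props}(2) and the bijection and functoriality facts already established.
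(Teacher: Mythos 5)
Your proof is correct and follows essentially the same route as the paper: parts (2) and (3) are proved identically, by pushing $\alpha(R \seq S) \subseteq \alpha(R)\alpha(S)$ through $\eta$ and through the $\subem$-isotone $\Lambda$ together with functoriality. The only (harmless) difference is in part (1), where the paper computes directly $\fis(R) \seq S = \alpha(R)\,1\,S_\seq = \alpha(R)(1 \seq S) = \alpha(R)S$ via the unit law of Peleg composition, whereas you route through Lemma~\ref{lemma.inner-det-props} and $\alpha \circ \eta = \id$; both are valid one-line arguments.
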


\begin{proof}
  For (1), $\fis(R) \seq S = \alpha(R) 1 S_\seq = \alpha(R) (1 \seq S) = \alpha(R) S$.

  For (2), $\fis(R \seq S) = \eta(\alpha(R \seq S)) \subseteq \eta(\alpha(R) \alpha(S)) = \eta(\alpha(R)) \seq \eta(\alpha(S)) = \fis(R) \seq \fis(S)$ using Lemmas~\ref{lemma.alpha-props} and~\ref{lemma.det-prop}.

  For (3), $\fus(R \seq S) = \Lambda(\alpha(R \seq S)) \subem \Lambda(\alpha(R) \alpha(S)) = \Lambda(\alpha(R)) \seq \Lambda(\alpha(S)) = \fus(R) \seq \fus(S)$ using Lemmas~\ref{lemma.alpha-props},~\ref{lemma.lambda-em-isotone} and~\ref{lemma.det-prop}.
\end{proof}

The proofs of the following lemma are immediate from properties of Section~\ref{section.relation-multirelation}.

\begin{lemma}
  \label{lemma.fusion-kleisli}
  Let $R : X \rto \Pow Y$ and $S : X \rto Y$.
  Then
  \begin{enumerate}
  \item $R_\kleisli = \fus({\ni} R)$,
  \item $\fus(R) = \eta R_\kleisli = \Lambda(R) \mu$.
  \end{enumerate}
\end{lemma}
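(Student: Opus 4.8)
The plan is to reduce every expression to a power transpose of a relation and to invoke only the definitions of $\alpha$, $\fus$, the Kleisli lifting and the relational image functor $\Pow$, together with the basic power-allegory identities of Lemma~\ref{lemma.pow-props}. The four definitions I will unfold repeatedly are $\alpha(R) = R {\ni}$, $\fus = \Lambda \circ \alpha$, $R_\kleisli = \Pow(\alpha(R))$ and $\Pow(S) = \Lambda({\ni} S)$; I will also use $\mu = \Pow({\ni})$.

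For part (1) the idea is to rewrite both sides into the common form $\Lambda({\ni} R {\ni})$. On the right, unfolding $\fus$ and $\alpha$ gives $\fus({\ni} R) = \Lambda(\alpha({\ni} R)) = \Lambda({\ni} R {\ni})$. On the left, the definition of the Kleisli lifting gives $R_\kleisli = \Pow(\alpha(R)) = \Pow(R {\ni})$, and unfolding the relational image functor on the relation $R {\ni}$ yields $\Pow(R {\ni}) = \Lambda({\ni} R {\ni})$. The two sides then coincide.

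For part (2) I would start from $\fus(R) = \Lambda(\alpha(R)) = \Lambda(R {\ni})$ and split into two short computations. To reach $\Lambda(R) \mu$, apply Lemma~\ref{lemma.pow-props}(1) in the form $\Lambda(R {\ni}) = \Lambda(R) \Pow({\ni})$ and then substitute $\mu = \Pow({\ni})$. To reach $\eta R_\kleisli$, apply Lemma~\ref{lemma.pow-props}(2) in the form $\eta \Pow(S) = \Lambda(S)$ with $S = \alpha(R) = R {\ni}$, so that $\Lambda(R {\ni}) = \eta \Pow(R {\ni}) = \eta \Pow(\alpha(R)) = \eta R_\kleisli$ by the definition of the Kleisli lifting.

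No genuine mathematical difficulty arises; as the text preceding the statement asserts, each equality is a direct unfolding of definitions followed by one application of a power-allegory identity. The only point requiring care is the bookkeeping of the object indices on $\eta$, $\mu$, $\ni$ and $\Pow$ so that every composition is well-typed -- for instance, in part (2) the map $\eta$ is $\eta_X : X \rto \Pow X$, composed on the left of $R_\kleisli : \Pow X \rto \Pow Y$ to land back in $X \rto \Pow Y$. Once the indices are fixed, the chains above verify immediately.
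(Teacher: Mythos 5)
Your proof is correct and matches the paper's intent: the paper gives no explicit argument, stating only that the proofs are immediate from the properties of Section~\ref{section.relation-multirelation}, and your unfolding of $\alpha$, $\fus$, $\Pow$ and $(-)_\kleisli$ followed by Lemma~\ref{lemma.pow-props}(1) and (2) is exactly the intended routine verification. Both parts check out, including the typing of $\eta$, $\mu$ and $\ni$.
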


The following diagram collects a number of identities for $\fus$, $\Pow$, $(-)_\Pow$ and $\mu$ from our Isabelle theories, most of which have been mentioned in previous sections.

\begin{equation*}
  \begin{tikzcd}
    X \rto \Pow^2 Y \ar[rr, "(-)_\kleisli"] \ar[d, "(-) \mu_Y"'] & & \Pow X \rto \Pow^2 Y \ar[d, "(-) \mu_Y"] \\[4ex]
    X \rto \Pow Y \ar[d, "\alpha"'] & X \rto \Pow Y \ar[l, "\fus"] \ar[ul, "\Lambda"'] \ar[ur, "\Pow"] \ar[r, "(-)_\kleisli"'] \ar[d, "\alpha"'] & \Pow X \rto \Pow Y \ar[ll, bend right=15, crossing over, "1 (-)"'] \\[4ex]
    X \rto Y & X \rto Y \ar[l, "\id"] \ar[ul, "\Lambda"] \ar[ur, "\Pow"'] \ar[r, "{\ni} (-)"'] & \Pow X \rto Y \ar[u, "\Lambda"']
  \end{tikzcd}
\end{equation*}

\begin{remark}
  The bottom half of this diagram can be extended similarly by decomposing $\fis(R) = 1_X ({\ni} R {\ni} 1_Y)$, since $1_X {\ni} = \Id_X$.
  The intermediate relation ${\ni} R {\ni} 1_Y$ corresponds to the Kleisli lifting of $R$ except that it maps to all singleton subsets instead of the union.

  Like $\fus$, the operators $\Pow$, $(-)_\Pow$ and the constant $\mu$ all factor through $\Lambda$, that is, they first produce a set of sets which is then flattened by taking their union resulting in outer deterministic multirelations.
  This dualises to $\fis$ using corresponding inner operators $\Pow_i(R) = \eta({\ni} R)$, $R_{\Pow_i} = \Pow_i(\alpha(R))$ and constant $\mu_i = \Pow_i({\ni})$ which factor through $\eta$, that is, flatten the set of sets to singleton sets resulting in inner deterministic multirelations.
  The entire extended diagram dualises this way.
  We leave further study of these inner operators and their structural role to future work.
\end{remark}


\section{Category of Inner Univalent Multirelations}
\label{section.unival-cat}

It remains to describe the category of inner univalent multirelations.
This requires different techniques, as this category is not isomorphic to $\Rel$.

Inner total multirelations have previously been called \emph{non-terminal} multirelations~\cite{FurusawaStruth2016}, writing $\nu(R)$ for the set of \emph{non-terminal} elements of $R$: those pairs in $R$ whose second component is not $\emptyset$, that is, $\nu(R) = R - \iuone$.
In addition, the map $\tau(R) = R \seq \emptyset = R \cap \iuone$ projects on the \emph{terminal} elements of $R$: those pairs in $R$ whose second component is $\emptyset$.
We use this notation in this section.

First we list properties needed for reasoning about $\alpha$, $\fis$ and $\fus$ in the presence of the non-terminal and terminal multirelations.

\begin{lemma}
  \label{lemma.nu-tau-props}
  Let $R : X \rto \Pow Y$.
  Then
  \begin{enumerate}
  \item $\alpha(\tau(R)) = \emptyset$ and $\alpha(\nu(R)) = \alpha(R)$,
  \item $\nu(\fis(R)) = \fis(R) = \fis(\nu(R))$ and $\tau(\fis(R)) = \emptyset$,
  \item $\fus(\nu(R)) = \fus(R)$,
  \item $R \seq S = \nu(R) \seq S \cup \tau(R)$,
  \item $\tau(R \seq S) = \tau(R) \cup \nu(R) \seq \tau(S)$.
  \end{enumerate}
\end{lemma}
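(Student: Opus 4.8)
The plan is to exploit the decomposition $R = \nu(R) \cup \tau(R)$, which holds by the boolean identity $(R - \iuone) \cup (R \cap \iuone) = R$, together with the fact that terminal pairs carry no information once approximated by $\alpha$. Since $\alpha = (-){\ni}$ is relational postcomposition, it preserves arbitrary unions (the quantaloid structure of $\Rel$), so items (1)--(3) will all reduce to understanding the behaviour of the terminal part $\tau(R) \subseteq \iuone$.

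For (1), I would first observe that every pair in $\tau(R)$ has the form $(a,\emptyset)$, so $\bigcup \tau(R)(a) = \emptyset$ for every $a$ and hence $\alpha(\tau(R)) = \emptyset$. The second equation then follows from $\alpha(R) = \alpha(\nu(R)) \cup \alpha(\tau(R)) = \alpha(\nu(R))$. Items (2) and (3) are then obtained by postcomposing with $\eta$ and $\Lambda$: since $\fis = \eta \circ \alpha$ and $\fus = \Lambda \circ \alpha$, the equalities $\fis(\nu(R)) = \fis(R)$ and $\fus(\nu(R)) = \fus(R)$ drop out of $\alpha(\nu(R)) = \alpha(R)$. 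The remaining claims in (2), namely $\nu(\fis(R)) = \fis(R)$ and $\tau(\fis(R)) = \emptyset$, hold because $\fis(R)$ only ever relates elements to singleton sets, which are disjoint from $\iuone$; thus $\fis(R) \cap \iuone = \emptyset$.

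For (4), I would use that Peleg composition preserves arbitrary unions in its first argument (Section~\ref{SS:multirelations}) to split $R \seq S = \nu(R) \seq S \cup \tau(R) \seq S$, and then establish the fixpoint law $\tau(R) \seq S = \tau(R)$. This last step follows directly from unfolding the definition of $\seq$: for a pair $(a,\emptyset) \in \tau(R)$ the only witnessing set is $B = \emptyset$, so the sole admissible output is $\bigcup f(\emptyset) = \emptyset$, independently of $S$.

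The hardest item is (5), which I would prove by a direct set-theoretic analysis of when $(a,\emptyset) \in R \seq S$. Unfolding the definition, this requires some $B$ with $(a,B) \in R$ and a choice function $f$ with $f|_B \subseteq S$ and $\bigcup f(B) = \emptyset$. I would split on whether $B$ is empty. If $B = \emptyset$, the pair $(a,\emptyset)$ lies in $\tau(R)$ and always produces a terminal output. If $B \neq \emptyset$, then $(a,B) \in \nu(R)$, and $\bigcup f(B) = \emptyset$ forces $f(b) = \emptyset$ for every $b \in B$; combined with $f|_B \subseteq S$ this is exactly the condition $f|_B \subseteq \tau(S)$, which characterises $(a,\emptyset) \in \nu(R) \seq \tau(S)$. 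The main obstacle will be bookkeeping the existential witnesses $B$ and $f$ cleanly so that the non-empty case matches $\nu(R) \seq \tau(S)$ on the nose; once the equivalence $\bigl(\bigcup f(B) = \emptyset \wedge f|_B \subseteq S\bigr) \Leftrightarrow f|_B \subseteq \tau(S)$ is isolated, both inclusions of (5) follow.
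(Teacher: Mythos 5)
Your proof is correct, and it is essentially a self-contained, pointwise version of what the paper handles algebraically or by citation. The paper proves (1) point-free via $\tau(R) = R \seq \emptyset$ and the computation $\alpha(\tau(R)) = R\,\emptyset_\seq\,{\ni} = \emptyset$, declares (2) and (3) obvious, and refers (4) and (5) to~\cite{FurusawaStruth2016}; you instead argue directly from the set-theoretic expansions. The substance agrees: your reduction of (2) and (3) to $\alpha(\nu(R)) = \alpha(R)$ via $\fis = \eta \circ \alpha$ and $\fus = \Lambda \circ \alpha$ is exactly the intended argument, and your two key observations for (4) and (5) --- that $\tau(R) \seq S = \tau(R)$ because an empty intermediate set forces the empty output regardless of $S$, and that $\bigcup f(B) = \emptyset$ together with $f|_B \subseteq S$ is equivalent to $f|_B \subseteq \tau(S)$ --- are precisely the content of the cited results. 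What your route buys is transparency and independence from the external reference; what the paper's phrasing buys is compatibility with a point-free, axiomatic treatment. A minor streamlining for (5): rather than case-splitting on the witness $B$ inside $R \seq S$, you can first apply (4) to get $\tau(R \seq S) = \tau(\nu(R) \seq S) \cup \tau(R)$ and then prove $\tau(\nu(R) \seq S) = \nu(R) \seq \tau(S)$, which is exactly your non-empty case (witnesses for pairs in $\nu(R)$ are automatically non-empty) and disposes of the bookkeeping you were worried about.
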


\begin{proof}
  For (1), $\alpha(\tau(R)) = R \emptyset_\seq {\ni} = R \emptyset{\ni} = \emptyset$.
  The second property follows from $R = \nu(R) \cup \tau(R)$.
  Both properties in (2) are obvious, and so is (3).
  Items (4) and (5) are from~\cite{FurusawaStruth2016}.
\end{proof}

\begin{remark}
  The multirelation $R = \{ (a,\emptyset) \}$ on $X = \{a,b\}$ satisfies
  \begin{equation*}
    \nu(\fus(R)) = \emptyset \neq R \cup \{ (b,\emptyset) \} = \fus(R)
  \end{equation*}
  because $\fus$ adds a pair $(c,\emptyset)$ for each $c$ that is not related to any set.
\end{remark}

\begin{lemma}
  \label{lemma.iuni-props}
  Let $R : X \rto \Pow Y$.
  Then the following statements are equivalent:
  \begin{enumerate}
  \item $R$ is inner univalent,
  \item $\nu(R) \subseteq \iuatoms$,
  \item $\nu(R) = \fis(R)$,
  \item $R = \fis(R) \cup \tau(R)$.
  \end{enumerate}
\end{lemma}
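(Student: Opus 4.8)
The plan is to establish the four conditions as equivalent through the cycle of implications, proving the equivalence of (1) and (2) directly and then closing the loop $(2) \Rightarrow (3) \Rightarrow (4) \Rightarrow (1)$. Throughout I would lean on the decomposition $R = \nu(R) \cup \tau(R)$ and on the explicit formula $\fis(R) = \down{R} \cap \iuatoms$ from Lemma~\ref{lemma.fus-fis-explicit}(1).

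For the equivalence of (1) and (2), I would note that $\tau(R) = R \cap \iuone \subseteq \iuone$ always holds, so inner univalence $R \subseteq \iuatoms \cup \iuone$ reduces to $\nu(R) \subseteq \iuatoms \cup \iuone$. Since $\nu(R) = R - \iuone$ contains only pairs whose second component is non-empty, it is disjoint from $\iuone$, and this is equivalent to $\nu(R) \subseteq \iuatoms$.

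For $(2) \Rightarrow (3)$ I would prove the two inclusions separately using $\fis(R) = \down{R} \cap \iuatoms$. The inclusion $\nu(R) \subseteq \fis(R)$ follows from the hypothesis $\nu(R) \subseteq \iuatoms$ together with $\nu(R) \subseteq R \subseteq \down{R}$. For $\fis(R) \subseteq \nu(R)$, any $(a,\{c\}) \in \down{R} \cap \iuatoms$ arises from some $(a,B) \in R$ with $c \in B$; then $B$ is non-empty, so $(a,B) \in \nu(R) \subseteq \iuatoms$, forcing $B = \{c\}$, whence $(a,\{c\}) \in \nu(R)$. The remaining steps are routine: $(3) \Rightarrow (4)$ is immediate from $R = \nu(R) \cup \tau(R)$, and $(4) \Rightarrow (1)$ follows from $\fis(R) \subseteq \iuatoms$ (inner determinism of $\fis(R)$ by Lemma~\ref{lemma.det-char}) together with $\tau(R) \subseteq \iuone$, which give $R = \fis(R) \cup \tau(R) \subseteq \iuatoms \cup \iuone$.

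The main obstacle is the inclusion $\fis(R) \subseteq \nu(R)$ in the step $(2) \Rightarrow (3)$: this is the only place where inner univalence is genuinely used, since one must argue that the element $c$ witnessing an atom of $\fis(R)$ actually exhausts the block of $R$ containing it, so that the atom already lies in $R$ itself rather than merely in its down-closure. Everything else is bookkeeping with the $\nu/\tau$ decomposition and the characterisation of $\fis$.
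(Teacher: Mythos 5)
Your proposal is correct, and its skeleton coincides with the paper's: the same cycle of implications, the same use of the decomposition $R = \nu(R) \cup \tau(R)$, and the same identification of $\fis(R) \subseteq \nu(R)$ as the one inclusion where inner univalence does real work. The difference lies in how that inclusion is discharged. You argue pointwise: a pair $(a,\{c\}) \in \down{R} \cap \iuatoms$ comes from some $(a,B) \in R$ with $c \in B$, and hypothesis (2) forces $B = \{c\}$, so the atom already lies in $R$. The paper instead carries out a point-free relation-algebraic derivation: it splits $\down{R} = \down{(R \cap \iuatoms)} \cup \down{(R \cap \iuone)}$, kills the second summand against $\iuatoms$, and reduces the first to the inclusion $\Omega^\converse \cap 1^\converse U 1 \subseteq \Id$, which it establishes via the identity $1 \Omega 1^\converse = \Id$ and the modular law. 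Your element-chase is shorter and more transparent, and it is perfectly legitimate in the concrete setting of the paper (which elsewhere also appeals to ``simple set-theoretic calculations''). What the paper's longer computation buys is that every step stays inside the algebraic language of residuals, converse and the multirelational constants, so the argument would survive a move to the axiomatic power-allegory setting the authors flag as future work; your version would not transfer without redoing this step. A minor further difference: for $(4) \Rightarrow (1)$ you invoke inner determinism of $\fis(R)$ via Lemma~\ref{lemma.det-char}, whereas the paper reads $\fis(R) \subseteq \iuatoms$ directly off the formula $\fis(R) = \down{R} \cap \iuatoms$; both are sound.
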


\begin{proof}
  Suppose (1), that is, $R \subseteq \iuone \cup \iuatoms$.
  Then $\nu(R) = R - \iuone \subseteq \iuatoms$ yields (2) by boolean algebra, and further $\nu(R) \subseteq R \cap \iuatoms \subseteq \down{R} \cap \iuatoms = \fis(R)$.
  For (3) it then remains to show $\fis(R) \subseteq \nu(R)$ and thus $\down{R} \cap \iuatoms \subseteq R$ by Lemma~\ref{lemma.fus-fis-explicit}.
  Further, by Lemmas~\ref{lemma.det-fix-ref} and because down-closure of multirelations preserves unions~\cite[Lemma 4.3]{FurusawaGuttmannStruth2023a},
  \begin{equation*}
    \down{R} = \down{(R \cap (\iuatoms \cup \iuone))} = \down{(R \cap \iuatoms)} \cup \down{(R \cap \iuone)}
  \end{equation*}
  Since $\down{(R \cap \iuone)} \cap \iuatoms \subseteq \down{\iuone} \cap \iuatoms = \iuone \cap \iuatoms = \emptyset$, it remains to show $\down{(R \cap \iuatoms)} \cap \iuatoms \subseteq R$.
  Using $\iuatoms = U 1$, we have
  \begin{equation*}
    \down{(R \cap \iuatoms)} \cap \iuatoms = (R \cap U 1) \Omega^\converse \cap U 1 = R (\Omega^\converse \cap 1^\converse U \cap U 1) = R (\Omega^\converse \cap 1^\converse U 1)
  \end{equation*}
  so it suffices to show $\Omega^\converse \cap 1^\converse U 1 \subseteq \Id$.
  This is obtained by
  \begin{equation*}
    \Omega^\converse \cap 1^\converse U 1 \subseteq 1^\converse (U 1 \cap 1 \Omega^\converse) = 1^\converse 1 (\Omega^\converse \cap U 1) \subseteq 1^\converse 1 \Omega^\converse 1^\converse 1 = 1^\converse 1 \subseteq \Id
  \end{equation*}
  if we can show $1 \Omega^\converse 1^\converse = \Id$.
  The latter follows by applying converse to
  \begin{equation*}
    1 \Omega 1^\converse = 1 ({\in} \backslash {\in}) 1^\converse = ({\in} 1^\converse \backslash {\in}) 1^\converse = {\in} 1^\converse \backslash {\in} 1^\converse = \Id \backslash \Id = \Id
  \end{equation*}
  From (3) we immediately obtain (4) by partitioning $R$ into terminal and non-terminal elements.
  Finally, (4) implies (1) because $R = \fis(R) \cup \tau(R) \subseteq \iuatoms \cup \iuone$ using Lemma~\ref{lemma.fus-fis-explicit}.
\end{proof}

It follows that $\fis(R) \subseteq R$ for inner univalent $R$.

\begin{lemma}
  \label{lemma.peleg-alpha-tau}
  Let $R$ be inner univalent.
  Then
  \begin{enumerate}
  \item $R \seq S = \alpha(R) S \cup \tau(R) $,
  \item $\alpha(R \seq S) = \alpha(R) \alpha(S)$,
  \item $\fis(R \seq S) = \fis(R) \seq \fis(S)$ and $\fus(R \seq S) = \fus(R) \seq \fus(S)$.
  \end{enumerate}
\end{lemma}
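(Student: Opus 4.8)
The plan is to prove part~(1) first, since it provides an explicit formula for Peleg composition with an inner univalent first factor, and then to obtain parts~(2) and~(3) by pushing this formula through the maps $\alpha$, $\eta$ and $\Lambda$. The key observation is that inner univalence lets us split $R$ into its non-terminal part, which behaves like an inner deterministic multirelation, and its terminal part, which contributes nothing after composition with $\alpha$.

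For part~(1), I would start from the decomposition $R \seq S = \nu(R) \seq S \cup \tau(R)$ of Lemma~\ref{lemma.nu-tau-props}(4). Since $R$ is inner univalent, Lemma~\ref{lemma.iuni-props} gives $\nu(R) = \fis(R)$, and then $\fis(R)$ composes relationally: $\fis(R) \seq S = \alpha(R) S$ by Lemma~\ref{lemma.idet-peleg-alpha}(1). Substituting yields $R \seq S = \alpha(R) S \cup \tau(R)$ directly.

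Part~(2) then follows by applying $\alpha = (-){\ni}$ to this identity. As relational composition preserves unions, $\alpha$ distributes over $\cup$; the terminal summand vanishes because $\alpha(\tau(R)) = \emptyset$ by Lemma~\ref{lemma.nu-tau-props}(1); and the remaining term flattens by associativity, $\alpha(\alpha(R) S) = \alpha(R)(S {\ni}) = \alpha(R) \alpha(S)$. Hence $\alpha(R \seq S) = \alpha(R) \alpha(S)$, upgrading the general inclusion of Lemma~\ref{lemma.alpha-props}(2) to an equality. For part~(3), I would substitute this equality into $\fis(R \seq S) = \eta(\alpha(R \seq S))$ and $\fus(R \seq S) = \Lambda(\alpha(R \seq S))$, and then distribute $\eta$ and $\Lambda$ over relational composition using Lemma~\ref{lemma.det-prop}(1),(2), obtaining $\eta(\alpha(R))\seq\eta(\alpha(S)) = \fis(R)\seq\fis(S)$ and $\Lambda(\alpha(R))\seq\Lambda(\alpha(S)) = \fus(R)\seq\fus(S)$ respectively.

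I expect no serious obstacle: the content is entirely in recognising that inner univalence is exactly the hypothesis needed to close the gap in the general inequalities $\alpha(R \seq S) \subseteq \alpha(R)\alpha(S)$ and $\fis(R\seq S)\subseteq\fis(R)\seq\fis(S)$, $\fus(R\seq S)\subem\fus(R)\seq\fus(S)$ of Lemmas~\ref{lemma.alpha-props} and~\ref{lemma.idet-peleg-alpha}, which genuinely fail for arbitrary $R$ (cf.\ Example~\ref{example.alpha-counter}). The one mild point to watch is the typing of $\tau(R)$, which must be read as $R \seq \emptyset$ landing in the target homset $X \rto \Pow Z$ rather than in $X \rto \Pow Y$, so that the union in part~(1) is well formed.
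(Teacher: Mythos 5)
Your proposal is correct and follows essentially the same route as the paper's proof: part (1) via the decomposition $R \seq S = \nu(R) \seq S \cup \tau(R)$ together with $\nu(R) = \fis(R)$ and $\fis(R) \seq S = \alpha(R) S$, part (2) by applying $\alpha$ and using $\alpha(\tau(R)) = \emptyset$, and part (3) via Lemma~\ref{lemma.det-prop}. Your remark on the typing of $\tau(R)$ is a fair observation of a detail the paper leaves implicit, but it changes nothing in the argument.
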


\begin{proof}
  Suppose $R$ is inner univalent.
  For (1), therefore, $\nu(R) = \fis(R)$ by Lemma~\ref{lemma.iuni-props}.
  Hence, with Lemmas~\ref{lemma.idet-peleg-alpha},~\ref{lemma.nu-tau-props} and~\ref{lemma.iuni-props}, $R \seq S = \nu(R) \seq S \cup \tau(R) = \fis(R) \seq S \cup \tau(R) = \alpha(R) S \cup \tau(R)$.

  For (2), $\alpha(R \seq S) = \alpha(\alpha(R) S \cup \tau(R)) = \alpha(\alpha(R) S) \cup \alpha(\tau(R))= \alpha(R) \alpha(S)$ using (1) and Lemma~\ref{lemma.nu-tau-props}.

  Item (3) is then immediate from (2) and Lemma~\ref{lemma.det-prop}.
\end{proof}

Items (2) and (3) generalise parts of Lemma~\ref{lemma.det-prop} and Corollary~\ref{cor.det-cat2} from inner determinism to inner univalence.

\begin{proposition}
  \label{proposition.peleg-rel-inner-univalent}
  The inner univalent multirelations form a category with composition $\seq$ and identity arrows $1_X$.
\end{proposition}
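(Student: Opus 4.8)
The plan is to verify the three category axioms for Peleg composition $\seq$ restricted to inner univalent multirelations: that each $1_X$ is an inner univalent arrow acting as a two-sided unit, that $\seq$ preserves inner univalence (closure), and that $\seq$ is associative on inner univalent arrows. The units and closure are quick; the associativity is the real content, since this is precisely where unrestricted Peleg composition breaks down.

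For the units, $1_X = \eta_X$ is inner deterministic by Lemma~\ref{lemma.det-char}, hence inner univalent. The left unit law $1_X \seq R = \eta_X R_\seq = R$ is the general extension-system identity from Lemma~\ref{lemma.univalent-assoc}(1), and the right unit law $R \seq 1_Y = R (1_Y)_\seq = R\,\Id = R$ follows because $1_Y$ is deterministic, so $(1_Y)_\seq = (1_Y)_\kleisli = \Pow(\alpha(1_Y)) = \Pow(\Id_Y) = \Id$ using Lemmas~\ref{lemma.det-circ} and~\ref{lemma.pow-props}. For closure, I would use Lemma~\ref{lemma.peleg-alpha-tau}(1), which for inner univalent $R$ gives $R \seq S = \alpha(R) S \cup \tau(R)$. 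Since $\tau(R) \subseteq \iuone$, applying $\nu = (-) - \iuone$ yields $\nu(R \seq S) = \nu(\alpha(R) S)$. The second component of any pair of $\alpha(R) S$ is the second component of some pair of the inner univalent $S$, so $\alpha(R) S \subseteq \iuatoms \cup \iuone$; hence $\nu(R \seq S) \subseteq \iuatoms$, and $R \seq S$ is inner univalent by Lemma~\ref{lemma.iuni-props}.

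The associativity argument reduces both bracketings of $R \seq S \seq T$ (for inner univalent $R,S,T$) to a common normal form by isolating the relational part $\alpha(R)\alpha(S)T$ from the terminal contributions. On the left, the closure just established lets me reapply Lemma~\ref{lemma.peleg-alpha-tau}(1) to the inner univalent $R \seq S$, giving $(R \seq S) \seq T = \alpha(R \seq S) T \cup \tau(R \seq S)$; then Lemma~\ref{lemma.peleg-alpha-tau}(2) rewrites $\alpha(R \seq S) = \alpha(R)\alpha(S)$, while Lemma~\ref{lemma.nu-tau-props}(5) together with $\nu(R) = \fis(R)$ (Lemma~\ref{lemma.iuni-props}) and Lemma~\ref{lemma.idet-peleg-alpha}(1) give $\tau(R \seq S) = \tau(R) \cup \alpha(R)\tau(S)$, so
\[
  (R \seq S) \seq T = \alpha(R)\alpha(S) T \cup \tau(R) \cup \alpha(R)\tau(S).
\]
On the right, applying Lemma~\ref{lemma.peleg-alpha-tau}(1) twice gives $R \seq (S \seq T) = \alpha(R)\bigl(\alpha(S) T \cup \tau(S)\bigr) \cup \tau(R)$, and distributing relational composition over union produces the identical expression.

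The main obstacle is associativity: the sub-associativity $(R \seq S)\seq T \subseteq R \seq (S \seq T)$ is automatic, so the content is the reverse inclusion and the bookkeeping of terminal elements that makes it hold here but fail in general. The whole strategy hinges on Lemmas~\ref{lemma.peleg-alpha-tau} and~\ref{lemma.nu-tau-props} putting both bracketings into the same normal form $\alpha(R)\alpha(S)T \cup \tau(R) \cup \alpha(R)\tau(S)$. I would take extra care over the two points where a type or coercion could slip: the right-unit computation $(1_Y)_\seq = \Id$, and the claim that left-composition by the relation $\alpha(R)$ genuinely preserves inner univalence, which rests on the second-component argument rather than on any relational-algebraic manipulation.
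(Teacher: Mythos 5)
Your proof is correct and follows essentially the same route as the paper: the normal form $\alpha(R)\alpha(S)T \cup \tau(R) \cup \alpha(R)\tau(S)$ obtained from Lemmas~\ref{lemma.peleg-alpha-tau} and~\ref{lemma.nu-tau-props} is exactly the paper's associativity argument, merely expanded from both bracketings rather than rewritten from one to the other. The only divergences are cosmetic: you verify the unit laws explicitly (the paper relies on $\eta_X$ already being identified as the Peleg units), and your closure step uses the characterisation $\nu(R \seq S) \subseteq \iuatoms$ via a pointwise second-component observation where the paper derives $R \seq S = \fis(R \seq S) \cup \tau(R \seq S)$ algebraically --- both are valid instances of Lemma~\ref{lemma.iuni-props}.
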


\begin{proof}
  Suppose $Q$, $R$ and $S$ are composable and inner univalent.
  For closure under Peleg composition, the $1_X$ are clearly inner univalent and $R \seq S$ is inner univalent because
  \begin{align*}
        R \seq S
    & = \alpha(R) (\fis(S) \cup \tau(S)) \cup \tau(R) \\
    & = \fis(R) \seq \fis(S) \cup \fis(R) \seq \tau(S) \cup \tau(R) \\
    & = \fis(R \seq S) \cup \tau(R \seq S),
  \end{align*}
  using Lemmas~\ref{lemma.idet-peleg-alpha},~\ref{lemma.nu-tau-props},~\ref{lemma.iuni-props} and~\ref{lemma.peleg-alpha-tau}.
  For associativity, using the same lemmas,
  \begin{align*}
        Q \seq (R \seq S)
    & = \alpha(Q) (\alpha(R) S \cup \tau(R)) \cup \tau(Q) \\
    & = \alpha(Q) \alpha(R) S \cup \alpha(Q) \tau(R) \cup \tau(Q) \\
    & = \alpha(Q \seq R) S \cup \tau (Q \seq R) \\
    & = (Q \seq R) \seq S.
    \qedhere
  \end{align*}
\end{proof}

We have thus characterised the categories of inner and outer deterministic as well as those of inner and outer univalent multirelations within the algebra of multirelations.

\begin{lemma}
  \label{lemma.quantaloid-inner-univalent}
  Each homset in the category of inner univalent multirelations is a complete lattice.
  In this category, Peleg composition preserves arbitrary sups in the first argument and non-empty sups in the second argument.
\end{lemma}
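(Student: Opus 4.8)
The plan is to treat the two assertions independently. For the lattice structure I would start from Lemma~\ref{lemma.det-fix-ref}(1), which identifies the inner univalent multirelations in $\Mult(X,Y)$ as exactly those $R$ with $R \subseteq \iuatoms \cup \iuone$, that is, as the principal ideal $\{R : R \subseteq \iuatoms \cup \iuone\}$ of the complete atomic boolean algebra $X \rto \Pow Y$. Ordered by inclusion, this set is closed under arbitrary unions, since $R_i \subseteq \iuatoms \cup \iuone$ for all $i$ forces $\bigcup_i R_i \subseteq \iuatoms \cup \iuone$; these unions are therefore the suprema, with the empty union $\emptyset$ as least element and $\iuatoms \cup \iuone$ as greatest element. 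A subset of a complete lattice that is closed under arbitrary joins and possesses a greatest element is itself a complete lattice, which settles the first claim. (Infima are intersections, except that the empty infimum is $\iuatoms \cup \iuone$ rather than the universal relation.)

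For preservation of suprema in the first argument I would simply inherit the property from relational composition. Since $R \seq S = R S_\seq$ and relational composition preserves arbitrary unions in its first argument in the quantaloid $\Rel$, we obtain $(\bigcup_i R_i) \seq S = \bigcup_i (R_i \seq S)$ for every family, including the empty one. By Proposition~\ref{proposition.peleg-rel-inner-univalent} each $R_i \seq S$ is again inner univalent and so is $\bigcup_i R_i$, so these unions are precisely the suprema in the homset and the equation holds internally to the category.

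The substance lies in the second argument, and this is the only point where I expect more than a routine calculation. Here I would use the normal form of Lemma~\ref{lemma.peleg-alpha-tau}(1): for inner univalent $R$ we have $R \seq S = \alpha(R) S \cup \tau(R)$. Distributing relational composition over a union in its second argument gives $\alpha(R)(\bigcup_i S_i) = \bigcup_i \alpha(R) S_i$, so that $R \seq (\bigcup_i S_i) = (\bigcup_i \alpha(R) S_i) \cup \tau(R)$, whereas $\bigcup_i (R \seq S_i) = (\bigcup_i \alpha(R) S_i) \cup \bigcup_i \tau(R)$. These two expressions agree exactly when $\bigcup_{i \in I} \tau(R) = \tau(R)$, which holds precisely when the index set $I$ is non-empty; for $I = \emptyset$ the second computation loses the constant terminal part $\tau(R) = R \cap \iuone$, which is in general non-empty. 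This mismatch is exactly what rules out preservation of the empty supremum and forces the restriction to non-empty suprema, and it is the one step the proof must handle with care rather than by a routine distributivity argument.
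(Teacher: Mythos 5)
Your proposal is correct and follows essentially the same route as the paper: closure of the inner univalent multirelations under arbitrary unions gives the complete lattice structure, sup-preservation in the first argument is inherited from relational composition via $R \seq S = R S_\seq$, and the second-argument claim is exactly the computation $R \seq \bigcup_i S_i = \alpha(R)\bigl(\bigcup_i S_i\bigr) \cup \tau(R)$ from Lemma~\ref{lemma.peleg-alpha-tau}(1), with the constant term $\tau(R)$ accounting for the failure at $I = \emptyset$. The only cosmetic difference is that you justify union-closure via the principal-ideal characterisation $R \subseteq \iuatoms \cup \iuone$ where the paper invokes Lemma~\ref{lemma.iuni-props} and the fact that $\nu$ preserves arbitrary unions; both are valid.
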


\begin{proof}
  Inner univalent multirelations are closed under arbitrary unions by Lemma~\ref{lemma.iuni-props} since $\nu$ preserves arbitrary unions.
  This forms a complete lattice structure on the homsets.
  Peleg composition preserves arbitrary unions in its first argument.
  Preservation in the second argument holds for non-empty unions:
  \begin{equation*}
    R \seq \bigcup_{i \in I} S_i = \left( \alpha(R) \bigcup_{i \in I} S_i \right) \cup \tau(R) = \bigcup_{i \in I} (\alpha(R) S_i \cup \tau(R)) = \bigcup_{i \in I} R \seq S_i
  \end{equation*}
  by Lemma~\ref{lemma.peleg-alpha-tau} if $I \neq \emptyset$.
\end{proof}

\begin{remark}
  Inner univalent multirelations do not form quantaloids with respect to sups, that is, Proposition~\ref{proposition.peleg-rel} does not generalise beyond Lemma~\ref{lemma.quantaloid-inner-univalent}.
  For $I = \emptyset$ we have $R \seq \bigcup_{i \in I} S_i = R \seq \emptyset$ and $\bigcup_{i \in I} (R \seq S_i) = \emptyset$.
  But $R \seq \emptyset = \emptyset$ if and only if $\tau(R) = \emptyset$, that is, $R$ must be inner total.
  This shows that $R$ must be inner deterministic for this argument to work.

  Likewise, outer univalent multirelations do not form quantaloids with respect to $\iU$.
  A counterexample again uses $I = \emptyset$ in which case $\emptyset \seq \iU_{i \in I} S_i = \emptyset$ but $\iU_{i \in I} (\emptyset \seq S_i) = \iuone$.
\end{remark}

\begin{remark}
  Lemma~\ref{lemma.peleg-alpha-tau} and Proposition~\ref{proposition.peleg-rel-inner-univalent} show that $\alpha$, $\fis$ and $\fus$ are functors from the category of inner univalent multirelations to $\Rel$ and the categories of inner and outer deterministic multirelations, respectively.
  They need not be injective.
  For $X = \{a,b\}$, for instance, $\alpha$ maps the inner univalent multirelations $\{ (a,\emptyset) \}$ and $\{ (a,\emptyset), (b,\emptyset) \}$ in $X \to \Pow Y$ to the relation $\emptyset_{X,Y}$.
  This failure of injectivity extends along $\Lambda$ and $\eta$, so that the categories are not isomorphic.
\end{remark}


\section{A Fine-Grained View on Determinisation}
\label{section.fine-grained}

Many properties of inner deterministic multirelations hold already of inner univalent ones.
Here we prove refined results.
First we refine Corollary~\ref{cor.det-fix2} for deterministic multirelations.

\begin{lemma}
  \label{lemma.det-fix}
  ~
  \begin{enumerate}
  \item The outer univalent multirelations are precisely the postfixpoints of $\fus$ with respect to $\subseteq$ and the prefixpoints of $\fus$ with respect to $\subs$.
  \item Prefixpoints of $\fus$ with respect to $\subseteq$ and $\subh$ and postfixpoints with respect to $\subs$ are outer total.
        The postfixpoints of $\fus$ with respect to $\subem$ and $\subs$ coincide.
  \item Prefixpoints of $\fus$ with respect to $\subem$ are outer deterministic.
  \end{enumerate}
\end{lemma}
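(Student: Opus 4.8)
The plan is to reduce each fixpoint condition to a pointwise statement about the fibres $R(a) \subseteq \Pow Y$ and their unions, using the set-theoretic description $\fus(R) = \{ (a, \bigcup R(a)) \mid a \in X \}$ together with $\up{S} = \{ (a,A) \mid \exists (a,B) \in S.\ B \subseteq A \}$ and $\down{S} = \{ (a,A) \mid \exists (a,B) \in S.\ A \subseteq B \}$. The governing fact is that $\fus(R)$ is a total function sending each $a$ to the single set $\bigcup R(a)$; in particular $(a, \bigcup R(a)) \in \fus(R)$ for \emph{every} $a \in X$, including those with $R(a) = \emptyset$, where $\bigcup R(a) = \emptyset$.

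For (1) I would establish both characterisations of outer univalence directly. The containment $R \subseteq \fus(R)$ forces every $(a,B) \in R$ to satisfy $B = \bigcup R(a)$; since any $B \in R(a)$ also satisfies $B \subseteq \bigcup R(a)$, each fibre $R(a)$ is then empty or a singleton, which is outer univalence, and the converse is immediate from the same pointwise description. (Equivalently, $R \subseteq \fus(R)$ exhibits $R$ as a subrelation of the function $\fus(R)$, hence a partial function.) For the $\subs$-prefixpoints, I unfold $\fus(R) \subs R$ to $R \subseteq \up{\fus(R)}$, i.e.\ every $(a,A) \in R$ satisfies $\bigcup R(a) \subseteq A$; together with the always-valid $A \subseteq \bigcup R(a)$ this gives $A = \bigcup R(a)$ for all $A \in R(a)$, again outer univalence.

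For (2) the three totality claims all rest on $(a, \bigcup R(a)) \in \fus(R)$ holding for every $a$. A $\subseteq$-prefixpoint $\fus(R) \subseteq R$ yields $\bigcup R(a) \in R(a)$ and hence $R(a) \neq \emptyset$; a $\subh$-prefixpoint $\fus(R) \subseteq \down{R}$ and a $\subs$-postfixpoint $\fus(R) \subseteq \up{R}$ each require, through the existential quantifier in $\down{R}$ respectively $\up{R}$, an actual witness $(a,B) \in R$, so again $R(a) \neq \emptyset$. As this holds for all $a$, each condition makes $R$ outer total. The coincidence of the $\subem$- and $\subs$-postfixpoints follows at once from the closure property $R \subh \fus(R)$ of Proposition~\ref{proposition.eta-alpha-lambda-galois}(4): the Hoare component of $R \subem \fus(R)$ is automatic, so $R \subem \fus(R)$ reduces to $R \subs \fus(R)$.

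For (3) a $\subem$-prefixpoint means $\fus(R) \subh R$ together with $\fus(R) \subs R$. The first conjunct is a $\subh$-prefixpoint, giving outer totality by (2); the second is a $\subs$-prefixpoint, giving outer univalence by (1); jointly these give outer determinism. (The converse also holds, since outer deterministic multirelations are fixpoints of $\fus$ by Corollary~\ref{cor.det-fix2}.) I expect the main obstacle to be the bookkeeping around empty fibres: because $\fus(R)$ always relates $a$ to $\bigcup R(a)$ even when $R(a) = \emptyset$, whereas $\down{R}$ and $\up{R}$ demand a genuine witness $(a,B) \in R$, the mismatch at empty fibres is precisely what produces totality, and one must track which preorder inclusion points which way so as not to conflate the univalence and the totality conclusions.
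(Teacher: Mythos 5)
Your proof is correct, and its overall architecture matches the paper's: both directions of the $\subseteq$-postfixpoint characterisation, the bridge between $\subseteq$-postfixpoints and $\subs$-prefixpoints, totality extracted separately from each of the three conditions in (2), the coincidence claim via the universally valid $R \subh \fus(R)$, and (3) assembled from (1) and (2). Where you differ is in method: you verify every implication pointwise from $\fus(R) = \{(a,\bigcup R(a)) \mid a \in X\}$ and the witness semantics of $\up{(-)}$ and $\down{(-)}$, whereas the paper works point-free, deriving $R \subseteq \fus(R)$ for univalent $R$ from $R = R(\syq{\in}{\in})$ and the symmetric-quotient form $\fus(R) = \syq{{\in}R^\converse}{\in}$, obtaining univalence from $\fus(R)^\converse\fus(R) \subseteq \syq{\in}{\in} = \Id$, and proving totality by the calculation $U = \fus(R)\,U \subseteq \up{R}\,U = R\,U$. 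Your route is more elementary and makes the role of empty fibres transparent (your closing observation that $\down{R}$ and $\up{R}$ demand a genuine witness while $\fus(R)$ relates every $a$ to $\bigcup R(a)$ is exactly the mechanism behind all three totality claims); the paper's route is the one that survives the move to an axiomatic power-allegory setting and machine-checking, which is the authors' stated aim. One step you leave implicit is the forward direction of the $\subs$-prefixpoint characterisation in (1) (univalence implies $\fus(R) \subs R$); it follows in one line from your $R \subseteq \fus(R)$ together with $\fus(R) \subseteq \up{\fus(R)}$, so nothing is actually missing.
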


\begin{proof}
  For (1), if $R$ is univalent, then
  \begin{equation*}
    R = R (\syq{\in}{\in}) = (\syq{{\in} R^\converse}{\in}) \cap R U \subseteq \syq{{\in} R^\converse}{\in} = \Lambda(\alpha(R)) = \fus(R).
  \end{equation*}
  Conversely, if $R$ is a postfixpoint of $\fus$ with respect to $\subseteq$, then
  \begin{equation*}
    R^\converse R \subseteq \fus(R)^\converse \fus(R) = \Lambda(\alpha(R))^\converse \Lambda(\alpha(R)) = (\syq{\in}{{\in} R^\converse}) (\syq{{\in} R^\converse}{\in}) \subseteq \syq{\in}{\in} = \Id.
  \end{equation*}
  Since $\fus(R) \subseteq \up{\fus(R)}$, postfixpoints of $\fus$ with respect to $\subseteq$ are also prefixpoints with respect to $\subs$.
  Conversely, if $R$ is a prefixpoint with respect to $\subs$, then
  \begin{equation*}
    R \subseteq \up{\fus(R)} = \Lambda(\alpha(R)) \Omega = \Lambda(\alpha(R)) ({\in} \backslash {\in}) = {\in} \Lambda(\alpha(R))^\converse \backslash {\in} = {\in} (\syq{\in}{{\in} R^\converse}) \backslash {\in} = {\in} R^\converse \backslash {\in}.
  \end{equation*}
  Together with $R \subseteq \alpha(R) / {\ni}$, we obtain $R \subseteq \syq{{\in} R^\converse}{\in} = \Lambda(\alpha(R)) = \fus(R)$.
  Thus $R$ is a postfixpoint with respect to $\subseteq$.

  For (2), if $R$ is a postfixpoint of $\fus$ with respect to $\subs$, then $\fus(R) \subseteq \up{R}$.
  Hence totality of $R$ follows by $U = \Lambda(\alpha(R)) U = \fus(R) U \subseteq \up{R} U = R \Omega U = R U$ using that $\Lambda$ yields deterministic multirelations and $\Omega$ is total.
  The proof for prefixpoints with respect to $\subh$ is similar, using $\Omega^\converse$ instead of $\Omega$.
  Moreover prefixpoints with respect to $\subseteq$ are also postfixpoints with respect to $\subs$ since $R \subseteq \up{R}$.
  The remaining claim follows since any $R$ is a postfixpoint of $\fus$ with respect to $\subh$ (Proposition~\ref{proposition.eta-alpha-lambda-galois}).

  Finally, (3) follows by (1) and (2).
\end{proof}

Obviously, if $R$ is outer deterministic, then $\fus(R) = \Lambda(\alpha(R)) = R$.
The converse implication follows by (3) above.
This yields an alternative algebraic proof of the fact that the outer deterministic multirelations are precisely the fixpoints of $\fus$.

Next we refine Corollary~\ref{cor.det-fix2} for inner deterministic multirelations.

\begin{lemma}
  \label{lemma.idet-fix}
  ~
  \begin{enumerate}
  \item Inner univalent multirelations are prefixpoints of $\fis$ with respect to $\subseteq$ and postfixpoints of $\fis$ with respect to $\subs$.
  \item Postfixpoints of $\fis$ with respect to $\subh$ are inner univalent.
        The postfixpoints of $\fis$ with respect to $\subem$ and $\subh$ coincide.
  \item The inner total multirelations are precisely the prefixpoints of $\fis$ with respect to $\subs$.
        The prefixpoints of $\fis$ with respect to $\subem$ and $\subs$ coincide.
  \item Postfixpoints of $\fis$ with respect to $\subseteq$ are inner deterministic.
  \end{enumerate}
\end{lemma}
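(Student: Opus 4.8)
The plan is to treat this as the inner-dual of Lemma~\ref{lemma.det-fix} and to reuse its structure with inner ingredients in place of outer ones. The tools I would rely on are the explicit formula $\fis(R) = \down{R} \cap \iuatoms$ (Lemma~\ref{lemma.fus-fis-explicit}(1)), inner determinism of $\fis(R)$, so that $\fis(R) \subseteq \iuatoms$ (Lemma~\ref{lemma.det-char}), the fact that $\fis$ is an interior operator for $\subh$, giving $\fis(R) \subh R$ for every $R$ (Proposition~\ref{proposition.eta-alpha-lambda-galois}(4)), and the characterisations of inner univalence, totality and determinism through $\iuatoms$, $\iuone$ (Lemma~\ref{lemma.det-fix-ref}) and through $\nu$, $\tau$, $\fis$ (Lemmas~\ref{lemma.iuni-props} and~\ref{lemma.nu-tau-props}). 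One identity organises much of the argument: since $\fis(R) = \alpha(R) 1$, $1 \Omega = {\in}$ and $\alpha(R) = R {\ni}$, I obtain $\up{\fis(R)} = \alpha(R) {\in} = R {\ni} {\in}$.

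For (1), the prefixpoint inclusion $\fis(R) \subseteq R$ for inner univalent $R$ is the remark following Lemma~\ref{lemma.iuni-props}; the postfixpoint claim $R \subs \fis(R)$, that is $\fis(R) \subseteq \up{R}$, then follows from $\fis(R) \subseteq R \subseteq \up{R}$ by extensivity of up-closure, exactly as in the dual. For (4), a postfixpoint $R \subseteq \fis(R)$ gives $R \subseteq \fis(R) \subseteq \iuatoms$, which is inner determinism by Lemma~\ref{lemma.det-fix-ref}(3); alternatively one may route through (2) to get inner univalence, conclude $\fis(R) \subseteq R$ and hence $R = \fis(R)$, which is inner deterministic by Corollary~\ref{cor.det-fix2}.

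For (2), a $\subh$-postfixpoint satisfies $R \subseteq \down{\fis(R)}$. Since $\fis(R) \subseteq \iuatoms$ and $\iuatoms \cup \iuone$ is down-closed (subsets of singletons are singletons or empty), monotonicity of $\down{(\cdot)}$ yields $\down{\fis(R)} \subseteq \iuatoms \cup \iuone$, so $R$ is inner univalent. For the coincidence of $\subem$- and $\subh$-postfixpoints, a $\subh$-postfixpoint is inner univalent, hence a $\subs$-postfixpoint by (1), and $\subem$ is the conjunction of $\subh$ and $\subs$.

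For (3) I would use $\up{\fis(R)} = R {\ni} {\in}$. Postcomposition with ${\in}$ forces non-empty second components, so $\up{\fis(R)} \subseteq -\iuone$; conversely $\nu(R) \subseteq \up{\fis(R)}$, since any non-empty $(a,B) \in R$ contains some $b$, whence $(a,\{b\}) \in \fis(R)$ with $\{b\} \subseteq B$. Together these give $R \cap \up{\fis(R)} = \nu(R)$, so a $\subs$-prefixpoint $R \subseteq \up{\fis(R)}$ holds iff $R = \nu(R)$, iff $\tau(R) = \emptyset$, iff $R$ is inner total (Lemma~\ref{lemma.det-fix-ref}(2)); the coincidence of $\subem$- and $\subs$-prefixpoints is then immediate from $\fis(R) \subh R$. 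I expect the only genuinely delicate step to be the inclusion $\nu(R) \subseteq \up{\fis(R)}$ in this self-referential form: it is transparent set-theoretically but, algebraically, rests on the identity above together with the observation that $R {\ni} {\in}$ agrees with the identity on non-empty sets, equivalently that $\Id \cap {\ni} {\in}$ is the test selecting non-empty sets.
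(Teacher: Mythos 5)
Your proof is correct and takes essentially the same route as the paper's: parts (1) and (2) coincide with it, and part (3) differs only in that you argue the converse direction set-theoretically via $R \cap \up{\fis(R)} = \nu(R)$ rather than through the paper's algebraic chain starting from $R \subseteq R R^\converse R$. Your primary argument for (4), namely $R \subseteq \fis(R) \subseteq \iuatoms$ followed by Lemma~\ref{lemma.det-fix-ref}(3), is in fact slightly more direct than the paper's, which routes through (2) and (3).
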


\begin{proof}
  ~
  (1) follows by Lemma~\ref{lemma.iuni-props}, using $\fis(R) \subseteq R \subseteq \up{R}$.

  For (2), if $R$ is a postfixpoint of $\fis$ with respect to $\subh$, then $R \subseteq \down{\fis(R)} \subseteq \down{\iuatoms} = \iuone \cup \iuatoms$, so $R$ is inner univalent.
  Hence by (1), $R$ is also a postfixpoint with respect to $\subs$ and therefore with respect to $\subem$.

  For (3), if $R$ is a prefixpoint of $\fis$ with respect to $\subs$, then $R \subseteq \up{\fis(R)} \subseteq \up{\iuatoms} = -\iuone$, so $R$ is inner total.
  Conversely, if $R$ is inner total, then
  \begin{equation*}
    R \subseteq R R^\converse R \subseteq R (-\iuone)^\converse (-\iuone) = R {\ni} {\in} = R {\ni} 1 \Omega = \fis(R) \Omega = \up{\fis(R)}.
  \end{equation*}
  The claim follows since any $R$ is a prefixpoint of $\fis$ with respect to $\subh$ by Proposition~\ref{proposition.eta-alpha-lambda-galois}.

  For (4), since $\fis(R) \subseteq \down{\fis(R)}$ and $\fis(R) \subseteq \up{\fis(R)}$, postfixpoints of $\fis$ with respect to $\subseteq$ are also postfixpoints with respect to $\subh$ and prefixpoints with respect to $\subs$.
  Hence the final claim follows by (2) and (3).
\end{proof}

To show that the inner deterministic multirelations are precisely the fixpoints of $\fis$, it remains to check, using parts (1) and (4) above, that inner deterministic multirelations are postfixpoints of $\fis$ with respect to $\subseteq$.
Indeed, if $R$ is inner deterministic, then $R \subseteq \iuatoms = U 1$.
Hence
\begin{equation*}
  R = R \cap U 1 \subseteq R 1^\converse 1 \subseteq R {\ni} 1 = \fis(R)
\end{equation*}
using $1 \subseteq {\in}$.

The following results revisit previous closure results in the context of total multirelations.

\begin{lemma}
  \label{lemma.peleg-preserves-inner-univalent}
  Inner and outer total multirelations are closed under Peleg composition.
\end{lemma}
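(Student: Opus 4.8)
The plan is to prove the two closure properties separately, since outer totality concerns coverage of the domain whereas inner totality concerns non-emptiness of images, so different tools apply; throughout let $R : X \rto \Pow Y$ and $S : Y \rto \Pow Z$ be composable and recall that $R \seq S = R S_\seq$. For \emph{inner} totality I would use the terminal/non-terminal decomposition: a multirelation is inner total exactly when it has no terminal pairs, i.e.\ $R$ is inner total iff $\tau(R) = \emptyset$, since $\tau(R) = R \cap \iuone$. The key ingredient is Lemma~\ref{lemma.nu-tau-props}(5), namely $\tau(R \seq S) = \tau(R) \cup \nu(R) \seq \tau(S)$. Assuming $R$ and $S$ inner total, both $\tau(R)$ and $\tau(S)$ vanish, so this reduces to $\tau(R \seq S) = \nu(R) \seq \emptyset$. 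Since $\nu(R) \seq \emptyset = \tau(\nu(R)) = \nu(R) \cap \iuone = \emptyset$ (as $\nu(R) = R - \iuone$ is inner total by construction), I conclude $\tau(R \seq S) = \emptyset$, that is, $R \seq S$ is inner total. This step is essentially immediate once Lemma~\ref{lemma.nu-tau-props} is available.

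For \emph{outer} totality I would reduce to the Peleg lifting. Using the standard relational law $\dom(AB) = \dom(A\,\dom(B))$, I get $\dom(R \seq S) = \dom(R S_\seq) = \dom(R\,\dom(S_\seq))$, so it suffices to show that $S_\seq$ is outer total whenever $S$ is, i.e.\ $\dom(S_\seq) = \Id_{\Pow Y}$; the statement then follows from $\dom(R \seq S) = \dom(R) = \Id_X$. To establish $\dom(S_\seq) = \Id_{\Pow Y}$ I would unfold the definition $S_\seq = \dom(S)_\seq \bigcup_{T \subseteq_d S} T_\kleisli$: the factor $\dom(S)_\seq$ is the sub-identity on $\Pow Y$ selecting those $A$ with $A \subseteq \dom(S)$, and outer totality of $S$, i.e.\ $\dom(S) = \Id_Y$, collapses it to the full identity $\Id_{\Pow Y}$. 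Each Kleisli lift $T_\kleisli$ is a total function, so $S_\seq = \bigcup_{T \subseteq_d S} T_\kleisli$ is total. Equivalently and more concretely, $A \in \dom(S_\seq)$ holds iff $A \subseteq \dom(S)$, which is automatic under outer totality.

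I expect the outer case to be the main obstacle, and specifically the handling of the domain-restriction factor $\dom(S)_\seq$ inside the Peleg lifting: one must check carefully that it imposes no proper restriction once $S$ is outer total, and, dually, that empty images create no difficulty, since even a pair $(a,\emptyset) \in R$ yields $(a,\emptyset) \in R \seq S$ so that no element of $X$ is lost from the domain. The inner case, by contrast, is routine given the $\nu$/$\tau$ calculus of Lemma~\ref{lemma.nu-tau-props}.
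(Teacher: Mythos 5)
Your proof is correct, and it is worth recording how it relates to the one in the paper. For \emph{outer} totality you and the paper take essentially the same route: both reduce the claim to totality of the lifting $S_\seq$ using locality of the domain (the paper phrases this as $(R \seq S)(R \seq S)^\converse = R S_\seq S_\seq^\converse R^\converse \supseteq R R^\converse \supseteq \Id$, you phrase it as $\dom(R \seq S) = \dom(R\,\dom(S_\seq))$), and both use that $\dom(S)_\seq$ collapses to $\Id_{\Pow Y}$ when $\dom(S) = \Id_Y$. The paper then verifies $S_\seq S_\seq^\converse \supseteq \Id$ by an explicit symmetric-quotient computation with $\syq{{\in} Q {\in}}{\in}$, whereas you simply observe that each $T_\kleisli = \Pow(\alpha(T))$ is deterministic, hence total, so a (non-empty) union of them is total; your version is shorter and avoids the quotient algebra, at the small price of implicitly using that the index set $\{T \mid T \subseteq_d S\}$ is non-empty, which holds for every $S$. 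For \emph{inner} totality your argument is genuinely different and considerably slicker: the paper proves $-\iuone \seq -\iuone \subseteq -\iuone$ by unfolding the Peleg lifting and running a ten-line residual/symmetric-quotient calculation to show $-\iuone\, Q_\kleisli \subseteq -\iuone$ for each $Q \subseteq_d -\iuone$, whereas you exploit the identity $\tau(R \seq S) = \tau(R) \cup \nu(R) \seq \tau(S)$ of Lemma~\ref{lemma.nu-tau-props}(5), which the paper states earlier (imported from the cited prior work) but does not use here. Since inner totality of $T$ is exactly $\tau(T) = \emptyset$, and $\nu(R) \seq \emptyset = \tau(\nu(R)) = \nu(R) \cap \iuone = \emptyset$, the closure is immediate. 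There is no circularity in invoking that lemma, so your inner-totality proof is a legitimate and more conceptual alternative; what the paper's longer calculation buys is self-containedness within the relation-algebraic language, which matters for its Isabelle formalisation but not for correctness.
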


\begin{proof}
  Let $R$ and $S$ be total.
  Then
  \begin{equation*}
    S_\seq = \dom(S)_\seq \bigcup_{Q \subseteq_d R} Q_\kleisli = 1_\seq \bigcup_{Q \subseteq_d R} Q_\kleisli = \bigcup_{Q \subseteq_d R} (\syq{{\in} Q {\in}}{\in}).
  \end{equation*}
  Hence
  \begin{equation*}
    S_\seq S_\seq^\converse = \bigcup_{P, Q \subseteq_d R} (\syq{{\in} P {\in}}{\in}) (\syq{\in}{{\in} Q {\in}}) = \bigcup_{P, Q \subseteq_d R} (\syq{{\in} P {\in}}{{\in} Q {\in}}) \supseteq \bigcup_{P \subseteq_d R} (\syq{{\in} P {\in}}{{\in} P {\in}}) \supseteq \Id.
  \end{equation*}
  Thus $(R \seq S) (R \seq S)^\converse = R S_\seq S_\seq^\converse R^\converse \supseteq R R^\converse \supseteq \Id$.

  Let $R$ and $S$ be inner total, that is, $R \subseteq -\iuone$ and $S \subseteq -\iuone$.
  Since Peleg composition preserves $\subseteq$ it suffices to show $-\iuone \seq -\iuone \subseteq -\iuone$.
  We have
  \begin{equation*}
    -\iuone \seq -\iuone = -\iuone (-\iuone)_\seq = -\iuone \dom(-\iuone)_\seq \bigcup_{Q \subseteq_d -\iuone} Q_\kleisli = -\iuone \bigcup_{Q \subseteq_d -\iuone} Q_\kleisli = \bigcup_{Q \subseteq_d -\iuone} -\iuone Q_\kleisli.
  \end{equation*}
  Hence it remains to show $-\iuone Q_\kleisli \subseteq -\iuone$ for any $Q \subseteq_d -\iuone$.
  The latter condition means $Q$ is univalent, total and inner total.
  The remaining goal is equivalent to $\iuone (Q_\kleisli)^\converse \subseteq \iuone$.
  This follows by
  \begin{align*}
        \iuone (Q_\kleisli)^\converse
    & = \iuone \Lambda({\ni} Q {\ni})^\converse \\
    & = \iuone (\syq{\in}{{\in} Q^\converse {\in}}) \\
    & = (\syq{{\in} \iuone^\converse}{{\in} Q^\converse {\in}}) \\
    & = (\syq{\emptyset}{{\in} Q^\converse {\in}}) \\
    & = \emptyset / {{\ni} Q {\ni}} \\
    & = -(U {\in} Q^\converse {\in}) \\
    & = -(U (-\iuone Q^\converse {\in})) \\
    & \subseteq -(U Q Q^\converse {\in}) \\
    & \subseteq -(U {\in}) \\
    & = \iuone
  \end{align*}
  using that $Q$ is inner total and total.
\end{proof}

Hence closure of inner deterministic multirelations under Peleg composition (Proposition~\ref{prop.det-cat}) also follows by combining Proposition~\ref{proposition.peleg-rel-inner-univalent} and Lemma~\ref{lemma.peleg-preserves-inner-univalent}.
However, composition of total or inner total multirelations need not be associative.
\begin{example}
  Let $X = \{a,b,c\}$ and $R, S : X \rto \Pow X$ with $R = \{ (a,\{a,b\}), (b,\{a\}), (c,\{c\}) \}$ and $S = \{ (a,\{a,b\}), (b,\{a,c\}), (c,\{c\}) \}$.
  Then $(a,\{a,b,c\}) \in R \seq (R \seq S) - (R \seq R) \seq S$.
  Thus outer and inner total multirelations do not form categories.
\end{example}

We conclude with preservation properties for outer total multirelations.

\begin{lemma}
  \label{lemma.alpha-det-prop2}
  Let $R : X \rto \Pow Y$ and $S : Y \rto \Pow Z$ be outer total.
  Then
  \begin{enumerate}
  \item $\alpha(R \seq S) = \alpha(R) \alpha(S)$,
  \item $\fis(R \seq S) = \fis(R) \seq \fis(S)$ and $\fus(R \seq S) = \fus(R) \seq \fus(S)$.
  \end{enumerate}
\end{lemma}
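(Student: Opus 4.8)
The plan is to reduce part~(2) to part~(1) exactly as in the proof of Lemma~\ref{lemma.peleg-alpha-tau}(3): once $\alpha(R \seq S) = \alpha(R)\alpha(S)$ is established, applying $\eta$ and $\Lambda$ and invoking Lemma~\ref{lemma.det-prop}(2) and~(1) gives $\fis(R\seq S) = \eta(\alpha(R)\alpha(S)) = \eta(\alpha(R)) \seq \eta(\alpha(S)) = \fis(R)\seq\fis(S)$ and likewise $\fus(R\seq S) = \Lambda(\alpha(R)\alpha(S)) = \fus(R)\seq\fus(S)$, using $\fis = \eta\circ\alpha$ and $\fus = \Lambda\circ\alpha$. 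So the whole content sits in part~(1), and there the inclusion $\alpha(R \seq S) \subseteq \alpha(R)\alpha(S)$ is already Lemma~\ref{lemma.alpha-props}(2) with no hypotheses; only the reverse inclusion $\alpha(R)\alpha(S) \subseteq \alpha(R \seq S)$ needs totality.

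For the reverse inclusion I would argue relationally. Since $R \seq S = R S_\seq$ and $\alpha = (-){\ni}$, we have $\alpha(R \seq S) = R\,\alpha(S_\seq)$, and Lemma~\ref{lemma.alpha-props}(1) rewrites $\alpha(S_\seq) = \alpha(\dom(S)_\seq)\alpha(S)$. The inclusion in Lemma~\ref{lemma.alpha-props}(2) comes precisely from $\alpha(\dom(S)_\seq) \subseteq {\ni}_Y$, so it suffices to upgrade this to an equality under totality. If $S$ is outer total then $\dom(S) = \Id_Y$, hence $\dom(S)_\seq = (1_Y)_\seq = (1_Y)_\kleisli = \Pow(\alpha(1_Y)) = \Pow(\Id_Y) = \Id_{\Pow Y}$ by Lemma~\ref{lemma.det-circ}(2) together with $\alpha(1_Y) = \Id_Y$, so that $\alpha(\dom(S)_\seq) = \dom(S)_\seq\,{\ni}_Y = {\ni}_Y$. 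Substituting back yields $\alpha(R \seq S) = R\,{\ni}_Y\,\alpha(S) = \alpha(R)\alpha(S)$. Notably this step uses only totality of $S$; the hypothesis on $R$ is not actually needed for part~(1), which is worth remarking.

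The main obstacle is justifying $\dom(S)_\seq = \Id_{\Pow Y}$ for outer total $S$, that is, making the meaning of the domain factor in the Peleg lifting precise and checking that it collapses to the identity lifting once $\dom(S) = \Id_Y$. As a backup, the reverse inclusion also follows by a direct set-theoretic argument: a witness for $(a,c) \in \alpha(R)\alpha(S)$ supplies some $B$ with $(a,B)\in R$ and $b \in B$, and some $D$ with $(b,D)\in S$ and $c \in D$; to exhibit $(a,c)\in\alpha(R\seq S)$ one must assemble these into a single choice function $f\colon Y \to \Pow Z$ with $f|_B \subseteq S$ and $c \in \bigcup f(B)$, and outer totality of $S$ is exactly what guarantees that $f$ can be defined compatibly with $S$ at every point of $B$ (the offending point being covered by setting $f(b)=D$). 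Either route then feeds part~(1) into the reduction above to deliver part~(2).
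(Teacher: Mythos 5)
Your proposal is correct and takes essentially the same route as the paper: the paper also proves (1) by observing that outer totality of $S$ gives $\dom(S) = \Id$, so that the single inclusion step in the proof of Lemma~\ref{lemma.alpha-props}(2) --- which stems from $\alpha(\dom(S)_\seq) \subseteq {\ni}$ --- becomes an equality, and then obtains (2) immediately from (1) and Lemma~\ref{lemma.det-prop}. Your side remark that only totality of $S$ is actually used in part (1) is likewise implicit in the paper's proof, which never invokes totality of $R$.
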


\begin{proof}
  For (1), if $S$ is outer total, then $\dom(S) = \Id$ and $\alpha(S_\seq) = \alpha(1) \alpha(S)$ by Lemma~\ref{lemma.alpha-props}(1).
  The inclusion step in the proof of Lemma~\ref{lemma.alpha-props}(2) then becomes an equality, which shows the claim.
  Item (2) is then immediate from (1) and Lemma~\ref{lemma.det-prop}.
\end{proof}

\begin{example}
  Part (1) of Lemma~\ref{lemma.alpha-det-prop2} does not hold for outer univalent multirelations: we have $\alpha(1 \seq \emptyset) = \alpha(1) = \Id$ but $\alpha(1) \alpha(\emptyset) = \Id \, \emptyset = \emptyset$.
  For (2) consider $X = \{a,b\}$ and $R, S : X \rto \Pow X$ with $R = \{ (a,\{a,b\}) \}$ and $S = \{ (a,\{a\}) \}$.
  Then $\fis(R \seq S) = \fis(\emptyset) = \emptyset$ but
  \begin{equation*}
    \fis(R) \seq \fis(S) = \{ (a,\{a\}), (a,\{b\}) \} \seq S = \{ (a,\{a\}) \}.
  \end{equation*}
  Moreover
  $\fus(R \seq S) = \fus(\emptyset) = \{ (a,\emptyset), (b,\emptyset) \}$ but
  \begin{equation*}
    \fus(R) \seq \fus(S) = (R \cup \{ (b,\emptyset) \}) \seq (S \cup \{ (b,\emptyset) \}) = \{ (a,\{a\}), (b,\emptyset) \}.
  \end{equation*}
\end{example}


\section{Co-Determinisation and Further Isomorphisms}
\label{subsection.cofusion}

We discuss two additional operations called \emph{co-fusion} and \emph{co-fission} of multirelations.
They are obtained via the inner isomorphism:
\begin{equation*}
  \cofusion{R} = \icpl{\fus(\icpl{R})} \qquad
  \text{and} \qquad
  \cofission{R} = \icpl{\fis(\icpl{R})}.
\end{equation*}
It follows that
\begin{gather*}
  \cofusion{R} = \left\{ (a,B) \mid B = \bigcap R(a) \right\} \qquad
  \text{and} \qquad
  \cofission{R} = \up{R} \cap \iiatoms.
\end{gather*}
Other properties of co-fusion and co-fission follow immediately from the inner isomorphism.
For instance,
\begin{equation*}
  \cofusion{R} = -\down{(-\up{R} \cap \iiatoms)} \cap -\up{(\icpl{\up{R}} \cap \iuatoms)} = -\down{(-\up{R} \cap \iiatoms)} \cap -\up{((\up{R} \cap \iiatoms) \seqint \icpl{1})}
\end{equation*}
and there is a Galois connection defined with respect to $\subs$.
See our Isabelle theories for details.

In Section~\ref{subsection.fusion} we have used isomorphisms to represent $\alpha(R)$ as (inner) deterministic multirelations $\fus(R)$ and $\fis(R)$.
We now discuss further isomorphic representations.
Since deterministic multirelations are isomorphic to their down-closures, we obtain the down-closed representation $\delta_\downarrow(R) = \down{\fus(R)}$.
Application of fusion takes us back according to $\fus(R) = \fus(\delta_\downarrow(R))$.
Moreover, fission is now obtained by $\fis(R) = \delta_\downarrow(R) \cap \iuatoms$.
Hence the down-closed representation contains both the fusion and the fission.
Finally, $\down{(R \seq S)} = R \seq \down{S} = R \seq (\iuone \cup \down{S}) = \down{R} \seq \down{S}$ for deterministic $S$ and $\down{(-)}$ preserves arbitrary inner unions of deterministic multirelations.
Hence we obtain an isomorphism between the quantaloid of deterministic multirelations with $\iU$ and the quantaloid of down-closures of deterministic multirelations with $\iU$.

Deterministic multirelations are also isomorphic to their up-closures, so we obtain the up-closed representation $\delta_\uparrow(R) = \up{\fus(R)}$.
To get back we apply co-fusion: $\fus(R) = \cofusion{\delta_\uparrow(R)}$.
Furthermore, co-fission is obtained by $\cofission{R} = \delta_\uparrow(R) \cap \iiatoms$.
Hence the up-closed representation contains both the fusion and the co-fission.
By~\cite[Example 4.9]{FurusawaGuttmannStruth2023a}, up-closure does not distribute over Peleg composition for deterministic multirelations, so there is no quantaloid isomorphism in this case.
Nevertheless, since $\up{(-)}$ preserves arbitrary inner unions of deterministic multirelations, we obtain complete lattice isomorphisms between deterministic multirelations and their up-closures.

The range of $\cofusion{-}$ equals that of $\fus$, namely the deterministic multirelations.
Hence similar results are obtained by starting from $\cofusion{R}$ instead of $\fus(R)$ and considering down-/up-closed representations.
From the range of $\fis$ it is also possible to apply $\down{}$ or $\up{}$ and then go back by $\fis$ or intersection with $\iuatoms$.
A similar construction applies to the range of $\cofission{-}$.
Note that $\up{(R \seq S)} = \up{R} \seq \up{S}$ for inner-deterministic $R$~\cite[Lemma 4.8]{FurusawaGuttmannStruth2023a} and $\up{(-)}$ preserves arbitrary unions of inner deterministic multirelations~\cite[Lemma 4.3]{FurusawaGuttmannStruth2023a}.
Hence we obtain an isomorphism between the quantaloid of inner deterministic multirelations with $\bigcup$ and the quantaloid of up-closures of inner deterministic multirelations with $\bigcup$.
Example 4.10 in~\cite{FurusawaGuttmannStruth2023a} rules out a corresponding result for the down-closure of inner deterministic multirelations.
However, since $\down{(-)}$ preserves arbitrary unions of inner deterministic multirelations, at least we obtain complete lattice isomorphisms between inner deterministic multirelations and their down-closures.


\section{Conclusion}

We have studied the inner structure of multirelations through the categories of outer and inner univalent and deterministic multirelations and determinisation maps in a multirelational language that combines features of relation algebra and power allegories with multirelational concepts.
Our results add to previous work on the various inner and outer operations on multirelations~\cite{FurusawaGuttmannStruth2023a}, but shift the focus towards univalence and determinism and from a relation-algebraic language to one based on power allegories.

While this multirelational language of has so far been based on concrete relations and multirelations, an axiomatic extension of the abstract allegorical approach, which equips boolean power allegories with multirelational operations, is its most natural continuation.
This would extend Bird and de Moor's algebra of programming \cite{BirdMoor1997} to alternating nondeterminism.
The characterisation of intuitionistic variants of power allegories based on locally complete allegories is another interesting question, or the consideration of categories of multirelations in arbitrary Grothendieck topoi.

Last but not least, the approximation maps $\alpha$, $\fus$ and $\fis$ are important for defining modal operators on multirelations, which arise in concurrent dynamic logic following Peleg~\cite{Peleg1987} and Nerode and Wijesekera~\cite{NerodeWijesekera1990}.
In fact, an algebraic formalisation of such operators in our multirelational language has been a starting point of this line of work.
This is explored in the third part of this trilogy~\cite{FurusawaGuttmannStruth2023c} and yields the main application of our results so far.

\paragraph{Acknowledgement}
Hitoshi Furusawa and Walter Guttmann thank the Japan Society for the Promotion of Science for supporting part of this research through a JSPS Invitational Fellowship for Research in Japan.
Georg Struth would like to thank Yasuo Kawahara for discussions on the allegorical approach to binary relations and his hospitality during several research visits at Kyushu University.


\bibliographystyle{alpha}
\bibliography{multirel}

\appendix

\section{Basis}
\label{section.basis}

As in~\cite{FurusawaGuttmannStruth2023a}, almost every operation in this article can be defined in terms of a basis of 6 operations that mix the relational and the multirelational language: the relational operations $-$, $\cap$, $/$ and the multirelational operations $1$, $\iu$, $\seq$.
Here we extend the list from~\cite{FurusawaGuttmannStruth2023a} with definitions of the operations from power allegories.

\begin{multicols}{3}
\begin{itemize}
\item $R \cup S = -(-R \cap -S)$
\item $R - S = R \cap -S$
\item $\emptyset = R \cap -R$
\item $U = -\emptyset$
\item $\up{R} = R \iu U$
\item ${\in} = \up{1}$
\item $\Id = 1 / 1$
\item $R^\converse = -(-\Id / R)$
\item $S R = -(-S / R^\converse)$
\item ${\ni} = {\in}^\converse$
\item $R \backslash S = (S^\converse / R^\converse)^\converse$
\item $\syq{R}{S} = (R \backslash S) \cap (R^\converse / S^\converse)$
\item $\Lambda(R) = \syq{R^\converse}{\in}$
\item $\Pow(R) = \Lambda({\ni} R)$
\item $R_\kleisli = \Pow(R {\ni})$
\item $\mu = \Id_\kleisli$
\item $\Omega = {\in} \backslash {\in}$
\item $C = \syq{\in}{-\in}$
\item $\icpl{R} = R C$
\item $R \ii S = \icpl{(\icpl{R} \iu \icpl{S})}$
\item $\down{R} = X \ii U$
\item $\convex{R} = \up{R} \cap \down{R}$
\item $\iuone = 1 \ii \icpl{1}$
\item $\iione = \icpl{\iuone}$
\item $\dual{R} = -\icpl{R}$
\item $R \seqint S = \icpl{(R \seq \icpl{S})}$
\item $R_\seq = (\Lambda({\ni} 1) \seq 1^\converse R 1) \mu$
\item $\iuatoms = U 1$
\item $\iiatoms = \icpl{\iuatoms}$
\item $\nu(R) = R - \iuone$
\item $\tau(R) = R \cap \iuone$
\item $\alpha(R) = R {\ni}$
\item $\fis(R) = \down{R} \cap \iuatoms$
\item $\fus(R) = 1 R_\kleisli$
\item $\cofission{R} = \up{R} \cap \iiatoms$
\item $\cofusion{R} = \icpl{\fus(\icpl{R})}$
\item $\dom(R) = \Id \cap R R^\converse$
\item $R \subs S \Leftrightarrow S \subseteq \up{R}$
\item $R \subh S \Leftrightarrow R \subseteq \down{S}$
\item $R \subem S \Leftrightarrow R \subh S \wedge R \subs S$
\end{itemize}
\end{multicols}

We could replace relational intersection $\cap$ with a multirelational intersection variant $\cap$ in the basis: relational $\cap$ is obtained by $R \cap S = \alpha(R 1 \cap S 1)$ which can be defined in terms of multirelational $\cap$ and the rest of the basis.
Yet we do not know whether a multirelational $-$ could replace the relational variant.
See the comments on the list in~\cite{FurusawaGuttmannStruth2023a} for further information.

\end{document}